\renewcommand{\phi}{\varphi}
\renewcommand{\epsilon}{\varepsilon}
\newcommand{\nat}{\mathbb{N}}
\newcommand{\ZZ}{\mathbb{Z}}
\newcommand{\QQ}{\mathbb{Q}}
\newcommand{\defining}[1]{\emph{#1}}
\newcommand{\iso}{\cong}
\newcommand{\bigdisunion}{\biguplus}
\DeclarePairedDelimiter\set{\lbrace}{\rbrace}
\DeclarePairedDelimiterX\setcond[2]{\{}{\}}{\mathchoice{\,}{}{}{}#1 \;\delimsize\vert\; #2\mathchoice{\,}{}{}{}}
\newcommand{\tup}[1]{\bar{#1}}
\newcommand{\vertA}{u}
\newcommand{\vertB}{v}
\DeclareMathSymbol{\shortminus}{\mathbin}{AMSa}{"39}
\newcommand{\inv}[1]{#1^{\shortminus 1}}
\newcommand{\CFIsym}{\mathbf{G}}
\newcommand{\CFIA}[2]{\CFIsym_{#1}^{#2}}
\newcommand{\CFIB}[2]{\widetilde{\CFIsym}_{#1}^{#2}}
\newcommand{\sig}{\tau}
\newcommand{\arity}[1]{\operatorname*{ar}(#1)}
\newcommand{\StructA}{\mathbf{A}}
\newcommand{\StructB}{\mathbf{B}}
\newcommand{\StructC}{\mathbf{C}}
\newcommand{\StructL}{\mathbf{L}}
\newcommand{\CSP}[1]{\mathrm{CSP}(#1)}
\newcommand{\kcol}[3]{\kappa_{#1}^{#2}[#3]}
\newcommand{\restrict}[2]{#1|_{#2}}
\newcommand{\Hom}[2]{\mathrm{Hom}(#1,#2)}
\newcommand{\leqs}{\mathsf{L}}
\newcommand{\cspiso}[3]{\leqs^{#1,#2}_{\mathsf{CSP}}(#3)}
\newcommand{\zafkleq}[4]{\leqs^{#1,#2}_{\ZZ\mathsf{\text{-}aff}}(#3,#4)}
\newcommand{\ipk}[3]{\leqs^{#1,#2}_{\mathsf{IP}} (#3)}
\newcommand{\blk}[3]{\leqs^{#1,#2}_{\mathsf{BLP}} (#3)}
\newcommand{\aipk}[3]{\leqs^{#1,#2}_{\mathsf{AIP}} (#3)}
\newcommand{\blp}[2]{\leqs^#1_{\mathsf{BLP}} (#2)}
\newcommand{\bbN}{\mathbb{N}}
\newcommand{\bbZ}{\mathbb{Z}}
\newcommand{\Hh}{\mathcal{H}}
\newcommand{\autgrp}[1]{\operatorname*{Aut}(#1)}
\newcommand{\isos}[2]{\operatorname*{Iso}(#1,#2)}
\newcommand{\bcisosys}[2]{\mathbf{BI}(#1;#2)}
\newcommand{\colors}{\mathfrak{C}}
\newcommand{\Var}[1]{\operatorname*{Var}(#1)}
\newcommand{\CosetGrpTmplt}[2]{#1^{[#2]}}
\newcommand{\Sym}[1]{S_{#1}}
\newcommand{\SymStruct}[2]{\CosetGrpTmplt{\Sym{#1}}{#2}}
\newcommand{\NP}{\mathrm{NP}}
\newcommand{\Ptime}{\mathrm{P}}
\newcommand{\ORparam}[1]{\mathbf{OR}[#1]}
\newcommand{\ORT}[1]{\mathbf{OR}_\text{T}[#1]}
\newcommand{\ORNPC}[1]{\mathbf{OR}_\text{NPC}[#1]}
\newcommand{\OR}[1]{\mathbf{OR}[#1]}
\newcommand{\onestruc}[1]{\mathbf{1}_{#1}}
\newcommand{\ORISO}[2]{\mathbf{OR}^{\text{\upshape ISO}}_{#1}[#2]}
\newcommand{\CLAP}[1]{\mathsf{CLAP}(#1)}
\newcommand{\CLAPw}[1]{\mathsf{CLAP'}(#1)}
\colorlet{boxtitle}{lightgray!50!white}
\newtheorem{theorem}{Theorem}[section]
\newtheorem{lemma}[theorem]{Lemma}
\newtheorem{corollary}[theorem]{Corollary}
\newtheorem{definition}[theorem]{Definition}
\newtheorem{conjecture}[theorem]{Conjecture}
\theoremstyle{plain}
\newtheorem{claim}{Claim}
\newenvironment{claimproof}[1][\proofname]{\begin{claimprooftemp}[#1]}{\end{claimprooftemp}}
\newcommand\blfootnote[1]{%
	\begin{NoHyper}%
	\begingroup
	\renewcommand\thefootnote{}\footnote{#1}%
	\addtocounter{footnote}{-1}%
	\endgroup
\end{NoHyper}%
}
\newcommand{\orcid}[1]{\href{https://orcid.org/#1}{\includegraphics[height=1.8ex]{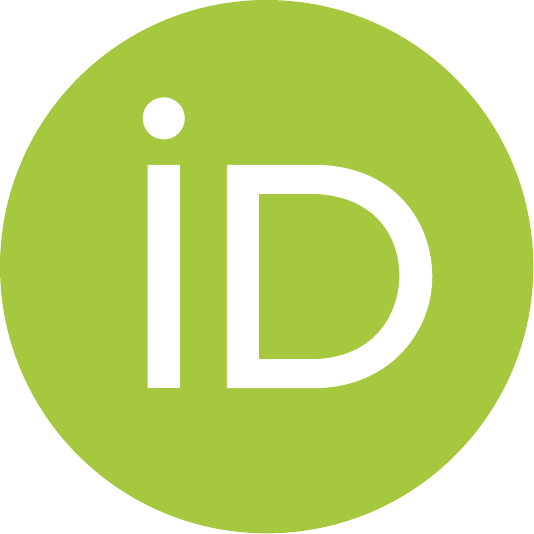}}}
\title{Limitations of Affine Integer Relaxations for Solving Constraint Satisfaction Problems
	}
\author{
	Moritz Lichter \orcid{0000-0001-5437-8074}\\ \small RWTH Aachen University, Germany	\and 
	Benedikt Pago \orcid{0000-0001-6377-1230}\\ \small University of Cambridge, UK
}
\date{}
\begin{document}

	\maketitle

	\begin{abstract}
		\noindent We show that various recent algorithms for finite-domain constraint satisfaction problems (CSP), which are based on solving their affine integer relaxations, do not solve all tractable and not even all Maltsev CSPs. This rules them out as candidates for a universal polynomial-time CSP algorithm.
		The algorithms are $\mathbb{Z}$-affine $k$-consistency, BLP+AIP, BA$^{k}$, and CLAP. We thereby answer a question by Brakensiek, Guruswami, Wrochna, and Živný~\cite{BrakensiekGWZ2020} whether a constant level of BA$^{k}$solves all tractable CSPs in the negative: Indeed, not even a sublinear level $k$ suffices.
		We also refute a conjecture by Dalmau and Opr\v{s}al \cite{DalmauOprsal2024} (LICS 2024) that every CSP is either solved by $\mathbb{Z}$-affine $k$-consistency or admits a Datalog reduction from 3-colorability.
		For the cohomological $k$-consistency algorithm, that is also based on  affine relaxations, we show that it correctly solves our counterexample but fails on an NP-complete template. 
		\blfootnote{The first author received funding from the European Research Council (ERC)
			under the European Union’s Horizon 2020 research and innovation programme (SymSim: grant
			agreement No. 101054974). Views and opinions expressed are however those of the author(s) only and
			do not necessarily reflect those of the European Union or the European Research Council. Neither
			the European Union nor the granting authority can be held responsible for them.\\
			The second author was funded by UK Research and Innovation (UKRI) under the UK government’s Horizon
			Europe funding guarantee: grant number EP/X028259/1}
	\end{abstract}

	\section{Introduction}
	Constraint satisfaction problems (CSPs) provide a general framework that encompasses a huge variety of different problems, from solving systems of linear equations over Boolean satisfiability to variants of the graph isomorphism problem.
	We view CSPs as homomorphism problems. A~CSP is defined by a relational structure~$\StructA$ called the \emph{template} of the CSP.
	An \emph{instance} is a structure~$\StructB$ of matching vocabulary and the question is whether there is a homomorphism from~$\StructB$ to~$\StructA$.
	We only consider \emph{finite-domain} CSPs, i.e., the template~$\StructA$ is always finite.
	It had long been conjectured by Feder and Vardi \cite{feder1993monotone} that every finite-domain CSP is NP-complete or in P.
	In 2017, the conjecture was confirmed independently by Bulatov \cite{bulatov} and Zhuk~\cite{zhuk}.
	The complexity of a CSP is determined by the polymorphisms (``higher-dimensional symmetries'') of its template.
	If the template has no a so-called  \emph{weak near-unanimity} polymorphism, then the corresponding CSP is NP-complete.
	For the other case,
	Bulatov and Zhuk presented sophisticated polynomial-time algorithms.
	A less involved algorithm had been known earlier for templates with a \emph{Maltsev} polymorphism~\cite{BulatovDalmau2006}.
	None of these algorithms is \emph{universal} in the sense that on input $(\StructB, \StructA)$ they decide whether~$\StructB$ maps homomorphically into $\StructA$ in time polynomial in both $|\StructB|$ and $|\StructA|$ \footnote{The cardinalities of the structures also count the number of tuples in the relations, not just the universe size.}. Instead, these are \emph{families} of algorithms, one for each template. 
	The question whether there is a universal, and ideally ``simple'', algorithm for all tractable CSPs, or even just for all Maltsev CSPs, is still open.  
	
	One natural approach towards universal algorithms is via affine relaxations 
	of systems of linear equations over $\set{0,1}$: Given a template~$\StructA$, an instance~$\StructB$, and possibly a width parameter $k$, the existence of a homomorphism $\StructB \to \StructA$ is encoded
	into a system of linear equations. 
	 If the domain of the variables is relaxed from $\{0,1\}$ to~$\bbZ$, the system can be solved in polynomial time \cite{BachemRavindran, Schrijver1986}, and the transformation of the CSP into the equation system is also computationally easy. 
	Thus, if this integer relaxation were exact for all tractable CSPs, or at least an interesting subclass thereof, such as all Maltsev CSPs, then computing and solving it would constitute a universal polynomial-time algorithm for that class.
	Several algorithms based on this idea have been developed in recent years, motivated specifically by the study of \emph{Promise CSPs} \cite{BrakensiekGWZ2020, BrakensiekGuruswami19, CiardoZivny2023CLAP, DalmauOprsal2024}. This is a relatively new variant of CSPs which generalize for example the approximate graph coloring problem and are still not very well understood. 
	The algorithms can be applied just the same to classical CSPs, and not even for these, much is known about their power. 
	In the present paper, we prove strong limitations for all these algorithms and show that even for Maltsev CSPs, none of them is universal: We construct a template~$\StructA$ whose CSP is \emph{not solved} by these algorithms by providing instances~$\StructB$ that admit no homomorphism to~$\StructA$
	but which are accepted by the algorithms. 
	This also refutes a conjecture by Dalmau and Opr\v{s}al \cite{DalmauOprsal2024}, that we expand upon below.
	Our result is in stark contrast to the situation for \emph{valued CSPs}, an optimization version of CSP. For these, a surprisingly simple linear-algebraic algorithm solves all tractable cases optimally \cite{ThapperZivny}.
	
	Let us briefly introduce the algorithms that are addressed by our construction.
	All of them make use of (slightly) different systems of equations, which can all be reduced to the \emph{width-$k$ affine relaxation}.
	Given a template structure~$\StructA$, an instance $\StructB$, and a width $k \in \bbN$, the variables of the equation system are indexed with partial homomorphisms from induced size-$k$ substructures of $\StructB$ to $\StructA$. 
	A solution to the width-$k$ affine relaxation is thus an assignment of numerical values to partial homomorphisms. 
	The equations enforce a consistency condition, i.e., express that partial homomorphisms with overlapping domains receive values that fit together. 
	This is related to, but stronger than, the \emph{$k$-consistency} method: The $k$-consistency algorithm is a well-studied simple combinatorial procedure that checks for inconsistencies between local solutions and propagates these iteratively. 
	This solves the \emph{bounded width} CSPs (see e.g.\ \cite{feder1993monotone, barto2009constraint, barto2014}) but is not powerful enough to deal with \emph{all} tractable CSPs~\cite{AtseriasBulatovDalmau2007}. 
	The consistency conditions of the width\nobreakdash-$k$ affine relaxation are stronger in the sense that they enforce a \emph{global} notion of consistency rather than a local one.
	The algorithms that fail to solve our counterexample are the following:
	
	The \textbf{$\bbZ$-affine $k$-consistency algorithm}~\cite{DalmauOprsal2024} (Section~\ref{sec:zAffineConsistency}) runs the $k$-consistency procedure.
	All non-$k$-consistent partial homomorphisms are removed from the width-$k$ affine relaxation.
	The algorithm accepts the instance $\StructB$ if and only if this modified version of the width-$k$ affine relaxation has an integral solution.
	Dalmau and Opr\v{s}al~\cite{DalmauOprsal2024} conjectured that for all finite structures~$\StructA$, $\CSP{\StructA}$ is either Datalog$^\cup$-reducible to $\CSP{\bbZ}$ and thus solved by $\bbZ$-affine $k$-consistency for a fixed $k$, or 3-colorability is Datalog$^\cup$-reducible to~$\CSP{\StructA}$ (see Conjecture \ref{con:s3-or-Z}). Assuming P $\neq$ NP, the conjecture implies that every tractable finite-domain CSP is solved by $\bbZ$-affine $k$-consistency.
	
	The \textbf{BLP+AIP algorithm} by Brakensiek, Guruswami, Wrochna, and Živný~\cite{BrakensiekGWZ2020} (Section~\ref{sec:BLP}) first solves the width-$k$ affine relaxation over the non-negative rationals,
	where~$k$ is the arity of the template.
	Next, the integral width-$k$ affine relaxation is checked for a solution,
	but every variable is set to $0$ that is set to $0$ by every rational solution.
	The \textbf{BA$^k$-algorithm}  proposed by Ciardo and Živný~\cite{CiardoZivny2023BAk} (Section \ref{sec:BLP})
	generalizes BLP+AIP: The width $k$ is not fixed to be the arity of the template but is a parameter of the algorithm, like in $\bbZ$-affine $k$-consistency. In \cite{CiardoZivny2023BAk}, it is shown that there is an \emph{NP-complete} (promise) CSP on which the algorithm fails, but no tractable counterexample had been known until now.
	
	The \textbf{CLAP algorithm}, due to Ciardo and Živný~\cite{CiardoZivny2023CLAP} (Section \ref{sec:CLAP}),
	tests in the first step, for each partial homomorphism $f$,
	whether $f$ can receive weight exactly $1$ in a non-negative rational solution of the width-$k$ affine relaxation,
	where $k$ is the arity of the template.
	If not, it is discarded.
	This is repeated until the process stabilizes.
	Then the width-$k$ affine relaxation is solved over the integers,
	where all discarded partial homomorphisms are forced~to~$0$.

	\begin{theorem}
		\label{thm:mainResultInformal}
		There is a Maltsev template with $7$ elements that is neither solved by
		\begin{enumerate}
			\item $\bbZ$-affine $k$-consistency, for every $k \in o(n)$, where $n$ is the instance size,
			\item BLP+AIP,
			\item BA$^k$, for every $k \in o(n)$, nor
			\item  the CLAP algorithm.
		\end{enumerate}
		Hence, none of the algorithms solves all tractable CSPs.
	\end{theorem}	
	\noindent In particular, this answers a question of 
	Brakensiek, Guruswami, Wrochna, and Živný~\cite{BrakensiekGWZ2020} whether a constant level of the BA$^k$ hierarchy solves all tractable CSPs in the negative: Indeed, not even a sublinear level suffices.
	It also refutes the aforementioned Conjecture~\ref{con:s3-or-Z} regarding the power of the $\bbZ$\nobreakdash-affine $k$-consistency relaxation~\cite{DalmauOprsal2024}, under the assumption that P $\neq$ NP. 
	But we actually show a stronger statement: Namely, 3-colorability is not Datalog$^\cup$-reducible to the CSP that we use in the proof of the above theorem (Lemma~\ref{lem:not-datalog-reducible}). This is shown via a known inexpressibility result for \emph{rank logic}~\cite{GradelPakusa19} and disproves the conjecture unconditionally.
	
	To prove Theorem~\ref{thm:mainResultInformal}, in Sections \ref{sec:tseitin} and \ref{sec:power-of-affine}
	we construct and analyze instances. 
	Our template is a combination of systems of linear equations over the Abelian groups $\bbZ_2$ and $\bbZ_3$, but the~template itself is not a group.
	Since the affine algorithms reduce CSPs to a problem over the infinite Abelian group $(\bbZ,+)$, 
	we investigate for which finite groups this is possible:
	we study what we call \emph{group coset-CSPs} (to distinguish them from equation systems over groups).
	The template of a coset-CSP consists of a finite group $\Gamma$, and its relations are cosets of powers of subgroups of $\Gamma$.
	They always have a Maltsev polymorphism~\cite{BerkholzGrohe2015}.
	Coset-CSPs have been studied as ``group-CSPs'' by Berkholz and Grohe~\cite{BerkholzGrohe2015, BerkholzGrohe2017} or as ``subgroup-CSPs'' by Feder and Vardi~\cite{feder1993monotone}. 
	\begin{restatable}{theorem}{mainGroupCSPs}
		\label{thm:mainPowerOnGroupCSPs}	
		For each of the algorithms $\bbZ$-affine $k$-consistency, BLP+AIP, BA$^k$, and CLAP, the following is true:
		\begin{enumerate}
			\item Every coset-CSP over a finite Abelian group is solved (for $\bbZ$-affine $k$-consistency,~$k$ must be at least the arity of the template structure). \label{itm:powerOnGroupsSolveAbelian}
			\label{itm:mainPowerOnGroupCSPs-abelian}
			\item There exists a non-Abelian coset-CSP that is not solved, namely over $\Sym{18}$, the symmetric group on $18$ elements (for any constant or even sublinearly growing $k$).\label{itm:powerOnGroupsDontSolveNonAbelian}
			\item There are non-Abelian coset-CSPs that are solved, namely over any 2-nilpotent group of odd order. For example, there are non-Abelian $2$-nilpotent  semidirect products $\bbZ_{p^2} \rtimes \bbZ_p$ of order $p^3$ for each odd prime $p$. \label{itm:powerOnGroupsSolveSomeNonAbelian}
		\end{enumerate}
	\end{restatable}
	\noindent The detailed proof of this theorem is given in Section \ref{sec:groupStuff}.
	While Assertion~\ref{itm:mainPowerOnGroupCSPs-abelian} is easily derived from the literature~\cite{OConghaile22, BartoBKO2021}, it turns out somewhat surprisingly
	that Abelian groups are not the border of tractability for the affine algorithms:
	They also work over certain 2-nilpotent groups; these are in a sense the non-Abelian groups that are closest possible to being Abelian.
	Assertion~\ref{itm:powerOnGroupsDontSolveNonAbelian} is shown
	with a construction that is ``semantically equivalent''
	to the one that we use for Theorem \ref{thm:mainResultInformal}, but whose template is a coset-CSP.
	However, the analysis of the instances is technically much more involved.
	The construction in Theorem~\ref{thm:mainResultInformal} is simpler and yields a smaller template.
	We show that our first counterexample can be expressed as instances of the \emph{graph isomorphism} problem with \emph{bounded color class size},
	that is, the isomorphism problem of vertex-colored graphs, in which each color is only used for a constant number of vertices.
	This problem is expressible as a coset-CSP over the symmetric group \cite{BerkholzGrohe2017}.
	This also shows that the affine CSP algorithms cannot be adapted to solve the graph isomorphism problem. They fail already on the bounded color class version, which is known to be in P~\cite{FurstHopcroftLuks80}.
	
	There exists another highly interesting affine CSP algorithm that we have not addressed so far. This is the \emph{cohomological $k$\nobreakdash-consistency} algorithm due to Ó Conghaile \cite{OConghaile22} (see Section~\ref{sec:cohomology}). 
	As it turns out, this algorithm is actually able to solve our counterexample correctly.
	Hence, for all we know, it is possible that cohomological $k$-consistency is a universal polynomial-time algorithm for Maltsev or even all tractable CSPs, for a $k$ that suitably depends on the arity of the template. However, we can show \emph{without} complexity-theoretic assumptions that it fails on NP-complete CSPs, even if $k$ is a sublinearly growing function in the instance size.
	Recent work by Chan and Ng \cite{ChanNg} independently shows a similar result, but with a different technique: They prove that random instances of certain types of NP-complete templates are not solved by cohomological $k$-consistency, for every $k \in o(n)$. The difference between this and our lower bound is that we use a specifically designed template in our proof, while \cite{ChanNg} works generally for all templates satisfying two conditions called ``null-constraining'' and ``lax'', which are satisfied for example by hypergraph-colouring problems. In a preprint of this article, that the authors of \cite{ChanNg} refer to, we stated the lower bound only for constant $k \in \bbN$, but in the present version, we show that the same example is indeed hard for any sublinear $k$.
	\begin{theorem} 
		\label{thm:mainPowerOfCohomology}
		The CSP on which the algorithms in Theorem~\ref{thm:mainResultInformal} fail is solved by cohomological\\ $k$-consistency, for every $k \geq 4$.  
		There exists an NP-complete CSP that is not solved by cohomological $k$-consistency, for every $k \in o(n)$, where $n$ is the instance size.
	\end{theorem}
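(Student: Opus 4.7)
For Part~1, the plan is to trace through the construction of the counterexample template~$\StructA$ and its no-instances~$\StructB$ from Theorem~\ref{thm:mainResultInformal} in the setting of Ó~Conghaile's cohomological $k$-consistency algorithm. This algorithm first runs the usual $k$-consistency closure and then tests whether the resulting sheaf of partial homomorphisms admits a global section by computing a first cohomology group with coefficients in a finite Abelian group read off from the local structure. Since~$\StructA$ is a combination of systems of linear equations over $\bbZ_2$ and $\bbZ_3$, I would decompose the relevant cohomology primewise as $H^{1}(\StructB;\bbZ_2) \oplus H^{1}(\StructB;\bbZ_3)$ and show that unsatisfiability in either component yields a non-trivial \v{C}ech cocycle on the cover of $k$-consistent partial homomorphisms. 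At $k=4$ the cover is fine enough to see every relation of the template (whose arity is at most~$4$), so the obstruction is detected and~$\StructB$ is correctly rejected, and monotonicity of the algorithm in~$k$ then gives the result for all $k \geq 4$.

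For Part~2, the plan is to construct a single NP-complete template~$\StructA'$ together with no-instances that pass cohomological $k$-consistency for every constant~$k$. The strategy is to combine an NP-hard combinatorial core (for instance, $3$-colorability or NAE-$3$-SAT) with an expander-based Tseitin-style wrapper engineered so that on the no-instances the local hom sheaf admits trivializing $0$-cochains with coefficients in every relevant finite Abelian group. Intuitively, cohomological $k$-consistency sees only ``Abelian'' obstructions via~$H^{1}$, so a genuinely NP-hard obstruction embedded in a template with trivialized Abelian cohomology should remain invisible to it. NP-hardness of $\CSP{\StructA'}$ would be shown by a polynomial reduction from the NP-hard core, and the soundness failure of the algorithm by producing, for each~$k$, an explicit trivialization of the cocycle that would otherwise witness the obstruction.

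The main obstacle for Part~1 is to pin down exactly which coefficient system the algorithm uses on the combined $\bbZ_2/\bbZ_3$ template (which is itself not a group) and to verify that the two primes do not interfere when one lifts a local $\bbZ_2$- or $\bbZ_3$-inconsistency to a non-trivial cohomology class on the global cover. For Part~2 the harder question is the construction itself: one needs a gadget in which the NP-hard obstruction is genuinely non-Abelian or higher-cohomological in nature, so that cohomological $k$-consistency uniformly fails to detect it. I expect the natural route is a CFI-style construction over a non-solvable group, in the spirit of the rank-logic inexpressibility arguments already invoked earlier in this paper to rule out a Datalog$^\cup$-reduction from 3-colorability, but adapted so that the resulting CSP template is NP-complete rather than tractable. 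The technically demanding step will be the uniform-in-$k$ cocycle trivialization, as this is precisely where cohomological $k$-consistency is strictly stronger than $k$-consistency and the naive arguments break down.
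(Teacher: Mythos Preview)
Your proposal has genuine gaps in both parts, stemming from a misreading of what the cohomological $k$-consistency algorithm actually does.

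\textbf{Part 1.} The algorithm does \emph{not} compute $H^{1}$ with coefficients in a finite Abelian group ``read off from the local structure''. It solves the integer affine relaxation $\zafkleq{k}{\StructA}{\StructB}{\Hh}$ with the additional constraints $x_{X,f}=1$ and $x_{X,f'}=0$ for all $f'\neq f$ in $\Hh(X)$; partial homomorphisms for which no such integral solution exists are discarded, and this is iterated. There is no primewise decomposition of a \v{C}ech group to carry out. The actual mechanism is Lemma~\ref{lem:integerSolutionWithLocalFixingSolvesBi}: in the \emph{tractable} or-construction $\ORT{\StructA_1,\StructA_2}$, fixing any $f\colon X\to A_i$ with $X\subseteq B_i$ (in particular, forcing the constant-$c_i$ map on $X$ to receive value $0$) collapses the relaxation so that any integral solution restricts to an integral solution of $\cspiso{k-1}{\StructA_i}{\StructB_i}$. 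Since each $\StructA_i$ is an Abelian coset-CSP, AIP is exact for it, so no such solution exists when $\StructB_i\notin\CSP{\StructA_i}$. Hence every $f$ with image in $A_i$ is discarded in the first iteration, leaving only the constant-$c_i$ maps on each side; the $S$-relation then kills these in the next $k$-consistency pass. Your cohomological-decomposition plan does not reach this argument and would not work as stated.

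\textbf{Part 2.} You are looking for non-Abelian or higher-cohomological obstructions and CFI constructions over non-solvable groups; none of this is needed, and it is unclear how you would make it work. The paper's construction is far simpler: keep the \emph{same} Tseitin systems over $\bbZ_2$ and $\bbZ_3$, but replace the tractable or-template by the \emph{intractable} one $\ORNPC{\StructA_1,\StructA_2}$. The crucial difference is Lemma~\ref{lem:hom-or-intractable-compose}: in the intractable template, a robustly consistent partial homomorphism $f_1$ of $\StructB_1\to\StructA_1$ and one $f_2$ of $\StructB_2\to\StructA_2$ can be \emph{combined} into a single partial homomorphism of the or-instance. Thus, when the algorithm fixes such a combined $f$, one can still produce a $p_1$-solution (via the $\StructB_1$ side) \emph{and} a $p_2$-solution (via the $\StructB_2$ side), both assigning value~$1$ to $x_{X,f}$; by Lemma~\ref{lem:p-q-solution-implies-integral} these yield an integral solution with $x_{X,f}=1$. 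So nothing is ever discarded and the algorithm accepts the no-instance. NP-completeness of $\CSP{\ORNPC{\StructA_1,\StructA_2}}$ is by a direct reduction from monotone $3$-SAT. Your proposed route via ``trivializing Abelian cohomology'' around an NP-hard core is both vaguer and harder than what is actually required.
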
	
	
	However, neither our techniques nor the ones from \cite{ChanNg} suggest an immediate route towards a tractable counterexample. 
	\begin{description}
		\item[Open question:] Is there a tractable finite template $\StructA$ such that cohomological $k$-consistency fails to solve $\CSP{\StructA}$ for every constant/sublinear $k$?
	\end{description}

	A very recent article by Zhuk \cite{zhuk2025singletonalgorithms} develops techniques that might be helpful to approach this question. It is concerned with various combinations of so-called ``singleton'' algorithms; these solve the linear-algebraic relaxations with fixed local solutions, which is precisely the feature of cohomological $k$-consistency that allows it to solve our counterexample. Hence, improving our understanding of the power of singleton algorithms and their hierarchies can possibly lead to stronger counterexamples for cohomological $k$-consistency.

	\paragraph*{Our Techniques.}
	Our proof of Theorem~\ref{thm:mainResultInformal}
	combines results due to Berkholz and Grohe \cite{BerkholzGrohe2017} with a new \emph{homomorphism or-construction} that encodes the disjunction of two CSPs.
	For a system of linear equations 
	to have an integral solution, it suffices to have a rational \emph{$p$-solution} and a rational \emph{$q$-solution} (for $p$ and $q$ coprime),
	in which all non-zero values are of the form $p^z$, with $z \in \bbZ$, or $q^z$, respectively.
	Thus, it suffices to design the instances in such a way that these two co-prime rational solutions exist. 
	For the algorithms that involve a width-parameter~$k$, the additional challenge is to make the construction robust so that it works against any choice of~$k$ (in our case it works even if~$k$ grows with the instance size).
	The \emph{Tseitin contradictions} \cite{Tseitin1983} over \emph{expander graphs} (see Section \ref{sec:tseitin}) achieve this robustness. 
	It is known that these cannot be solved by ``local'' algorithms, e.g., the $k$-consistency method, for any constant~$k$ \cite{AtseriasBulatovDalmau2007}.
	Berkholz and Grohe showed that
	the width-$k$ relaxation for unsatisfiable Tseitin contradictions over~$\ZZ_p$, for a prime~$p$, still has a $p$-solution.
	We combine two unsatisfiable Tseitin systems over~$\bbZ_2$ and~$\bbZ_3$ in the aforementioned homomorphism or-construction (Section \ref{sec:orConstruction}). This yields an unsatisfiable CSP instance whose width-$k$ relaxation has a $2$- and a $3$-solution and thereby also an integral solution.
	The reason why this approach fails for the cohomological algorithm (Theorem~\ref{thm:mainPowerOfCohomology})
	is that it solves the width-$k$ relaxation when a partial homomorphism is fixed. This fixing of local solutions reduces the homomorphism or-construction to just solving equations over $\bbZ_2$ and $\bbZ_3$, respectively, which the affine relaxation can do. To prove the second part of Theorem~\ref{thm:mainPowerOfCohomology}, we modify the homomorphism or-construction so that cohomology no longer solves it, but this also makes the template NP-complete.

\paragraph*{Acknowledgments.} 
We thank a number of people for helpful discussions and valuable input at various stages of this work, especially also for acquainting us with the problem:
We are grateful to Anuj Dawar, Martin Grohe, Andrei Krokhin, Adam Ó Conghaile, Jakub Opr\v{s}al, Standa \v{Z}ivn\'y, and Dmitriy Zhuk.
We are especially indebted to Michael Kompatscher,
who kindly provided us with the proof of Theorem~\ref{thm:mainPowerOnGroupCSPs} \nolinebreak\ref{itm:powerOnGroupsSolveSomeNonAbelian}.

	\section{Preliminaries}
	We write $[k]$ for $\set{1,\dots ,k}$.
	For $k \in \nat$ and a set $N$, let $\binom{N}{\leq k}$ be
	the set of  all subsets of~$N$  of size at most~$k$.
	A \defining{relational vocabulary} $\sig$ is a set of relation symbols $\set{R_1,\dots,R_k}$
	with associated arities $\arity{R_i}$.
	A \defining{relational $\sig$-structure} is a tuple $\StructA= (A,R_1^\StructA,\dots, R_k^\StructA)$ of a \defining{universe} $A$ and interpretations of the relation symbols
	such that $R_i^\StructA \subseteq A^{\arity{R_i}}$ for all $i \in [k]$.
	We use letters $\StructA$, $\StructB$, and $\StructC$ for finite relational structures.
	Their universes are denoted $A$,~$B$, and~$C$, respectively.
	If $\StructA$ is a structure and $X \subseteq A$, then $\StructA[X]$ denotes the induced substructure with universe $X$.
	
	For two $\sig$-structures $\StructA$ and $\StructB$, we write $\Hom{\StructA}{\StructB}$ for the set of \defining{homomorphisms} ${\StructA \to \StructB}$
	and $\isos{\StructA}{\StructB}$ for the set of \defining{isomorphisms} $\StructA \to \StructB$.
	
	A \defining{graph} $G=(V,E)$ is a binary $\set{E}$-structure,
	where we denote its \defining{vertex set} by $V(G)$ and its \defining{edge set} by $E(G)$.
	The graph $G$ is undirected if $E(G)$ is a symmetric relation
	and we write $uv$ for an edge incident to vertices $u$ and $v$.
	Unless specified otherwise, we consider undirected graphs.
	
	We use the letters $\Gamma$ and $\Delta$ for \defining{finite groups}
	and usually use letters $\alpha,\beta,\gamma$, and $\delta$ for group elements.
	For arbitrary groups, we write the group operation as multiplication.
	If we specifically consider Abelian groups, we write the group operation as addition. For the \defining{symmetric group} on $d$ elements, we write $\Sym{d}$.
	
	For an \defining{equation system} $\leqs$ over $K$ (where $K$ can be a finite group, $\QQ$, $\ZZ$, or the like and is specified in the context),
	we denote the set of its \defining{variables} by $\Var{\leqs}$.
	We use the letters $\Phi$ and $\Psi$
	for \defining{assignments} $\Var{\leqs} \to K$.
	By a system of linear equations we refer to,
	unless stated otherwise,
	a system over the rationals or integers.

	\paragraph{CSPs and Polymorphisms.}
	For a finite $\sig$-structure~$\StructA$, denote by $\CSP{\StructA}$
	the \defining{CSP with template~$\StructA$}, i.e., the class of finite $\sig$-structures~$\StructB$
	such that there is a homomorphism $\StructB \to \StructA$.
	We call a structure~$\StructB$ a $\CSP{\StructA}$-instance
	if~$\StructB$ has the same vocabulary as~$\StructA$.
	The complexity of $\CSP{\StructA}$, and also the applicability of certain algorithms, is determined by the \defining{polymorphisms} of the $\sig$\nobreakdash-structure~$\StructA$. An $\ell$-ary polymorphism is a map $p\colon A^{\ell} \to A$ such that for every $R \in \sigma$
	of arity $r=\arity{R}$ and all $\bar{a}_1,\dots,\bar{a}_\ell \in R^{\StructA}$,
	the tuple $(p(a_{11}, a_{21}, \dots, a_{\ell1}), \dots, p(a_{1r}, a_{2r}, \dots, a_{\ell r}))$ is also in~$R^{\StructA}$ (where~$a_{ij}$ denotes the $j$-th entry of the tuple~$\bar{a}_i$).
	The polymorphisms of a structure are closed under composition.
	A ternary operation $p$ is \defining{Maltsev} if it satisfies the identity $p(x,x,y) = p(y,x,x) = y$ for all inputs. 
	For a group $\Gamma$ the map $f(x,y,z) = x \inv{y} z$ is a typical example of a Maltsev operation.
	The templates with Maltsev polymorphisms form a subclass of all tractable CSPs \cite{BulatovDalmau2006}. 
	For more background on the algebraic approach to CSPs, see for example \cite{BartoKrokhinRoss}.

	\paragraph{Logics, Interpretations, and Reductions.}
	A logic $L$ defines $\CSP{\StructA}$ if there is an $L$\nobreakdash-formula~$F$
	such that each instance~$\StructB$ satisfies~$F$ if and only if $\StructB \in \CSP{\StructA}$.
	\defining{Inflationary fixed-point logic} (IFP) is the extension of first-order logic by an operator that defines inflationary fixed-points. Roughly speaking, this operator defines a $k$-ary relation $R$ from a formula $F(x_1,\dots, x_k)$ with free variables $x_1,\dots,x_k$, which itself uses $R$.
The fixed-point is iteratively computed starting from the empty relation
and adding in each iteration the tuples $(v_1,\dots,v_k)$ of the input structure to~$R$, for which the assignment $x_i \mapsto v_i$ satisfies~$F$.
This process is repeated until~$R$ stabilizes.
This will always occur because~$R$ only becomes larger. In particular, IFP can define connected components of graphs, which is not possible in pure first-order logic.
For a rigorous introduction of the logic we refer to~\cite[Chapter~8.1]{EbbinghausFlum1995},
formal details are not needed in this article.

	A \defining{logical interpretation} is a (partial) map from $\sigma$\nobreakdash-structures
	to $\tau$\nobreakdash-structures defined by logical formulas. If $L$ is a logic, then $L[\sigma, \tau]$ denotes the set of all $L$-interpretations from $\sigma$- to $\tau$-structures.
	For $d \in \bbN$, a $d$\nobreakdash-dimensional $L[\sigma, \tau]$-interpretation $I$ is a tuple consisting of formulas $\phi_\delta$, $\phi_\approx,$ and $\phi_R$ for all $R \in \tau$.
	Given a $\sigma$-structure $\StructA$, the interpretation $I$
	defines a $\tau$-structure $I(\StructA)$ in the following way.
	Let $B \coloneqq \{  \bar{a} \in A^d \mid \StructA \models \phi_\delta(\bar{a})  \}$.
	For every relation symbol $R \in \tau$, the relation $R$ in $I(\StructA)$ is defined as the set of all $\arity{R}$-tuples over $B$ satisfying $\phi_{R}$.
	Finally, the interpretation $I$ can also define an equivalence relation $\approx$ on $B$ (via the formula $\phi_\approx$),
	which has to be compatible with the defined relations,
	to take the quotient of the structure defined so far by $\approx$.
	This means that each $\approx$-equivalence class gets contracted into a single vertex.
	If $I$ does not define such an equivalence (i.e., the formula $\phi_\approx$ is omitted or trivial), it is called \defining{congruence-free}.
	For more formal details we also refer to~\cite{EbbinghausFlum1995}, but they are not needed.
	
	The notion of a logical interpretation can also be used as reduction between decision problems.
	Given two $\sig_i$-structures $\StructA_i$ (for $i \in [2]$),
	$\CSP{\StructA_1}$ is \defining{$L$-reducible} to $\CSP{\StructA_2}$
	if there is an $L[\sig_1,\sig_2]$-interpretation~$I$
	such that $\StructB \in \CSP{\StructA_1}$ if and only if $I(\StructB) \in \CSP{\StructA_2}$ for all $\CSP{\StructA_1}$-instances~$\StructB$ (of course this notion applies also to other means of reductions).
	
	Of particular interest in the context of CSP are Datalog-interpretations.
	Datalog can be seen as the existential positive fragment of IFP and we do not introduce it in this paper.
	We only note that every Datalog interpretation can be expressed by an IFP-interpretation (again see~\cite[Theorem~9.1.4]{EbbinghausFlum1995} for details).
	Dalmau and Opr\v{s}al~\cite{DalmauOprsal2024}
	also consider a variant of these reductions called \defining{Datalog$^\cup$ reductions}.
	These are a composition of congruence-free Datalog reductions (without inequality) and a so-called union gadget.
	Formally, Dalmau and Opr\v{s}al work with structures with disjoint sorts, and the union gadget allows to take unions of relations and of sorts. 
	When working in IFP, these sorts can for example be encoded with unary relations. An IFP-interpretation can then define the unification of sorts by defining the new unary relation as the union of the relevant unary relations in the input structure, and unions of other relations are also easily IFP-definable.
	Thus, every Datalog$^\cup$-reduction can be expressed as an IFP-interpretation.

	\paragraph{The $k$-Consistency Algorithm.}
	A well-known heuristic for CSPs is the $k$-consistency algorithm.
	For a template $\StructA$ and an instance $\StructB$,
	the $k$-consistency algorithm computes a map $\kcol{k}{\StructA}{\StructB}$ assigning to each $X \in \tbinom{B} {\leq k}$
	a set of partial homomorphisms $\StructB[X] \to \StructA$:
	it is the unique greatest fixed-point that satisfies
	the following properties for all  $Y \subset X \in \tbinom{B} {\leq k}$.
	 
	\begin{description}
		\item[Forth-Condition:]Every $f \in \kcol{k}{\StructA}{\StructB}(Y)$
		extends to some $g \in \kcol{k}{\StructA}{\StructB}(X)$,
		that is, $\restrict{g}{Y}=f$.
		\item[Down-Closure:] For every $g \in \kcol{k}{\StructA}{\StructB}(X)$,
		we have $\restrict{g}{Y} \in \kcol{k}{\StructA}{\StructB}(Y)$.
	\end{description}
	If $\kcol{k}{\StructA}{\StructB}(X) = \emptyset$
	for some $X \in \tbinom{B} {\leq k}$, then the algorithm rejects $\StructB$,
	otherwise it accepts.
	We remark that there are different versions of the $k$\nobreakdash-consistency algorithm in the literature, in particular there are ones in which the $k$-consistency algorithm considers partial homomorphisms whose domains have size $k+1$~\cite{AtseriasBulatovDalmau2007}.
	We follow the one given in~\cite{DalmauOprsal2024}.

	\paragraph{CSP-Relaxation via Affine Systems of Linear Equations.}
	We introduce a system of linear equations due to Berkholz and Grohe~\cite{BerkholzGrohe2017},
	which will be used to (approximately) solve CSPs.
	We transfer hardness results for this system
	to other systems used in the different algorithms.
	Let~$\StructA$ be a template structure and~$\StructB$ be an instance.
	The \defining{width-$k$ affine relaxation} $\cspiso{k}{\StructA}{\StructB}$
	aims to encode (approximately) whether $\StructB$ is in $\CSP{\StructA}$. 
	\begin{systembox}{$\cspiso{k}{\StructA}{\StructB}$: variables $x_{X,f}$
		for all $X \in \tbinom{B}{\leq k}$ and all $f \in  \Hom{\StructB[X]}{\StructA}$}
	\begin{align*}
		\sum_{\substack{f \in \Hom{\StructB[X]}{\StructA},\\ \restrict{f}{X\setminus\set{b}} = g}} x_{X,f} &=  x_{X\setminus{\set{b}},g}  &\text{for all } X \in \tbinom{B}{\leq k}, b \in X, g \in \Hom{\StructB[X\setminus\set{b}]}{\StructA} \label{eqn:csp-iso-agree}\tag{L1} \\
		x_{\emptyset,\emptyset }&= 1\label{eqn:csp-iso-empty}\tag{L2}
	\end{align*}
\end{systembox}
\noindent In Equation~\ref{eqn:csp-iso-empty}, $\emptyset$ denotes the unique homomorphism $\StructB[\emptyset] \to \StructA$.
If~$k$ is at least the arity of~$\StructA$, then
$\StructB \in \CSP{\StructA}$ if and only if $\cspiso{k}{\StructA}{\StructB}$
has a nonnegative integral solution (and actually a $\set{0,1}$-solution)~\cite{BerkholzGrohe2015}.
We will be mainly interested in
integral solutions of $\cspiso{k}{\StructA}{\StructB}$,
so without the non-negativity restriction.
Such solutions can be computed in polynomial time \cite{BachemRavindran}.
To show the existence of these solutions,
we consider special rational solutions:
	
	\begin{definition}
		For $p\in \bbN$, a \defining{$p$-solution} of a system of linear equations $\leqs$ with variables $\Var{\leqs}$ is a solution $\Phi \colon \Var{\leqs} \to \QQ$ 
		of $\leqs$  such that, for all  $x \in \Var{\leqs}$,
		$\Phi(x)=0$ or $\Phi(x) =p^i$ for some $i \in \ZZ$.
	\end{definition}
	\begin{lemma}[{\cite[Lemma 2.1]{BerkholzGrohe2017}}]
		\label{lem:p-q-solution-implies-integral}
		If $p$ and $q$ are coprime integers and
		a system $\leqs$ of linear equations over~$\QQ$ has a $p$-solution and a $q$-solution,
		then $\leqs$ has an integral solution,
		which is only non-zero for variables
		on which the $p$-solution or the $q$-solution is non-zero.
	\end{lemma}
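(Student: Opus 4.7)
The plan is to build the integral solution as an affine combination of the two given solutions. Let $\Phi_p$ be the $p$-solution and $\Phi_q$ the $q$-solution of $\leqs$. Because $\leqs$ is an inhomogeneous linear system $Ax = b$ (with the affine constant term $x_{\emptyset,\emptyset}=1$ absorbed in $b$), for any scalars $\alpha,\beta \in \QQ$ with $\alpha+\beta=1$, the vector $\alpha\Phi_p + \beta\Phi_q$ is again a solution of $\leqs$. So it suffices to pick $\alpha,\beta$ with $\alpha+\beta=1$ such that $\alpha\Phi_p(x)+\beta\Phi_q(x) \in \ZZ$ for every variable $x$, and such that this value vanishes whenever both $\Phi_p(x)$ and $\Phi_q(x)$ vanish.

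Since $\Var{\leqs}$ is finite and every non-zero value of $\Phi_p$ (resp.\ $\Phi_q$) is a power $p^i$ (resp.\ $q^j$) with $i,j\in\ZZ$, I would choose $N,M \in \nat$ large enough that $p^N\Phi_p(x) \in \ZZ$ and $q^M\Phi_q(x)\in\ZZ$ for every variable $x$. Coprimality of $p$ and $q$ passes to their powers, so $\gcd(p^N,q^M)=1$ and B\'ezout's identity yields integers $c,d\in\ZZ$ with $cp^N + dq^M = 1$. Setting $\alpha := cp^N$ and $\beta := dq^M$ gives $\alpha+\beta=1$ and $\alpha,\beta\in\ZZ$, so
\[
\Phi(x) \;:=\; \alpha\Phi_p(x) + \beta\Phi_q(x) \;=\; c\cdot(p^N\Phi_p(x)) + d\cdot(q^M\Phi_q(x))
\]
is a sum of products of integers and hence integral for each $x$.

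It remains to check the support condition: if $\Phi_p(x) = \Phi_q(x) = 0$, then $\Phi(x) = 0$ directly from the formula, so $\Phi$ is supported inside $\mathrm{supp}(\Phi_p) \cup \mathrm{supp}(\Phi_q)$. Combined with the observation that $\Phi$ solves $\leqs$ (being an affine combination with $\alpha+\beta=1$), this proves the lemma. I do not expect any real obstacle; the one thing to be careful about is that $\alpha$ and $\beta$ must be simultaneously integers and sum to $1$, which is exactly what B\'ezout gives once we have cleared the denominators of both $\Phi_p$ and $\Phi_q$ by multiplying by suitable powers of $p$ and $q$.
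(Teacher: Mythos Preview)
Your proof is correct and is the standard argument for this fact. The paper does not give its own proof of this lemma; it simply cites it from Berkholz and Grohe~\cite{BerkholzGrohe2015}, and your B\'ezout-based affine combination is exactly the kind of argument used there. One minor remark: your parenthetical about absorbing $x_{\emptyset,\emptyset}=1$ into $b$ is specific to the system $\cspiso{k}{\StructA}{\StructB}$, whereas the lemma is stated for an arbitrary linear system $\leqs$ over $\QQ$; but this does not affect the argument, since any such system can be written as $Ax=b$ and affine combinations of solutions are again solutions.
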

	
		\begin{lemma}
		\label{lem:csp-iso-subsets}
		All solutions $\Phi$ of $\cspiso{k}{\StructA}{\StructB}$ satisfy for all $X \in \tbinom{B}{\leq k}$, $Y \subseteq X$, and $g \in \Hom{\StructB[Y]}{\StructA}$
		that 
		\[\sum_{\substack{f\in \Hom{\StructB[X]}{\StructA},\\\restrict{f}{Y} = g}} \Phi(x_{X,f}) = \Phi(x_{Y,g}). \]
		In particular, for all $X \in \tbinom{B}{\leq k}$, we have
		\[\sum_{f\in \Hom{\StructB[X]}{\StructA}} \Phi(x_{X,f}) = 1.\]
	\end{lemma}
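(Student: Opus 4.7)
The plan is a straightforward induction on $|X \setminus Y|$, using Equation~\ref{eqn:csp-iso-agree} as the inductive step and Equation~\ref{eqn:csp-iso-empty} to derive the ``in particular''-specialization.

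For the base case $|X \setminus Y| = 0$, we have $X = Y$, so the only $f$ with $\restrict{f}{Y} = g$ is $f = g$ itself, and both sides equal $\Phi(x_{Y,g})$. For the inductive step I would pick any $b \in X \setminus Y$ and set $X' := X \setminus \set{b}$; then I partition the homomorphisms $f \in \Hom{\StructB[X]}{\StructA}$ with $\restrict{f}{Y} = g$ according to their intermediate restriction $f' := \restrict{f}{X'}$. Restrictions of homomorphisms to induced substructures are again homomorphisms, so each such $f'$ lies in $\Hom{\StructB[X']}{\StructA}$ and still satisfies $\restrict{f'}{Y} = g$. Summing first over the $f$ extending a fixed $f'$ and then over $f'$, the inner sum collapses to $\Phi(x_{X',f'})$ by Equation~\ref{eqn:csp-iso-agree} applied with the element $b$ and the partial homomorphism $f'$, and the outer sum over $f'$ is dispatched by the induction hypothesis applied to the strictly smaller pair $(X', Y)$. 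The result is $\Phi(x_{Y,g})$, as desired.

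For the ``in particular''-statement, I specialize to $Y = \emptyset$ and $g = \emptyset$: every $f \in \Hom{\StructB[X]}{\StructA}$ trivially restricts to the unique empty homomorphism, so the first identity reduces to $\sum_{f \in \Hom{\StructB[X]}{\StructA}} \Phi(x_{X,f}) = \Phi(x_{\emptyset,\emptyset})$, and this equals $1$ by Equation~\ref{eqn:csp-iso-empty}. There is no real obstacle here; the argument is pure bookkeeping, and the only detail worth checking is that the regrouping is legitimate, namely that each $f'$ arising as an intermediate restriction is indeed a homomorphism, so that the variable $x_{X',f'}$ genuinely appears in the system $\cspiso{k}{\StructA}{\StructB}$.
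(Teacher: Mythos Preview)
Your proof is correct and follows exactly the same approach as the paper: induction on $|X\setminus Y|$ using Equation~\ref{eqn:csp-iso-agree} for the step, and specializing to $Y=\emptyset$ together with Equation~\ref{eqn:csp-iso-empty} for the second claim. The paper's proof is simply a two-sentence sketch of what you have written out in full.
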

	\begin{proof}
		Let $X = \set{v_1,\dots,v_\ell}$ and $Y= \set{v_1,\dots,v_j}$ for some $j \leq k$.
		The first claim is proven by induction on $\ell-j$ using equations of Type~\ref{eqn:csp-iso-agree}.
		The second claim follows as a special case for $Y=\emptyset$
		and Equation~\ref{eqn:csp-iso-empty}.
	\end{proof}

	\paragraph{Group Coset-CSPs.}
	\label{sec:groupCSP}

	Let $\Gamma$ be a finite group.
We define \defining{$\Gamma$-coset-CSPs}~\cite{BerkholzGrohe2015, feder1993monotone}, a class of CSPs, in which variables range over $\Gamma$ and the constraints are of the following form. 
For an $r$-tuple of variables $\tup{x} = (x_1, \dots, x_r)$,
an $r$-ary \defining{$\Gamma$-coset-constraint} is the constraint
$\tup{x} \in \Delta\delta$, where $\Delta \leq \Gamma^r$ is a subgroup
of $\Gamma^r$ and $\delta \in \Gamma^r$.
Hence, $\Delta\delta$ is a right coset of $\Gamma^r$.
When we use the term \defining{coset-CSP}, we refer to a $\Gamma$-coset-CSP in this sense.

It is known that, for each fixed $\Gamma$ and each fixed arity $r$, every $r$-ary $\Gamma$-coset-CSP is polynomial-time solvable \cite{feder1993monotone}.
For every finite group $\Gamma$ and every arity $r$, there is a template structure $\CosetGrpTmplt{\Gamma}{r}$ such that every $r$-ary $\Gamma$-coset-CSP
can be seen as a $\CosetGrpTmplt{\Gamma}{r}$-instance
and $\CSP{\CosetGrpTmplt{\Gamma}{r}}$ contains all $r$-ary $\Gamma$\nobreakdash-coset-CSPs that have a solution.
The tractability of $\CSP{\CosetGrpTmplt{\Gamma}{r}}$ can also be seen from the fact that $\CosetGrpTmplt{\Gamma}{r}$ admits a Maltsev polymorphism~\cite{BulatovDalmau2006}
(whose existence was already noted, but not made explicit, in~\cite{BerkholzGrohe2015}).
In fact, the universe of a CSP can be extended to a group such that
the CSP is a coset-CSP in this sense if and only if its template has the Maltsev polymorphism $f(x,y,z) = x\inv{y}z$.

\begin{lemma}
	\label{lem:group-csp-maltsev}
	For every finite template $\StructA = (A, R_1^\StructA,...,R_m^\StructA)$
	and every binary operation $\cdot \colon A \times A \to A$ such that $\Gamma = (A,\cdot)$ is a group,
	\begin{itemize}
		\item  the map $f: \Gamma^3 \to \Gamma$ defined by $f(x,y,z) = x\inv{y}z$ is a polymorphism of $\StructA$ if and only if 
		\item  each relation $R_i^{\StructA}$ is a coset of a subgroup of $\Gamma^{\arity{R_i}}$.
	\end{itemize}
\end{lemma}	
\begin{proof}
	For the backwards direction,
	let $\Delta\delta$ be a right coset of $\Gamma^r$
	and consider $r$-tuples
	\[
	(\alpha_1,\dots,\alpha_r), (\beta_1,\dots,\beta_r), (\gamma_1,\dots,\gamma_r) \in \Delta\delta.
	\]
	Then
	$(f(\alpha_1,\beta_1,\gamma_1), \dots, f(\alpha_r, \beta_r, \gamma_r))
	= (\alpha_1\inv{\beta_1}\gamma_1, \dots , \alpha_r\inv{\beta_r}\gamma_r)
	\in \Delta\delta\inv{\delta}\Delta\Delta\delta = \Delta\delta$
	because $\Delta$ is a subgroup of $\Gamma^r$.
	Hence,~$f$ is a polymorphism.
	For the other direction, suppose that~$f$ is a polymorphism of the $r$-ary relation $R_i^{\StructA}$. We can write $R_i^{\StructA} = K\gamma$, for some $K \subseteq \Gamma^r, \gamma \in \Gamma^r$ such that $K$ contains the neutral element $\bar{0} \in \Gamma^r$. It remains to show that $K$ is a subgroup of~$\Gamma^r$. Let $\alpha, \beta \in K$. Then $\alpha\gamma, \gamma, \beta\gamma \in R_i^{\StructA}$. Apply $f$ to these three tuples in this order. For each $j \in [r]$, we have $f(\alpha_j\gamma_j, \gamma_j, \beta_j\gamma_j) = \alpha_j \beta_j \gamma_j$. Because~$f$ is a polymorphism of $R_i^{\StructA}$, it follows that $\alpha \beta \gamma \in R_i^{\StructA} = K\gamma$. So $\alpha \beta \in K$, and~$K$ is a subgroup.
\end{proof}	

Thus, coset-CSPs are a natural class to study. In particular, being Maltsev, they are always tractable even if $\Gamma$ is non-Abelian. By contrast, for \emph{systems of linear equations}, we have NP-completeness if (and only if) $\Gamma$ is non-Abelian \cite{GOLDMANN}. 
Systems of linear equations over an \emph{Abelian} group $\Gamma$
can however be viewed as a $\Gamma$-coset-CSP:
A linear equation $x_1 + \cdots + x_k = \alpha$ for $\alpha \in \Gamma$
is equivalent to the $\Gamma$-coset-constraint
$(x_1,...,x_k) \in \Delta\delta_\alpha$,
where $\Delta = \setcond{ (b_1,...,b_k)}{b_1+\dots+b_k = 0}$,
and $\delta_\alpha = (\alpha,0,...,0)$.
Hence, when we consider equation systems over Abelian groups in Section~\ref{sec:tseitin}, we can treat them uniformly as coset-CSPs. 
For coset-CSPs over the cyclic group $\bbZ_p$, we will also need the (first-order definable) reverse translation from coset-CSP to linear equations: 
\begin{lemma}
	\label{lem:ZpcosetsAreEquations}
	Let $p$ be a prime and $\StructB$ an instance of $\CSP{\CosetGrpTmplt{\bbZ_p}{r}}$.
	Then there is a system of linear equations over $\bbZ_p$ that has a solution if and only if $\StructB \in \CSP{\CosetGrpTmplt{\bbZ_p}{r}}$. Moreover, the equation system is definable from $\StructB$ in first-order logic. 
\end{lemma}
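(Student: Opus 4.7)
The plan is to observe that subgroups of $\bbZ_p^r$ are precisely $\bbZ_p$-linear subspaces, so every coset is the solution set of an affine system. For each subgroup $\Delta \leq \bbZ_p^r$, fix a matrix $M_\Delta \in \bbZ_p^{s_\Delta \times r}$ with $\ker M_\Delta = \Delta$; then for any $\delta \in \bbZ_p^r$ we have $\Delta\delta = \{\tup{x} \in \bbZ_p^r : M_\Delta \tup{x}^T = M_\Delta \delta^T\}$. Since the template $\CosetGrpTmplt{\bbZ_p}{r}$ is fixed, each relation symbol $R$ of its vocabulary corresponds to a fixed coset $\Delta_R\delta_R \subseteq \bbZ_p^{\arity{R}}$, and we may fix once and for all the matrix $M_R$ and the right-hand side vector $b_R := M_R\delta_R^T$ associated to it.

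Given $\StructB$, I would now define the equation system $\leqs_\StructB$ over $\bbZ_p$ as follows. Its variables are the elements of $B$. For each relation symbol $R$, each tuple $\tup{b} = (b_1,\dots,b_{\arity{R}}) \in R^\StructB$, and each row $i \in [s_R]$, include the equation
\[
\sum_{j=1}^{\arity{R}} (M_R)_{ij}\, b_j \;=\; (b_R)_i .
\]
An assignment $h : B \to \bbZ_p$ satisfies all of these equations iff for every $R$ and every $\tup{b} \in R^\StructB$ we have $M_R\, h(\tup{b})^T = b_R$, i.e.\ $h(\tup{b}) \in \Delta_R\delta_R = R^{\CosetGrpTmplt{\bbZ_p}{r}}$. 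Hence solutions of $\leqs_\StructB$ correspond bijectively to homomorphisms $\StructB \to \CosetGrpTmplt{\bbZ_p}{r}$, which gives the first claim.

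For the FO definability part, fix a standard relational encoding of $\bbZ_p$-equation systems: the vocabulary has unary predicates $\mathrm{Var}, \mathrm{Eq}$, binary relations $\mathrm{Coef}_c(v,e)$ for each $c \in \bbZ_p$, and unary relations $\mathrm{RHS}_c(e)$ for each $c \in \bbZ_p$. Then I would give a fixed-dimension FO interpretation whose new universe consists of two disjoint sorts, encoded for instance as tuples of some dimension $d$ depending only on $r$: a ``variable'' sort in bijection with $B$, and an ``equation'' sort whose points are $(R,\tup{b},i)$ with $\tup{b} \in R^\StructB$ and $i \in [s_R]$ (encoded as suitable tuples, using that the number of relation symbols and each $s_R \leq r$ are all bounded by constants depending only on the template). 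Because the matrices $M_R$ and vectors $b_R$ are constants fixed by the template, the coefficient and right-hand-side relations are defined by a finite case distinction on $R$: one quantifier-free formula per pair $(R,i)$ lists, for each $c \in \bbZ_p$, those positions $j$ with $(M_R)_{ij}=c$, and asserts $\mathrm{RHS}_c$ for the unique constant $(b_R)_i$.

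I expect the main obstacle to be purely bookkeeping: choosing the encoding so that variables and equations coexist in one structure under a $d$-dimensional interpretation while keeping the defining formulas quantifier-free in $R$ and the row index. The mathematical content is elementary, since every subgroup of $\bbZ_p^r$ is a linear subspace and the template is of constant size, so all ``translation data'' $(M_R, b_R)$ can be hard-coded into the interpretation.
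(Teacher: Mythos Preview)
Your proof is correct. The approach differs from the paper's in a small but natural way: the paper uses a \emph{parametric} description of each coset, fixing generators $\delta_1,\dots,\delta_m$ of $\Delta$ and writing $\alpha \in \Delta\gamma$ as $\alpha = \sum_i z_i\delta_i + \gamma$, which introduces auxiliary variables $z_i$ (one set per constraint) in addition to the variables $B$. You instead use the dual \emph{implicit} description via a check matrix $M_\Delta$ with $\ker M_\Delta = \Delta$, so that membership in the coset becomes the linear system $M_\Delta \tup{x}^T = M_\Delta\delta^T$ directly in the original variables. Both are elementary linear algebra over $\bbZ_p$ and both yield FO-definable translations because the template is fixed and all the linear-algebraic data ($M_R$, $b_R$, or respectively the generators) can be hard-coded. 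Your version is slightly tidier in that it avoids auxiliary variables; the paper's version is slightly tidier in that each constraint yields exactly $\arity{R}$ equations regardless of $\Delta$. One minor point worth making explicit in your write-up: when a tuple $\tup{b}\in R^\StructB$ has repeated entries, the coefficient of a variable $v$ in the equation for row $i$ is $\sum_{j:\,b_j=v}(M_R)_{ij}$, and since $\arity{R}$ is bounded and $M_R$ is fixed this is still a finite case distinction, so FO-definability is unaffected.
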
	
\begin{proof}
	Let $\bar{b} \in \Delta \gamma$ be an $r$-ary constraint in $\StructB$. 
	Let $\{\delta_1, ..., \delta_m\} \subseteq \Delta$ be a set of generators of the subgroup $\Delta$. Then an $\alpha \in \bbZ_p^r$ is in $\Delta \gamma$ if and only if it satisfies:
	$\alpha = \Big(\sum_{i \in [m]} z_i \cdot \delta_i\Big) + \gamma$, for some $z_i \in \bbZ_p$. 
	In this equation,~$\gamma$ and the~$\delta_i$ are $r$-tuples, so we can break this up into~$r$ many equations, one for each $j \in [r]$: $\alpha_j = \Big(\sum_{i \in [m]} z_i \cdot \delta_{ij}\Big) + \gamma_j$.
	Each constraint $\bar{b} \in \Delta \gamma$  in $\StructB$ is translated into this set of $r$ equations, with the~$z_i$ being the variables. Formally, we use different variables for each constraint, so the~$z_i$ are also indexed with the constraint $\bar{b} \in \Delta \gamma$ in $\StructB$ that they belong to.
	For each $\Delta \leq \bbZ_p^r$, we can use a fixed generating set, so with respect to this, the translation from coset constraints into the equations is first-order definable in~$\StructB$. 
\end{proof}

It is also known that if $\Gamma$ is Abelian, then the tractable CSPs over $\Gamma$ are precisely the $\Gamma$-coset-CSPs. For some non-Abelian groups $\Gamma$, there exist examples of tractable templates that contain non-coset relations. But, even then, we can only have tractability if the constraints are so-called ``nearsubgroup'' constraints (see \cite{feder1993monotone}).
So the $\Gamma$-coset-CSPs that we study here exactly cover the tractable regime for Abelian $\Gamma$, and nearly cover it for general $\Gamma$.

	\section{Homomorphism OR-Construction}
	\label{sec:orConstruction}
	
	We now describe a generic construction of templates and instances that will later be applied to specific Abelian coset-CSPs in order to obtain hard examples for the affine algorithms. Different variations of this construction will be used to prove Theorem \ref{thm:mainResultInformal} and the second part of Theorem \ref{thm:mainPowerOfCohomology}. The construction realizes the disjunction of two CSPs.
	For $i \in [2]$, let $\StructA_i$ and $\StructB_i$ be nonempty $\sig_i$-structures,
for which we assume that $\sig_1$ and $\sig_2$ are disjoint.
We see the $\StructA_i$ as template structures and the $\StructB_i$ as the corresponding instances.
We aim to define two structures $\StructA$ and $\StructB$ such that $\StructB \in \CSP{\StructA}$ if and only if there is an $i \in [2]$ such that $\StructB_i \in \CSP{\StructA_i}$.

Let $S$ be a fresh binary relation symbol.
Set $\sig := \sig_1 \cup \sig_2 \cup \set{S}$,
where the arities of the relations are inherited from $\sig_1$ and $\sig_2$.
Our construction is parameterized by subsets $W_i \subseteq A_i$ for each $i\in[2]$.
Different choices of these $W_i$ yield tractable and intractable versions of the OR-construction, which are needed for the proofs of Theorem \ref{thm:mainResultInformal} and \ref{thm:mainPowerOfCohomology}, respectively.
We also let $c_1$ and $c_2$ be two fresh vertices.
For $i\in[2]$ and $\ell\in\nat$, 
we let 
\[W_i^{\ell,c_i} := \setcond*{(u_1,\dots,u_k)\in (A_i\cup\set{c_i})^\ell}{u_j \in W_i, u_k = c_i \text{ for some } j,k \in [k]}\]
be the set of all $\ell$-tuples that containing $c_i$ and some element of $W_i$.
We define the $\sig$-structures $\StructA = \ORparam{\StructA_1,\StructA_2,W_1,W_2}$ and 
$\StructB = \OR{\StructB_1,\StructB_2}$.
In the following, we assume that the universes of~$\StructA_1$ and~$\StructA_2$
and the ones of~$\StructB_1$ and~$\StructB_2$ are disjoint (and non-empty), so that
the following unions are disjoint.
\begin{align*}
	A &:= A_1 \cup A_2 \cup \set{c_1,c_2}\\
	R^\StructA & :=R^{\StructA_i}  \cup   \set{c_i}^{\arity{R}} \cup W_i^{\arity{R}, c_i} & \text{for all } i \in [2], R \in \sig_i\\
	S^\StructA & :=  \Big( A_1 \times (W_2 \cup \set{c_{2}})\Big)
	\cup \Big( (W_1 \cup \set{c_1}) \times A_{2}\Big)\\\\
	B &:= B_1 \cup B_2\\
	R^\StructB &:= R^{\StructB_i}  & \text{for all } i \in [2], R \in \sig_i\\
	S^\StructB &:= B_1 \times B_2
\end{align*}

\begin{figure}
	\begin{tikzpicture}[scale=0.75]
		\colorlet{col1}{blue}
		\colorlet{col2}{red}
		\colorlet{colInstance}{red}
		
		\tikzstyle{template1} = [draw=col1, fill=col1!20!white, text=black, rectangle, minimum width=1.5cm, minimum height=1.5cm];
		\tikzstyle{template2} = [draw=col2, fill=col2!20!white, text=black, rectangle, minimum width=1.5cm, minimum height=1.5cm];
		\tikzstyle{instance1} = [draw=col1, fill=col1!20!white, text=black,  circle, minimum width=2cm, minimum height=2cm];
		\tikzstyle{instance2} = [draw=col2, fill=col2!20!white, text=black,  circle, minimum width=2cm, minimum height=2cm];
		\tikzstyle{vertex} = [fill=none, minimum width =0.5cm, minimum height = 0.5cm, thick, circle, inner sep = 0mm, font=\footnotesize];
		
		\tikzstyle{sedge} = [black, thick];
		
		\begin{scope}[shift={(7.5,12.5)}]
			\node [instance1] (b1) at (-2,-2) {$\StructB_1$};
			\node [instance2] (b2) at (2,-2) {$\StructB_2$};

			\begin{scope}[on background layer]	
				
				\def\ystep{0.5}
				\foreach \i in {-2,...,2}{
					\foreach \j in {-2,...,2} {
						\draw[thick, sedge] ($(b1) + (0,\i*\ystep)$) --($(b2) + (0,\j*\ystep)$);
					}
				}
			\end{scope}
			
			\node[text width=3cm, align=center] at (0,-4.25) {instance\\$\OR{\StructB_1,\StructB_2}$};
			
		\end{scope}

		\def\adstep{0.45}
		\def\avstep{0.45}
		\def\wstep{0.25}
		
		\begin{scope}[shift = {(0.5,0)}]
			\node [template1, vertex] (c1) at (-2,6) {$c_1$};
			\node [template2, vertex] (c2) at (2,6) {$c_2$};
			\node [template1] (a1) at (-2,2) {$\StructA_1$};
			\node [template2] (a2) at (2,2) {$\StructA_2$};
			\draw [col1, thick] ($(a1.north east)-(0.8,0)$) arc (180:270:0.8cm);
			\draw [col2, thick] ($(a2.north west)-(0,0.8)$) arc (270:360:0.8cm);
			\node[font=\footnotesize] (w1) at (-1.3,2.7)  {$W_1$};
			\node[font=\footnotesize] (w2) at (1.3,2.7)  {$W_2$};

			\begin{scope}[on background layer, every loop/.style={}]
				
				\foreach \i in {-2,...,2}{
					\draw[thick, sedge] ($(a1.center) + (\i*\adstep,-\i*\adstep)$) --(c2);
					\draw[thick, sedge] ($(a2) + (\i*\adstep,\i*\adstep)$) --(c1);
				}

				\foreach \i in {-2,...,2}{
					\foreach \j in {1,...,2} {
						\draw[thick, sedge] ($(a2) + (-1,\i*\avstep)$) --($(a1) + (1,\j*\avstep)$);
						\draw[thick, sedge] ($(a1) + (1,\i*\avstep)$) --($(a2) + (-1,\j*\avstep)$);
					}
				}
				
				\foreach \i in {1,...,2}{
					\draw[thick, col1] ($(a1) + (\i*\avstep,1)$) --(c1);
					\draw[thick, col2] ($(a2) + (-\i*\avstep,1)$) --(c2);
				}
				
				\draw [thick, col1] (c1) edge [loop] (c1);
				\draw [thick, col2] (c2) edge [loop] (c2);
			\end{scope}
			
			\node[text width=3.5cm, align=center] at (0,-0.25) {parameterized or-construction\\$\ORparam{\StructA_1,\StructA_2,W_1,W_2$}};
		\end{scope}
		
		\begin{scope}[shift = {(7.5,0)}]
			\node [template1, vertex] (c1) at (-2,6) {$c_1$};
			\node [template2, vertex] (c2) at (2,6) {$c_2$};
			\node [template1] (a1) at (-2,2) {$\StructA_1$};
			\node [template2] (a2) at (2,2) {$\StructA_2$};

			\begin{scope}[on background layer, every loop/.style={}]
				
				\foreach \i in {-2,...,2}{
					\draw[thick, sedge] ($(a1.center) + (\i*\adstep,-\i*\adstep)$) --(c2);
					\draw[thick, sedge] ($(a2) + (\i*\adstep,\i*\adstep)$) --(c1);
				}

				\draw [thick, col1] (c1) edge [loop] (c1);
				\draw [thick, col2] (c2) edge [loop] (c2);
			\end{scope}
			
			\node[text width=3cm, align=center] at (0,-0.25) {tractable\\ or-construction\\$\ORT{\StructA_1,\StructA_2}$};
		\end{scope}
		
		\begin{scope}[shift = {(14.5,0)}]
			\node [template1, vertex] (c1) at (-2,6) {$c_1$};
			\node [template2, vertex] (c2) at (2,6) {$c_2$};
			\node [template1] (a1) at (-2,2) {$\StructA_1$};
			\node [template2] (a2) at (2,2) {$\StructA_2$};

			\begin{scope}[on background layer, every loop/.style={}]
				
				\foreach \i in {-2,...,2}{
					\draw[thick, sedge] ($(a1.center) + (\i*\adstep,-\i*\adstep)$) --(c2);
					\draw[thick, sedge] ($(a2) + (\i*\adstep,\i*\adstep)$) --(c1);
				}

				\foreach \i in {-2,...,2}{
					\foreach \j in {-2,...,2} {
						\draw[thick, sedge] ($(a2) + (-1,\i*\avstep)$) --($(a1) + (1,\j*\avstep)$);
					}
				}
				
				\foreach \i in {-2,...,2}{
					\draw[thick, col1] ($(a1) + (\i*\avstep,1)$) --(c1);
					\draw[thick, col2] ($(a2) + (\i*\avstep,1)$) --(c2);
				}
				
				\draw [thick, col1] (c1) edge [loop] (c1);
				\draw [thick, col2] (c2) edge [loop] (c2);
			\end{scope}
			\node[text width=3cm, align=center] at (0,-0.25) {intractable or-construction\\$\ORNPC{\StructA_1,\StructA_2}$};
		\end{scope}
	\end{tikzpicture}
	\caption{The different homomorphism or-constructions:
		The figure assumes that the two vocabularies~$\sig_1$ and~$\sig_2$
		are binary and contain a single relation each (blue and red).
		At the top the instance $\OR{\StructB_1,\StructB_2}$.
		At the bottom three different version on the templates:
		the general parameterized construction,
		and the special cases of the tractable and intractable construction,
		which will be discussed in Sections~\ref{app:tractable-or} and~\ref{app:intractable-or}, respectively.
		The new $S$-relation is drawn in black,
		where the edges are all oriented from left to right.
		Pairs added to the relation of $\sig_1$ or $\sig_2$
		are drawn in blue or red, respectively.
		\label{fig:hom-or-construction}
	}
\end{figure}	
\noindent Figure~\ref{fig:hom-or-construction} illustrates the construction.
The following lemma shows that this definition yields a homomorphism or-construction, for all choices of the sets $W_i$.
As we will see, the choice of $W_i$ controls embeddings of partial homomorphisms 
and the complexity of the resulting template.
We will later work with two concrete instantiations of the sets $W_i$, but first we prove all properties that hold for any choice of $W_i$.

\begin{lemma}
	\label{lem:hom-or-construction-correct}
	$\StructB \in \CSP{\StructA}$ if and only if there is an $i \in [2]$
	such that $\StructB_i \in \CSP{\StructA_i}$.
\end{lemma}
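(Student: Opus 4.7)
The plan is to prove the two directions of the iff separately. For the backward direction (assuming $\StructB_i \in \CSP{\StructA_i}$ witnessed by $f_i$), I will construct $f : \StructB \to \StructA$ by extending $f_i$ via the constant map $b \mapsto c_{3-i}$ on $B_{3-i}$. Checking that $f$ is a homomorphism reduces to three routine cases: for $R \in \sig_i$, image tuples lie in $R^{\StructA_i} \subseteq R^\StructA$; for $R \in \sig_{3-i}$, image tuples are constant and hence lie in $\set{c_{3-i}}^{\arity{R}} \subseteq R^\StructA$; and an $S$-pair from $B_1 \times B_2$ lands in $A_1 \times \set{c_2}$ or $\set{c_1} \times A_2$, both of which are contained in $S^\StructA$ directly from its definition.

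For the forward direction, I will start from an arbitrary homomorphism $f : \StructB \to \StructA$ and extract a homomorphism $\StructB_i \to \StructA_i$ for some $i$. The first step is to observe that, since $W_i \subseteq A_i$, we have $S^\StructA \subseteq (A_1 \cup \set{c_1}) \times (A_2 \cup \set{c_2})$; combined with the fact that $S^\StructB = B_1 \times B_2$ is a biclique and both $B_i$ are nonempty, this forces $f(B_1) \subseteq A_1 \cup \set{c_1}$ and $f(B_2) \subseteq A_2 \cup \set{c_2}$. The crucial next observation is that $(c_1, c_2) \notin S^\StructA$: membership would require $c_1 \in A_1$ or $c_2 \in A_2$, and neither holds because $c_1, c_2$ are fresh vertices. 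Hence it cannot simultaneously happen that some $b_1 \in B_1$ satisfies $f(b_1) = c_1$ and some $b_2 \in B_2$ satisfies $f(b_2) = c_2$, because the $S$-edge between them would then be violated. Without loss of generality we obtain $f(B_1) \subseteq A_1$.

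It then remains to verify that $\restrict{f}{B_1}$ is a homomorphism $\StructB_1 \to \StructA_1$. For each $R \in \sig_1$ and each tuple $(b_1,\dots,b_r) \in R^{\StructB_1}$, the image tuple lies in $R^\StructA = R^{\StructA_1} \cup \set{c_1}^{\arity{R}} \cup W_1^{\arity{R}, c_1}$; since every image entry is in $A_1$ and $c_1 \notin A_1$, the latter two components, which both require a $c_1$-entry, are ruled out, forcing the image into $R^{\StructA_1}$. The main conceptual obstacle is spotting that $(c_1, c_2) \notin S^\StructA$, since this is the mechanism that forces one side to be mapped entirely into $A_i$; the rest of the argument is essentially unfolding definitions and, importantly, never uses the specific choice of the parameters $W_1, W_2$, so the same proof will simultaneously cover both the tractable ($W_i = \emptyset$) and intractable ($W_i = A_i$) specializations used later in the paper.
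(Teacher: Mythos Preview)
Your proof is correct and follows essentially the same approach as the paper: both directions use the same constructions (extending $f_i$ by the constant map to $c_{3-i}$ for the backward direction, and restricting $f$ to $B_i$ for the forward direction), and the key mechanism in the forward direction---that $f(B_i) \subseteq A_i \cup \{c_i\}$ together with the $S$-relation forces at least one side to avoid $c_i$---is the same. Your presentation isolates the fact $(c_1,c_2) \notin S^\StructA$ a bit more explicitly than the paper does, but this is a matter of exposition rather than a different argument.
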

\begin{proof}
	First, assume that there is a homomorphism $f \colon \StructB_i \to \StructA_i$.
	We define a homomorphism $g \colon \StructB \to \StructA$ via
	\begin{align*}
		g(b) := \begin{cases}
			f(b) & \text{if } b \in B_i, \\
			c_{3-i} & \text{otherwise.}
		\end{cases}
	\end{align*}
	We show that $g$ is a homomorphism.
	Let $R \in \sig_i$.
	Then $g(R^\StructB) = g(R^{\StructB_i}) = 
	f(R^\StructB_i) \subseteq R^{\StructA_i} \subseteq R^\StructA$.
	Let $R \in \sig_{3-i}$. Then $g(R^\StructB) = g(R^{\StructB_{3-i}}) =  \set{c_{3-i}}^{\arity{R}} \subseteq R^\StructA$.
	Finally, we consider the relation~$S$.
	We have $g(S^\StructB)  = g(B_1 \times B_2)\subseteq A_1 \times \set{c_2} \cup \set{c_1} \times A_2   \subseteq S^\StructA$
	by the definitions of $\StructA$ and $\StructB$.
	
	Conversely, assume that there is a homomorphism $f \colon \StructB \to \StructA$.
	Because $f$ preserves the relation~$S$,
	we have $f(B_{i}) \subseteq A_i \cup \set{c_i}$ for both $i \in [2]$.
	We  claim that for some $i \in [2]$, we actually have $f(B_{i}) \subseteq A_i$.
	Assume that for $i \in [2]$ this is not the case, that is,
	there is some $b \in B_i$ such that $f(b) = c_{i}$.
	By definition of $\StructB$, we have $\set{b} \times B_{3-i} \subseteq S^\StructB$ if $i=1$ and   $B_{3-i} \times \set{b} \subseteq S^\StructB$ if $i=2$.
	Because $f$ is a homomorphism and by the definition of $\StructA$,
	we have $f(\set{b} \times B_{3-i}) \subseteq \set{ c_{i}} \times A_{3-i}$
	if $i=1$ and similar for $i=2$.
	But this in particular implies that $f( B_{3-i}) \subseteq  A_{3-i}$ as claimed.
	
	So there is an $i \in [2]$ such that $f(B_{i}) \subseteq A_i$.
	We define $g \colon B_i \to A_i$ via $g(b) = f(b)$ for all $b \in B_i$.
	Because $R^\StructB = R^{\StructB_i}$ and $g(R^{\StructB_i}) \subseteq A_i^{\arity{R}}$, and $f$ is a homomorphism, $g$ maps tuples in~$R^{\StructB_i}$ to tuples in~$R^{\StructA_i}$, for all $R \in \sig_i$. 
	Hence the function $g$ is a homomorphism $\StructB_i \to \StructA_i$.
\end{proof}

\noindent We now analyze which partial homomorphisms $\StructB_i \to \StructA_i$
can be extended to partial or global homomorphisms $\StructB \to \StructA$.
For $i\in[2]$, denote by $\restrict{\StructA}{i}$ the structure $\StructA[A_i\cup\set{c_i}]$.

\begin{lemma}
	\label{lem:hom-or-embedd}
	Let $i \in [2]$, $X \subseteq B_i$, and
	$f \in \Hom{\StructB_i[X]}{\StructA_i}$.
	Then $f \in \Hom{\StructB[X]}{\restrict{\StructA}{i}}$.
\end{lemma}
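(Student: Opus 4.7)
The plan is to unfold the definitions and check the homomorphism condition relation by relation. Since $X \subseteq B_i$, the substructure $\StructB[X]$ inherits only those tuples of $R^{\StructB}$ whose entries all lie in $X \subseteq B_i$, so I would analyze each relation symbol $R \in \sig = \sig_1 \cup \sig_2 \cup \{S\}$ separately.

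First, for $R \in \sig_i$, the relation $R^{\StructB}$ equals $R^{\StructB_i}$ by definition of $\StructB$, and the tuples of $R^{\StructB_i}$ lying entirely in $X$ are exactly the tuples of $R^{\StructB_i[X]}$. Since $f$ is a homomorphism $\StructB_i[X] \to \StructA_i$, it sends these tuples into $R^{\StructA_i} \subseteq R^{\StructA}$, and since $f(X) \subseteq A_i \subseteq A_i \cup \{c_i\}$, the image lies in the universe of $\restrict{\StructA}{i}$. Hence $f$ respects $R$ as a map into $\restrict{\StructA}{i}$.

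Next, for $R \in \sig_{3-i}$, the relation $R^{\StructB} = R^{\StructB_{3-i}}$ consists of tuples over $B_{3-i}$, which is disjoint from $B_i \supseteq X$. Therefore $R^{\StructB[X]}$ is empty and there is nothing to verify. The same reasoning applies to the binary symbol $S$: we have $S^{\StructB} = B_1 \times B_2$, so every tuple in $S^{\StructB}$ has exactly one entry in each $B_j$; since $X$ is contained in a single $B_i$, no tuple of $S^{\StructB}$ is contained in $X$, and so $S^{\StructB[X]}$ is empty as well.

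There is no real obstacle here; the content is just careful bookkeeping of which tuples survive the restriction to $X \subseteq B_i$. The only thing worth stating explicitly is that $f$ has codomain $A_i \subseteq A_i \cup \{c_i\}$, which is the universe of $\restrict{\StructA}{i}$, so the claim $f \in \Hom{\StructB[X]}{\restrict{\StructA}{i}}$ is well-typed and follows.
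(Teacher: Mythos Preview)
Your proof is correct and follows essentially the same approach as the paper's: handle $S$ and the $\sig_{3-i}$-relations by observing that their restrictions to $X \subseteq B_i$ are empty, and handle the $\sig_i$-relations using that $f$ is already a homomorphism $\StructB_i[X] \to \StructA_i$. Your version is simply more explicit than the paper's two-sentence argument, in particular spelling out the $\sig_{3-i}$ case and the codomain check $f(X) \subseteq A_i \subseteq A_i \cup \{c_i\}$.
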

\begin{proof}
	The relation $S$ is clearly preserved
	since $S^{\StructB[X]} =\emptyset$.
	Because $R^\StructB \subseteq B_i^{\arity{R}}$,
	the map $f$ also preserves $R$.
\end{proof}

\begin{lemma}
	\label{lem:hom-or-extend}
	Let $i \in [2]$.
	For every $X \subseteq B_i$ and $f\in \Hom{\StructB_i[X]}{\restrict{\StructA}{i}[(W_i\cup \set{c_i})]}$,
	the map $g \colon B_i \to W_i \cup \set{c_i}$ defined by
	\begin{align*}
		g(b) := \begin{cases}
			f(b) & \text{if } b \in X, \\
			c_i	& \text{otherwise.}
		\end{cases}
	\end{align*}
	is a homomorphism in  $\Hom{\StructB[B_i]}{\restrict{\StructA}{i}[W_i\cup \set{c_i}]}$.
	It satisfies in particular $\restrict{g}{X} = f$. 
\end{lemma}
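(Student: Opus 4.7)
The plan is to verify directly that $g$ preserves each relation of $\StructB[B_i]$, with the main work being the $\sig_i$-relations; the statement $\restrict{g}{X}=f$ is immediate from the case distinction in the definition of $g$.

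First, I would dispose of the ``trivial'' relations. For any $R\in\sig_{3-i}$, observe that $R^{\StructB}=R^{\StructB_{3-i}}\subseteq B_{3-i}^{\arity{R}}$, which is disjoint from $B_i^{\arity{R}}$ because the universes of $\StructB_1$ and $\StructB_2$ are disjoint. Hence $R^{\StructB[B_i]}=\emptyset$ and there is nothing to check. By the same disjointness, $S^{\StructB[B_i]}=(B_1\times B_2)\cap B_i^2=\emptyset$, so $S$ imposes no condition either.

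The heart of the argument is the case $R\in\sig_i$. Here $R^{\StructB[B_i]}=R^{\StructB_i}$, so I would take an arbitrary tuple $\tup{b}\in R^{\StructB_i}$ and split on whether all entries lie in $X$. If $\tup{b}\in X^{\arity{R}}$, then $g(\tup{b})=f(\tup{b})$, and since $f$ is a homomorphism into $\restrict{\StructA}{i}[W_i\cup\set{c_i}]$, this lies in $R^{\restrict{\StructA}{i}[W_i\cup\set{c_i}]}$. Otherwise some entry of $\tup{b}$ lies outside $X$, which is sent to $c_i$ by $g$; every entry of $g(\tup{b})$ is therefore in $W_i\cup\set{c_i}$ (the entries coming from $X$ land there by hypothesis on $f$, and the rest are $c_i$), and at least one entry equals $c_i$. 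Now I appeal to the definition $R^\StructA:=R^{\StructA_i}\cup\set{c_i}^{\arity{R}}\cup W_i^{\arity{R},c_i}$: if all entries of $g(\tup{b})$ equal $c_i$ it lies in the middle summand, otherwise it contains some element of $W_i$ together with $c_i$ and so lies in $W_i^{\arity{R},c_i}$. In either case $g(\tup{b})\in R^\StructA\cap(W_i\cup\set{c_i})^{\arity{R}}$, which is exactly the interpretation of $R$ in the induced substructure on $W_i\cup\set{c_i}$.

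The ``main obstacle'' is really just bookkeeping: one must notice that the added tuples $W_i^{\arity{R},c_i}$ in the definition of $R^\StructA$ are precisely what is needed to absorb mixed images that combine values of $f$ in $W_i$ with the default value $c_i$. Once this is recognised, the verification is a direct case analysis and no further structural argument about $\StructB_i$ or $\StructA_i$ is required.
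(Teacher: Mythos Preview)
Your proposal is correct and follows essentially the same approach as the paper's proof: dispose of $S$ and the $\sig_{3-i}$-relations as empty on $B_i$, then do a case analysis on whether the entries of a tuple $\tup{b}\in R^{\StructB_i}$ lie in $X$. If anything, your treatment of the mixed case is slightly more careful than the paper's, since you explicitly allow for the possibility that $f$ itself sends some entries in $X$ to $c_i$ (and hence the image could be all $c_i$ even when some entries are outside $X$), whereas the paper's three-way split tacitly assumes an element of $W_i$ appears in that case.
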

\begin{proof}
	Clearly, $g$ preserves $S$ since $S^\StructB \cap B_i^2 = \emptyset$.
	Let $R \in \sig_i$ and $\tup{b} \in R^\StructB = R^{\StructB_i}$.
	\begin{itemize}
		\item If all elements of $\tup{b}$ are contained in $X$,
		then $g(\tup{b}) = f(\tup{b}) \in R^{\StructA_i} \subseteq R^\StructA$.
		\item If no elements of $\tup{b}$ are contained in $X$,
		then $g(\tup{b}) \in \set{c_i}^{\arity{R}} \subseteq R^\StructA$.
		\item Otherwise, some element of $\tup{b}$ is contained in $X$
		but not all of them. This means that $g(\tup{b})$ contains~$c_i$ at least once
		and at least one element of $W_i$.
		Hence $g(\tup{b}) \in W_i^{\arity{R},c_i} \subseteq R^\StructA$. \qedhere
	\end{itemize}
\end{proof}

\begin{lemma}
	\label{lem:hom-or-compose}
	Let $X_i \subseteq B_i$ and $f_i \in \Hom{\StructB[X_i]}{\restrict{\StructA}{i}}$ for both $i\in[2]$.
	Let $f\colon X_1\cup X_2 \to A$ be the map defined by each $f_i$ on $X_i$ for $i \in[2]$.
	If there is an $i\in[2]$ such that $c_i \notin f_i(X_i)$
	and $f_{3-i}(B_{3-i}) \subseteq W_{3-i} \cup \set{c_{3-i}}$,
	then $f \in \Hom{\StructB[X_1\cup X_2]}{\StructA}$.
\end{lemma}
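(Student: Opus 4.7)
The plan is to verify directly that $f$ preserves every relation of $\sig = \sig_1 \cup \sig_2 \cup \{S\}$ on the substructure $\StructB[X_1 \cup X_2]$. Since the vocabularies $\sig_1$ and $\sig_2$ are disjoint and the universes $B_1, B_2$ are disjoint, the relations of $\sig_i$ in $\StructB$ are concentrated on $B_i$; so checking preservation of $R \in \sig_i$ reduces to checking that $f_i$ preserves $R$, which is given. The only genuinely new relation introduced by the or-construction is $S$, and this is where the asymmetric hypothesis on the indices is used.

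First I would handle the relations from $\sig_1$ and $\sig_2$. Pick any $R \in \sig_i$ and any tuple $\bar{b} \in R^{\StructB[X_1 \cup X_2]}$. Because $R^{\StructB} = R^{\StructB_i} \subseteq B_i^{\arity{R}}$, all entries of $\bar{b}$ lie in $X_i$, so $f(\bar{b}) = f_i(\bar{b}) \in R^{\restrict{\StructA}{i}} \subseteq R^{\StructA}$ using that $f_i$ is a homomorphism into $\restrict{\StructA}{i}$.

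The main step is preservation of $S$. By construction $S^{\StructB[X_1 \cup X_2]} = X_1 \times X_2$, so I need to show $(f_1(x_1), f_2(x_2)) \in S^{\StructA}$ for every $(x_1, x_2) \in X_1 \times X_2$. Recall that
\[
S^{\StructA} = \bigl(A_1 \times (W_2 \cup \{c_2\})\bigr) \cup \bigl((W_1 \cup \{c_1\}) \times A_2\bigr).
\]
I do the case distinction on the index $i$ from the hypothesis. If $i = 1$, then $c_1 \notin f_1(X_1)$, hence $f_1(x_1) \in A_1$, and $f_2(X_2) \subseteq W_2 \cup \{c_2\}$, hence $f_2(x_2) \in W_2 \cup \{c_2\}$; together $(f_1(x_1), f_2(x_2))$ lies in the first summand of $S^{\StructA}$. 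The case $i = 2$ is symmetric and lands the pair in the second summand.

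The only subtlety — and the ``main obstacle,'' if there is one — is keeping the hypothesis straight: the $c_i$-avoidance sits on the side $i$ chosen in the hypothesis, while the containment in $W_{3-i} \cup \{c_{3-i}\}$ sits on the opposite side. This exactly mirrors the two summands in the definition of $S^{\StructA}$, so the construction of the or-template was tailored so that these two conditions suffice. No further work is required, since the substructure $\StructB[X_1 \cup X_2]$ contains no $S$-edges other than those in $X_1 \times X_2$ and no relational tuples crossing the $B_1/B_2$ boundary.
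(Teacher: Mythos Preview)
Your proof is correct and follows essentially the same approach as the paper's: both verify preservation of each $R \in \sig_i$ by noting that $R^{\StructB} \subseteq B_i^{\arity{R}}$ so that $f_i$ handles these tuples, and then treat $S$ by the case distinction on $i$, landing the image of $X_1 \times X_2$ in the appropriate summand of $S^{\StructA}$. Your write-up is a bit more explicit (spelling out both cases and justifying $S^{\StructB[X_1\cup X_2]} = X_1 \times X_2$), but the argument is the same.
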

\begin{proof}
	Let $j\in[2]$ and $R \in \sig_j$.
	Because $R^\StructB \subseteq B_j^{\arity{R}}$,
	the map $f$ also preserves $R$.
	It remains to show that $f$ preserves $S$.
	Assume that $c_1 \notin f_1(X_1)$
	and $f_{2}(X_{2}) \subseteq W_{2} \cup \set{c_{2}}$ (the case for $X_1$ and $X_2$ swapped is similar).
	Then, $f(S^\StructB \cap(X_1 \times X_2)) \subseteq A_1 \times (W_{2} \cup \set{c_2}) \subseteq S^{\StructA}$
	and thus the relation $S$ is preserved.
\end{proof}

\begin{corollary}
	\label{cor:hom-or-compose-simple}
	Let $i \in [2]$,  $f_i \in \Hom{\StructB_i}{\StructA_i}$, $X \subseteq B_{3-i}$ and 
	$f_{3-i} \in \Hom{\StructB_{3-i}[X]}{\StructA_{3-i}[W_{3-i}]}$.
	Then there is a $g \in \Hom{\StructB}{\StructA}$
	that agrees with $f_i$ on $B_i$ and with $f_{3-i}$ on $X$.
\end{corollary}
\begin{proof}
	By Lemma~\ref{lem:hom-or-embedd}, we have $f_i \in \Hom{\StructB[B_i]}{\restrict{\StructA}{i}}$.
	By Lemma~\ref{lem:hom-or-extend},
	$f_{3-i}$ extends to some $f' \in \Hom{\StructB[B_{3-i}]}{\restrict{\StructA}{3-i}[W_{3-i} \cup \set{c_{3-i}}]}$.
	Together, the homomorphisms$f_i$ and $f'$ yield the desired homomorphism $g$ by Lemma~\ref{lem:hom-or-compose}.
\end{proof}

\noindent Next, we show that the homomorphism or-construction is compatible
with $k$-consistency and solving $\cspiso{k}{\StructA}{\StructB}$
in the sense that if $k$-consistency
accepts $\StructB_i$, or $\cspiso{k}{\StructA_i}{\StructB_i}$
is satisfiable,
then $k$-consistency accepts $\StructB$, or $\cspiso{k}{\StructA}{\StructB}$
is satisfiable, respectively.

\begin{lemma}
	\label{lem:hom-or-k-consistency}
	Let $k \in \nat$, $X_i \subseteq B_i$  and $f_i \in \Hom{\StructB[X_i]}{\restrict{\StructA}{i}}$ for both $i\in[2]$
	such that $|X_1 \cup X_2|\leq k$.
	Let $f \colon X_1\cup X_2 \to A$ be the map induced by $f_1$ and $f_2$.
	If there is an $i\in[2]$ such that  
	\begin{itemize}
		\item $f_i \in \kcol{k}{\StructA_i}{\StructB_i}(X_i)$
		(so in particular $c_i \notin f_i(X_i)$),
		\item $f_{3-i}(X_{3-i}) \subseteq W_{3-i} \cup \set{c_{3-i}}$, and
		\item for $X_{3-i}' := \inv{f_{3-i}}(W_{3-i})$ we have $\restrict{f_{3-i}}{X_{3-i}'} \in \kcol{k}{\StructA_{3-i}}{\StructB_{3-i}}(X_{3-i}')$,
	\end{itemize}
	then $f \in \kcol{k}{\StructA}{\StructB}(X_1 \cup X_2)$.
\end{lemma}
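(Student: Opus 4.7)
The plan is to exploit the characterization of $\kcol{k}{\StructA}{\StructB}$ as the greatest family of partial-homomorphism sets satisfying down-closure and the forth-condition of $k$-consistency. It then suffices to exhibit any such family that contains the given $f$, and by maximality this family will be included in $\kcol{k}{\StructA}{\StructB}$. Concretely, I would define, for each $Y \in \tbinom{B}{\leq k}$, a set $\Hh(Y)$ consisting of those $g \in \Hom{\StructB[Y]}{\StructA}$ for which some $j \in [2]$ witnesses an analogue of the three bullet conditions of the statement, with $X_i$ and $f_i$ replaced by $Y \cap B_j$ and the corresponding restrictions of $g$. That each such $g$ really is a $\sig$-homomorphism into $\StructA$ follows from Lemmas~\ref{lem:hom-or-embedd} and~\ref{lem:hom-or-compose}, and the given $f$ lies in $\Hh(X_1 \cup X_2)$ with witness $i$ by hypothesis.

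Down-closure of $\Hh$ is routine: the same witness $j$ is preserved under restriction, since the down-closures of $\kcol{k}{\StructA_j}{\StructB_j}$ and $\kcol{k}{\StructA_{3-j}}{\StructB_{3-j}}$ handle bullets (i) and (iii), while the inclusion in (ii) only tightens under restriction. For the forth-condition, it suffices by induction to extend $g \in \Hh(Y)$ by a single new element $b \in B \setminus Y$. When $b \in B_j$, I would apply the forth-condition of $\kcol{k}{\StructA_j}{\StructB_j}$ to lift $g|_{Y \cap B_j}$ to a $k$-consistent extension and then recombine with $g|_{Y \cap B_{3-j}}$ via Lemma~\ref{lem:hom-or-compose}. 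When $b \in B_{3-j}$, the natural move is simply to set $g'(b) := c_{3-j}$; the preimage of $W_{3-j}$ is then unchanged, so bullets (ii) and (iii) for the extension hold automatically with the same witness $j$.

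The main obstacle is verifying that this naive assignment $b \mapsto c_{3-j}$ actually yields a valid homomorphism $\StructB[Y \cup \set{b}] \to \StructA$. This reduces to a careful case analysis on the definitions of the relations in $\StructA$: for $R \in \sig_{3-j}$ containing a tuple with $b$, the new image tuples lie in $\set{c_{3-j}}^{\arity{R}} \cup W_{3-j}^{\arity{R}, c_{3-j}} \subseteq R^\StructA$; for $S$-pairs joining $b$ to the active side, the crucial input from bullet (i) is that $c_j$ never occurs in the image of $g|_{Y \cap B_j}$, so the image pairs are always of the form $(a_j, c_{3-j})$ or $(c_{3-j}, a_j)$ with $a_j \in A_j$, which is permitted by the definition of $S^\StructA$. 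Once this check is completed, $\Hh$ satisfies both closure properties, and the inclusion $\Hh \subseteq \kcol{k}{\StructA}{\StructB}$ yields $f \in \kcol{k}{\StructA}{\StructB}(X_1 \cup X_2)$, as desired.
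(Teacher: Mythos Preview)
Your proposal is correct and follows essentially the same approach as the paper: define the family $\Hh$ of all partial homomorphisms witnessed by some $j$ satisfying the three bullets, verify it is closed under restriction and single-element extension (using the forth-condition of $\kcol{k}{\StructA_j}{\StructB_j}$ on the active side and $b \mapsto c_{3-j}$ on the passive side), and conclude by maximality of $\kcol{k}{\StructA}{\StructB}$. The only cosmetic difference is that the paper packages your inline homomorphism check for the $b \mapsto c_{3-j}$ extension into a separate lemma (Lemma~\ref{lem:hom-or-extend}) rather than redoing the case analysis on $R^\StructA$ and $S^\StructA$.
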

\begin{proof}
	For a set $Z \subseteq B$, let $Z_j := Z \cap B_j$ for both $j \in[2]$.
	For all $X \subseteq \tbinom{B}{\leq k}$,
	let  $H(X) \subseteq \Hom{\StructB[X]}{\StructA}$ be the set 
	of all $f \colon X \to A$ for which there is an $i\in[2]$ such that
	$\restrict{f}{X_i} \in \kcol{k}{\StructA_i}{\StructB_i}(X_i)$,
	we have $f_{3-i}(B_{3-i}) \subseteq (W_{3-i} \cup \set{c_{3-i}})$,
	and for $X_{3-i}' := \inv{f_{3-i}}(W_{3-i})$
	we have $\restrict{f}{X_{3-i}'} \in \kcol{k}{\StructA_{3-i}}{\StructB_{3-i}}(X_{3-i}')$.
	By Lemma~\ref{lem:hom-or-extend}, the function $f_{3-i}$ satisfies
	$f_{3-i} \in \Hom{\StructB_{3-i}[X_{3-i}]}{\restrict{\StructA}{3-i}}$.
	Hence, the function $f$ is indeed a homomorphism in $\Hom{\StructB[X_1\cup X_2]}{\StructA}$
	by  Lemma~\ref{lem:hom-or-compose}.
	We show that the family of the $H(X)$ satisfies the Forth-Condition and the Down-Closure. Hence, the partial homomorphisms in the $H(X)$ are not
	discarded by the $k$-consistency algorithm.
	
	Let $Y \subset X \subseteq \tbinom{B}{\leq k}$.
	We first show the Forth-Condition.
	Let $f \in H(Y)$
	and $f_j = \restrict{f}{Y_j}$ for both $j\in[2]$.
	By construction, we have $\restrict{f}{X_i} \in \kcol{k}{\StructA_i}{\StructB_i}(X_i)$,
	$f_{3-i}(B_{3-i}) \subseteq (W_{3-i} \cup \set{c_{3-i}})$
	and for $X_{3-i}' := \inv{f_{3-i}}(W_{3-i})$ that
	$\restrict{f}{X_{3-i}'} \in \kcol{k}{\StructA_{3-i}}{\StructB_{3-i}}(X_{3-i}')$.
	By the Forth-Condition for $\StructB_i$, there is a $g_i \in  \kcol{k}{\StructA_i}{\StructB_i}(X_i)$ such that $\restrict{g}{X_i} = f_i$.
	Let $g_{3-i}$ be the extension of $f_{3-i}$ to $X_{3-i}$
	by $f_{3-i}(b) = c_{3-i}$ for all $b \in X_{3-i} \setminus Y_{3-i}$.
	Then the map $g\colon X_1\cup X_2 \to A$ induced by $g_1$ and $g_2$
	is in $H(X)$.

	We secondly show the Down-Closure.
	Let $g \in H(X)$, and $g_j := \restrict{g}{Y_j}$ for both $j \in[2]$.
	Let $g' \colon Y \to A$ be the function induced by $g_1$ and $g_2$.
	By construction, there is an $i\in[2]$ such that we have $g_i \in \kcol{k}{\StructA_i}{\StructB_i}(X_i)$,
	$g_{3-i}(B_j) \subseteq (W_{3-j} \cup \set{c_{3-j}}$,
	and for $X_{3-i}' := \inv{g_{3-i}}(W_{3-i})$ that
	$\restrict{g_{3-i}}{X_{3-i}'} \in \kcol{k}{\StructA_{3-i}}{\StructB_{3-i}}(X_{3-i}')$.
	By the Down-Closure for the $\StructB_j$,
	we have that $\restrict{g_i}{Y_i} \in \kcol{k}{\StructA_i}{\StructB_i}(Y_i)$
	and $\restrict{g_{3-i}}{Y_{3-i} \cap X_{3-i}'} \in \kcol{k}{\StructA_{3-i}}{\StructB_{3-i}}(Y_{3-i} \cap X_{3-i}')$.
	Clearly, we also have $\restrict{g_{3-i}}{Y_{3-i} \cap X_{3-i}'}(B_j) \subseteq (W_{3-j} \cup \set{c_{3-j}})$.
	So indeed, $g' \in H(Y)$.	
\end{proof}

\begin{lemma}
	\label{lem:hom-or-solution}
	Let  $i \in [2]$, let $\Phi$ be a solution to $\cspiso{k}{\StructA_i}{\StructB_i}$, and let $h \in \Hom{\StructB_{3-i}}{\restrict{\StructA}{3-i}[W_{3-i}\cup\set{c_{3-i}}]}$.
	Then the following map $\Psi$ is a solution to $\cspiso{k}{\StructA}{\StructB}$.
	Let $X \in \tbinom{\StructB}{\leq k}$, $X_j := X \cap B_j$ for $j\in[2]$, and $f \in \Hom{\StructB[X]}{\StructA}$.
	We set
	\[
	\Psi(x_{X,f}) := \begin{cases}
		\Phi(x_{X_i, \restrict{f}{X_i}}) & \text{if }
		\restrict{f}{X_i} \in \Hom{\StructB_i[X_i]}{\StructA_i} \text { and } \restrict{f}{X_{3-i}} = \restrict{h}{X_{3-i}},\\
		0 & \text{otherwise.}
	\end{cases}
	\]
	In particular, if $\Phi$ is an integral solution or a $p$-solution,
	then $\Psi$ is an integral solution or $p$-solution, respectively.
\end{lemma}
\begin{proof}
	We show that the equations of Type~\ref{eqn:csp-iso-agree} are satisfied by $\Psi$.
	Let $X \in \tbinom{B}{\leq k}$, $b \in X$, $X_j := X \cap B_j$ for $j\in[2]$,  and $g \in \Hom{\StructB[X \setminus \{b\}]}{ \StructA}$. By definition of $\Psi$, we have
	\begin{align*}
		\sum_{\substack{f \in \Hom{\StructB[X]}{\StructA},\\\restrict{f}{X\setminus \set{b}} = g}} \Psi(x_{X,f}) = 
		\sum_{\substack{f \in \Hom{\StructB[X]}{\StructA},\\\restrict{f}{X\setminus \set{b}} = g,\\
				\restrict{f}{X_i} \in \Hom{\StructB_i[X_i]}{\StructA_i},\\
				\restrict{f}{X_{3-i}} = \restrict{h}{X_{3-i}}
		}}
		\Phi(x_{X_i,\restrict{f}{X_i}}). \tag{$\star$}
	\end{align*}
	We make a case distinction.
	
	\begin{enumerate}
		\item Assume that $\restrict{g}{X_{3-i} \setminus \set{b}} \neq \restrict{h}{X_{3-i}\setminus \set{b}}$.
		Then $\restrict{f}{X_{3-i}} \neq \restrict{h}{X_{3-i}}$
		for every $f \in \Hom{\StructB[X]}{\StructA}$ such that $\restrict{f}{X\setminus \set{b}} = g$ 
		because $\restrict{f}{X_{3-i} \setminus \set{b}} =  \restrict{g}{X_{3-i} \setminus \set{b}} \neq \restrict{h}{X_{3-i}\setminus \set{b}}$. This implies that the summation in $(\star)$ sums over zero many numbers, and thus 
		\begin{align*}
			(\star) = 0 = \Psi(x_{X\setminus \set{b}, g}).
		\end{align*}
		\item Assume that $\restrict{g}{X_i \setminus \set{b}} \not\in \Hom{\StructB_i[X_i \setminus \set{b}]}{\StructA_i}$. Then $\restrict{f}{X_i} \not\in \Hom{\StructB_i[X_i]}{\StructA_i}$
		for every  $f \in \Hom{\StructB[X], \StructA}$ such that $\restrict{f}{X\setminus\set{b}} = g$
		because $\restrict{f}{X_i\setminus \set{b}} = \restrict{g}{X_i \setminus \set{b}} \not\in \Hom{\StructB_i[X_i \setminus \set{b}]}{\StructA_i}$.
		We again have $(\star) = 0 = \Psi(x_{X\setminus \set{b}, g})$ as in the case before.
		\item Assume that $b \in B_{3-i}$.
		By the cases before, we can assume that 
		$\restrict{g}{X_{3-i} \setminus \set{b}}= \restrict{h}{X_{3-i}\setminus \set{b}}$ and $\restrict{g}{X_i} \in \Hom{\StructB_i[X_i]}{\StructA_i}$.
		Then there is at most one $f'\in \Hom{\StructB[X]}{\StructA}$
		such that $\restrict{f'}{X\setminus \set{b}} = g$ and $\restrict{f'}{X_{3-i}} = \restrict{h}{X_{3-i}}$, namely the extension of $g$ by
		mapping $b$ to $h(b)$.
		By Lemmas~\ref{lem:hom-or-embedd} and~\ref{lem:hom-or-compose},
		we indeed have $f'\in \Hom{\StructB[X]}{\StructA}$.
		Thus, $f'$ is unique.
		We also have  $\restrict{f'}{X_i} \in \Hom{\StructB_i[X_i]}{\StructA_i}$
		because $\restrict{f'}{X_i} = \restrict{g}{X_i} \in \Hom{\StructB_i[X_i]}{\StructA_i}$. It follows
		\begin{align*}
			(\star) = \Phi(x_{X_i,\restrict{f'}{X_i}}) = \Phi(x_{X_i,\restrict{g}{X_i}}) = \Psi(x_{X,g}).
		\end{align*}
		\item Lastly, we assume that $b \in B_i$.
		By the cases before, we may assume that 
		$\restrict{g}{X_{3-i}} = \restrict{h}{X_{3-i}}$ and $\restrict{g}{X_i \setminus \set{b}} \in \Hom{\StructB_i[X_i \setminus \set{b}]}{\StructA_i}$.
		Since, as in the case before, there is a unique extension of a partial homomorphism in $\Hom{\StructB_i[X_i]}{\StructA_i}$
		to a partial homomorphism in $\Hom{\StructB[X]}{\StructA_i}$
		that agrees with $h$ on $X_{3-i}$, we have
		\begin{align*}
			(\star) = \sum_{\substack{
					\restrict{f}{X_i} \in \Hom{\StructB_i[X_i]}{\StructA_i},\\
					\restrict{f}{X_i \setminus \set{b} } = \restrict{g}{X_i \setminus \set{b}},\\
					\restrict{f}{X_{3-i}} = \restrict{h}{X_{3-i}}
			}}
			\Phi(x_{X_i,\restrict{f}{X_i}}) =  \Phi(x_{X_i\setminus {b}, \restrict{g}{X_i \setminus {b}}}) = \Psi(x_{X, g})
		\end{align*}
		because $\Phi$ is a solution to $\cspiso{k}{\StructA_i}{\StructB_i}$.
	\end{enumerate}
	It remains to verify that Equation~\ref{eqn:csp-iso-empty} is satisfied: $\Psi(x_{\emptyset,\emptyset}) = \Phi(x_{\emptyset,\emptyset}) = 1$
	by the definition of $\Psi$ and because $\Phi$ is a solution to $\cspiso{k}{\StructA_i}{\StructB_i}$.
\end{proof}

\subsection{The Tractable OR-construction}
\label{app:tractable-or}
We now consider the or-construction for the specific setting $W_1=W_2=\emptyset$.
For this choice of~$W_1$ and~$W_2$, the homomorphism or-construction yields a tractable CSP
if $\CSP{\StructA_i}$ is tractable for both $i\in[2]$. 
We refer to this construction as the \defining{tractable homomorphism or-construction} and write $\ORT{\StructA_1,\StructA_2} := \ORparam{\StructA_1,\StructA_2,\emptyset,\emptyset}$. This will later be used in the proof of Theorem \ref{thm:mainResultInformal}.
We start with corollaries from the lemmas of the previous subsection:

	\begin{lemma}
	\label{lem:hom-or-tractable-k-consistency}
	Let $\StructA=\ORT{\StructA_1,\StructA_2}$, $\StructB = \OR{\StructB_1,\StructB_2}$,
	$k \in \nat$, $i\in [2]$, $X \in \tbinom{B}{\leq k}$,
	and $f \in \Hom{\StructB[X]}{\StructA}$.
	If $f(X\cap B_{3-i}) = \set{c_{3-i}}$ and
	$\restrict{f}{X\cap B_i} \in \kcol{k}{\StructA_i}{\StructB_i}(X\cap B_i)$,
	then $f \in \kcol{k}{\StructA}{\StructB}(X)$.
\end{lemma}
\begin{proof}
	Immediately follows from Lemma~\ref{lem:hom-or-k-consistency}.
\end{proof}

\begin{lemma}
	\label{lem:hom-or-tractable-solution}
	Let $\StructA=\ORT{\StructA_1,\StructA_2}$, $\StructB = \OR{\StructB_1,\StructB_2}$, 
	$i \in [2]$, and $\Phi$ be a solution to $\cspiso{k}{\StructA_i}{\StructB_i}$.
	Then there is a solution $\Psi$ to $\cspiso{k}{\StructA}{\StructB}$
	defined, for every $X \in \tbinom{B}{\leq k}$
	and $f \in \Hom{\StructB[X]}{\StructA}$,
	by $\Psi(x_{X,f}) = \Phi(x_{X\cap B_i}, \restrict{f}{X\cap B_i})$ if  $f(X \cap B_{3-i}) = \set{c_{3-i}}$
	and $\Psi(x_{X,f}) = 0$ otherwise.
	In particular, $\Psi$ is a $p$-solution or integral,
	if $\Phi$ is a $p$-solution or integral, respectively.
\end{lemma}
\begin{proof}
	Follows from Lemma~\ref{lem:hom-or-solution}
	and the fact that the map $B_{3-i} \to \set{c_{3-i}}$ is a partial homomorphism.
\end{proof}

\noindent We now prove that the tractable or-construction indeed deserves its name because it generally preserves tractability of $\StructA_1$ and $\StructA_2$. In the special case of Maltsev templates, also this stronger condition is preserved: 

	\begin{lemma}
	\label{lem:maltsevForOrConstruction}
	If $\StructA_1$ and $\StructA_2$ have a Maltsev polymorphism, then $\ORT{\StructA_1,\StructA_2}$ has one.
\end{lemma}	
\begin{proof}
	Let $\StructA = \ORT{\StructA_1,\StructA_2}$.
	Let $f_1, f_2$ be the Maltsev polymorphisms of $\StructA_1, \StructA_2$, respectively.
	Define $f : A^3 \to A$ as follows: 
	\begin{align*}
		f(x,y,z) := \begin{cases}
			f_i(x,y,z) & \text{if } x,y,z \in A_i, \\
			c_{i} & \text{if } x,y,z \in A_i \cup \{c_i\} \text{ and exactly one or all are equal to } c_i,\\ 
			a& \text{otherwise, where } a \text{ is the left-most input}\\
			&\text{that does not occur exactly twice}
		\end{cases}
	\end{align*}
	It can be checked that $f$ is an (idempotent) Maltsev operation:  If all three inputs are in $A_i$, then this is inherited from $f_i$. If all inputs are in $A_i \cup \{c_i\}$, and one or three of them are equal to $c_i$, then the second case applies and so, $f(y,x,x) = f(x,x,y) = y$ and $f(x,x,x) = x$ are ensured. If all inputs are in $A_i \cup \{c_i\}$ and exactly two of them are equal to $c_i$, then the third case applies and produces the correct outcome. If the inputs are mixed between $A_1 \cup \{c_1\}$ and $A_2 \cup \{c_2\}$, then also the third case ensures $f(y,x,x) = f(x,x,y) = y$.\\
	The relation $R^{\StructA}$ is preserved by $f$: If $f$ is applied to three tuples in $R^{\StructA_i}$, then it behaves like $f_i$, which is a polymorphism of $\StructA_i$, so it will produce a tuple in $R^{\StructA_i} \subseteq R^{\StructA}$. If $f$ is applied to two tuples in $R^{\StructA_i}$ and the third tuple $(c_i,c_i,c_i)$, then the output will be $(c_i,c_i,c_i) \in R^{\StructA}$. The same applies if all three input tuples are $(c_i,c_i,c_i)$. If two of the input tuples are $(c_i,c_i,c_i)$ and the third one is $\bar{a} \in R^{\StructA_i}$, then $f$ will output $\bar{a}$, which is correct. 
	Also $S^{\StructA}$ is preserved by $f$: Let $\bar{a}_1, \bar{a}_2, \bar{a}_3 \in S^{\StructA}$. Each of these pairs is either of the type $(a,c_2)$ or $(c_1,a)$, with $a \in A_1$ or $a \in A_2$, respectively. If all three pairs are of the same type, then $f$ will also produce a pair of that type, which is in $S^{\StructA}$. Otherwise, two of the three pairs are of type, say, $(a,c_2)$, and the other one is of type $(c_1,a)$. Then $f$ produces a pair of the type $(c_1, a)$, with $a \in A_2$, so this is in $S^{\StructA}$. 
\end{proof}	

\noindent In particular, the above lemma shows tractability of the or-construction in the case that $\CSP{\StructA_1}$ and $\CSP{\StructA_2}$ have a Maltsev polymorphism.
The next lemma shows this for the general case that $\CSP{\StructA_1}$ and $\CSP{\StructA_2}$ are tractable by providing a polynomial-time algorithm
for $\CSP{\ORT{\StructA_1,\StructA_2}}$.
More strongly, we will show, using that algorithm, that if both $\CSP{\StructA_i}$
are definable in a logic subsuming inflationary fixed-point logic,
then so is $\CSP{\ORT{\StructA_1,\StructA_2}}$.

\begin{lemma}
	\label{lem:hom-or-tractable}
	If $\CSP{\StructA_1}$ and $\CSP{\StructA_2}$ are tractable, then $\CSP{\ORT{\StructA_1,\StructA_2}}$ is tractable.
\end{lemma}
\begin{proof}
	Let $\StructA = \ORT{\StructA_1,\StructA_2}$.
	Assume $\StructC$ is a $\CSP{\StructA}$-instance.
	A vertex $\vertA$ of $\StructC$ is a $\sig_i$-vertex if 
	\begin{itemize}
		\item $\vertA$ is in a tuple in a  $\sig_i$-relation, or
		\item there is a pair $(\vertB_1,\vertB_1) \in S^\StructC$
		with $\vertB_i = \vertA$.
	\end{itemize}
	If there is a vertex that is both a $\sig_1$-vertex and a $\sig_2$-vertex,
	then $\StructC \notin \CSP{\StructA}$.
	So assume this is not the case.
	If a vertex is neither a $\sig_1$-vertex nor a $\sig_2$-vertex,
	then this vertex is an isolated vertex.
	Hence, we can assume that every vertex is either a $\sig_1$-vertex or a $\sig_2$-vertex.
	A $\sig_i$-component of $\StructC$ is a connected component of $\restrict{\StructC}{\sig_i}$ (the reduct of $\StructC$ to the vocabulary $\sig_i$)
	consisting only of $\sig_i$-vertices.
	Note that every $\sig_i$-vertex is in a $\sig_i$-component of $\StructC$. 
	
	Consider the following graph $G_S$.
	The vertices of $G_S$ are the $\sig_1$\nobreakdash-components and the $\sig_2$\nobreakdash-components of $\StructC$.
	There is an edge between a $\sig_1$-component $D$ and a $\sig_2$-component $D'$,
	if $(D\times D') \cap S^\StructC \neq \emptyset$.
	Hence, we can also obtain~$G_S$ by contracting each $\sig_i$-component into a single vertex and only keeping the relation~$S$.
	An $S$-component is a connected component of~$G_S$ when viewing~$S$ as an undirected edge relation.
	For an $S$-component~$D$, denote by $\restrict{D}{i}$ the set of $\sig_i$-vertices contained in~$D$.
	
	We claim that $\StructC \in \CSP{\StructA}$ if and only if
	for every $S$-component $D$ there is an $i\in[2]$, such that $\StructC[\restrict{D}{i}] \in \CSP{\StructA_i}$.
	First assume that $\StructC \in \CSP{\StructA}$,
	witnessed by a homomorphism $f \colon \StructC \to \StructA$.
	Let~$D$ be an $S$-component.
	Then $\StructC[D] \in \CSP{\StructA}$ because CSPs are closed under induced substructures.
	If $D$ contains no $\sig_i$-component,
	then $\StructC[\restrict{D}{i}]$ is trivially in $\CSP{\StructA_i}$.
	So assume $D$ contains both $\sig_1$-components and $\sig_2$-components.
	Let $i\in[2]$ and $D^i \in D$ be some $\sig_i$-component
	and assume there is a vertex $\vertA \in D^i$ such that $f(\vertA) \in A_i$.
	Because $D^i$ is a connected component in $\restrict{\StructC}{\sig_i}$, all vertices in $D^i$ have to be mapped onto $A_i$ by $f$
	(the vertex $c_i$ in $A$ is not connected to the $A_i$ vertices).
	All neighbors of $D^i$ in $G_S$ are $\sig_{3-i}$-components
	(otherwise, a vertex was both, a $\sig_1$-vertex and a $\sig_2$-vertex).
	Let $D^{3-i}$ be such a neighbor of~$D^i$.
	Then there are vertices $\vertA \in D^i$ and $\vertB \in D^{3-i}$
	such that $(\vertA,\vertB) \in S^\StructC$ or $(\vertB,\vertA) \in S^\StructC$ (depending on whether $i=1$ or $i=2$).
	By symmetry, assume $(\vertA,\vertB) \in S^\StructC$.
	Then $f(\vertB) = c_{3-i}$ because vertices of $\StructA_i$ are only connected to $c_{3-i}$ in $S^\StructA$.
	Now because $D^{3-i}$ is a connected component of $\StructC[\restrict{D}{3-i}]$,
	all vertices of $D^{3-i}$ are mapped to $c_{3-i}$ by $f$.
	One similarly shows, that if $D^i$ contains a vertex that is mapped to $c_i$
	by $f$, then all vertices in $D^i$ are mapped to $c_i$ and all
	neighbors of $D^i$ in $G_S$ are mapped onto $A_{3-i}$.
	
	Hence, we can inductively show,
	that if for some $i\in[2]$ one $\sig_i$-component in $D$ is mapped onto $A_i$,
	then all $\sig_i$-components in $D$ are mapped onto $A_i$. 
	This implies that  $\StructC[\restrict{D}{i}] \in \CSP{\StructA_i}$.
	There has to be one $i\in[2]$ such that  a $\sig_i$-component
	is mapped onto $A_i$ because
	we have seen that such a component is either mapped onto $A_i$ or onto $c_i$,
	that neighbored components are mapped on exactly the other option,
	and there are both $\sig_1$-components and $\sig_2$-components in~$C$.
	
	Second, assume that for every $S$-component $D$  there is an $i\in[2]$
	such that $\StructC[\restrict{D}{i}] \in \CSP{\StructA_i}$.
	We construct a homomorphism $f \colon \StructC \to \StructA$.
	For an  $S$-component $D$ of $\StructC$,
	note that $\StructC[D]$ is a connected component of $\StructC$
	because all vertices related by a $\sig_i$-relation are in the same $\sig_i$-component and two $\sig_1$- and $\sig_2$\nobreakdash-components, which are related by $S$, are contained in the same $S$-component.
	Hence, it suffices to show that $\StructC[D] \in \CSP{\StructA}$
	for every $S$-component $D$.
	Let $D$ be an $S$-component of $\StructC$
	and $i\in[2]$ such that $\StructC[\restrict{D}{i}] \in \CSP{\StructA_i}$.
	Essentially by the arguments in the proof of Lemma~\ref{lem:hom-or-construction-correct},
	one shows that the extension of a homomorphism $\StructC[\restrict{D}{i}] \to \StructA_i$ to all $\sig_{3-i}$ vertices by mapping them to $c_{3-i}$
	is a homomorphism $\StructC[D] \to \StructA$.
	
	So the algorithm deciding $\CSP{\StructA}$ works as follows.
	Because both $\CSP{\StructA_i}$ are tractable,
	we pick polynomial-time algorithms for these CSPs.
	First compute the $\sig_i$-components of $\StructC$,
	then the graph $G_S$,
	and then check for every $S$-component $D$ of $G_S$
	wether for some $i\in[2]$
	the algorithm for $\CSP{\StructA_i}$ accepts $\StructC[\restrict{D}{i}]$.
	If this is the case, then accept $\StructC$,
	and otherwise reject it.
	Clearly this algorithm runs in polynomial time
	and by the reasoning before it correctly decides $\CSP{\StructA}$.
\end{proof}

\begin{corollary}
	\label{cor:hom-or-definable}
	Let $L$ be a logic that is at least expressive as inflationary fixed-point logic.
	If $\CSP{\StructA_1}$ and $\CSP{\StructA_2}$ are $L$-definable,
	then $\CSP{\ORT{\StructA_1,\StructA_2}}$ is $L$-definable.
\end{corollary}
\begin{proof}
	It can be easily seen that the algorithm in the proof of Lemma~\ref{lem:hom-or-tractable} is $L$-definable 
	given $L$-formulas defining $\CSP{\StructA_1}$ and $\CSP{\StructA_2}$.
	The algorithm essentially computes different connected components,
	which is definable in inflationary fixed-point logic.
\end{proof}

\noindent We will use the tractable homomorphism or-construction to build
tractable CSPs that are not solved correctly by the algorithms in Theorem \ref{thm:mainResultInformal}.
We consider templates $\StructA_1$ and $\StructA_2$ for which the tractable or-construction $\StructA = \ORT{\StructA_1,\StructA_2}$ will guarantee the existence of integral solutions to $\cspiso{\StructA}{k}{\StructB}$ for certain instances $\StructB = \OR{\StructB_1,\StructB_2} \notin \CSP{\StructA}$. This will in particular be the case even though no such integral solution exists for $\cspiso{\StructA_1}{k}{\StructB_1}$ and $\cspiso{\StructA_2}{k}{\StructB_2}$.
However, the \emph{cohomological $k$-consistency} algorithm will be able to tell that $\cspiso{\StructA_1}{k}{\StructB_1}$ and $\cspiso{\StructA_2}{k}{\StructB_2}$ do not have an integral solution, and this will be sufficient for it to correctly output that $\StructB \notin \CSP{\StructA}$. The next two lemmas are the technical foundation for this and will be used in the proof of the first part of Theorem \ref{thm:mainPowerOfCohomology}.
The crucial point is that the cohomological algorithm considers solutions to $\cspiso{\StructA}{k}{\StructB}$ in which for certain sets $X$, every $f: X \to A$ that has $c_i$ in its image receives value $0$.

\begin{lemma}
	\label{lem:fixingIntegerSolutionInOr}
	Let $k \geq 2$, $i \in [2]$, 
	and $\Phi$ be a solution to $\cspiso{\StructA}{k+1}{\StructB}$.
	If there is a set $Z \in \tbinom{B_i}{\leq k}$ such that
	for every $f\in \Hom{\StructB[Z]}{\StructA}$ with $c_i \in f(Z)$
	it holds that
	$\Phi(x_{Z,f}) = 0$, then $\Phi(x_{X,g}) = 0$
	for every $X \in \tbinom{B_i}{\leq k}$ and every
	$g\in \Hom{\StructB[X]}{\StructA}$ with $c_i \in g(X)$.
\end{lemma}
\begin{proof}
	We first show that every $\Phi$ 
	is only non-zero for $x_{X,f}$, for non-empty $X \subseteq B_i$, if either $f(X) \subseteq A_i$ or $f(X) = \set{c_i}$.
	\begin{claim}
		\label{clm:fixingIntegerSolutionInOr-A}
		Let $X \in \binom{B_i}{\leq k}$, $g \in \Hom{\StructB[X]}{\StructA}$ such that there is a $b \in X$ with $g(b) \in A_i$, and assume there is a $b' \in X$ with $g(b') \notin A_i$. Then $\Phi(x_{X,g}) = 0$.
	\end{claim}
	\begin{claimproof}
		We extend the domain and let $Y = X \cup \{y\}$ for an arbitrary $y \in B_{3-i}$. Since $\Phi$ is a solution, it satisfies Equation \ref{eqn:csp-iso-agree}:
		\begin{align*}
			\Phi(x_{X,g}) = \sum_{f \in \Hom{\StructB[Y]}{\StructA}, f|_X = g} \Phi(x_{X,f}) = 0.  
		\end{align*}
		The last equality holds because the sum is over the empty set.
		Indeed, every partial homomorphism that maps $b \in B_i$ to an element of $A_i$ has to map $y \in B_{3-i}$ to $c_{3-i}$ (because of the relation $S$). But then it also has to map $b' \in B_i$ to an element of $A_i$, again to preserve the relation $S$ between $b'$ and $y$. So~$g$ is not extendable to a partial homomorphism with domain~$Y$.
	\end{claimproof}	
	\noindent Similarly, it is not possible to have non-zero values for partial homomorphisms which do not map any element in $X \subseteq B_i$ to $A_i$.
	\begin{claim}
		\label{clm:fixingIntegerSolutionInOr-B}
		Let $X \in \binom{B_i}{\leq k}$ be non-empty and $g \in \Hom{\StructB[X]}{\StructA}$ such that $g(X) \cap (A_{3-i} \cup \{c_{3-i}\}) \neq \emptyset$. Then $\Phi(x_{X,g}) = 0$.
	\end{claim}
	\begin{claimproof}
		We argue in the same way as in the proof of the previous claim and extend $X$ by some $y \in B_{3-i}$. Since $g$ maps at least one element $x \in X$ to $A_{3-i} \cup \{c_{3-i}\}$, it is impossible to preserve the directed $S^{\StructB}$-edge between $x$ and $y$, so $g$ cannot be extended to a partial homomorphism with domain $Y$.
	\end{claimproof}
	From these two claims it follows that for $X \in \binom{B_i}{\leq k}$, the number $\Phi(x_{X,f})$ can only be non-zero if $f(X) \subseteq A_i$ or $f(X) = \{c_i\}$.
	It thus remains to show that $\Phi(x_{X,f}) = 0$ if $f(X) = \{c_i\}$. For $X = Z$, we have this by assumption of the lemma. For other sets $X$, we use the next claim to propagate this inductively. In the following, we write $X \mapsto c_i$ for the partial homomorphism with domain $X$ that sends every element to $c_i$.
	\begin{claim}
		\label{clm:fixingIntegerSolutionInOr-C}
		Let $X \in \binom{B_i}{\leq k}$ be non-empty such that $\Phi(x_{X,X \mapsto c_i}) = 0$.
		\begin{enumerate}
			\item Let $b \in B_i \setminus X$ and $Y = X \uplus \{b\}$. Then $\Phi(x_{Y,Y \mapsto c_i}) = 0$.
			\item Let $Y \subseteq X$ be non-empty. Then $\Phi(x_{Y,Y\mapsto c_i}) = 0$.
		\end{enumerate}
	\end{claim}
	\begin{claimproof}
		We start with the first part of the claim. Again we use Equation \ref{eqn:csp-iso-agree}:
		\begin{align*}
			0 = \Phi(x_{X,X \mapsto c_i}) &= \sum_{f \in \Hom{\StructB[Y]}{\StructA}, f|_X = g} \Phi(x_{Y,f}) = \\
			&= \Phi(x_{Y, Y \mapsto c_i}) + \sum_{f \in \Hom{\StructB[Y]}{\StructA}, f|_X = g, f(b) \neq c_i} \Phi(x_{Y,f})\\
			&=  \Phi(x_{Y, Y \mapsto c_i}) + 0.
		\end{align*}
		The last equality uses Claim~\ref{clm:fixingIntegerSolutionInOr-A} and~\ref{clm:fixingIntegerSolutionInOr-B}, which tell us that $\Phi$ is zero whenever $f(Y) \neq \{c_i\}$ or $f(Y) \not\subseteq A_i$.  
		The second part of the claim is proved similarly:
		\begin{align*}
			\Phi(x_{Y,Y \mapsto c_i}) &= \sum_{f \in \Hom{\StructB[X]}{\StructA}, f(Y) = \{c_i\}} \Phi(x_{X,f}) = \\
			&= \Phi(x_{X, X \mapsto c_i}) + 0 = 0.
		\end{align*}
		Again we use that by Claim~\ref{clm:fixingIntegerSolutionInOr-A} and~\ref{clm:fixingIntegerSolutionInOr-B}, all extensions of $Y \mapsto c_i$ which do not map all of $X$ to $c_i$ receive the value $0$ in $\Phi$.
	\end{claimproof}
	The lemma follows by inductively applying Claim~\ref{clm:fixingIntegerSolutionInOr-C} and noting that by Claim~\ref{clm:fixingIntegerSolutionInOr-A} and~\ref{clm:fixingIntegerSolutionInOr-B}, $\Phi(x_{X,f}) = 0$ whenever $f(X) \not\subseteq A_i$ or $f(X) \neq \{c_i\}$.
\end{proof}

	\begin{lemma}
	\label{lem:integerSolutionWithLocalFixingSolvesBi}
	Let $k \geq 2$, $i \in [2]$, 
	and $\Phi$ be a solution to $\cspiso{\StructA}{k+1}{\StructB}$.
	If there is a set $X\in \tbinom{B_i}{\leq k}$ such that
	for $f \colon X \to \set{c_i}$ it holds that
	$\Phi(x_{X,f}) = 0$,
	then $\restrict{\Phi}{B_i}$ is a solution to $\cspiso{\StructA_i}{k}{\StructB_i}$.
\end{lemma}
\begin{proof}
	We have to argue that $\restrict{\Psi}{B_i}$ satisfies all equations of Type~\ref{eqn:csp-iso-agree} in $\cspiso{\StructA_i}{k}{\StructB_i}$. So let $X \in \binom{B_i}{\leq k}, b \in X, g \in \Hom{\StructB_i[X \setminus \{b\}]}{\StructA_i}$. The associated equation is
	\begin{align*}
		\sum_{\substack{f \in \Hom{\StructB_i[X]}{\StructA_i},\\ \restrict{f}{X\setminus\set{b}} = g}} x_{X,f} &=  x_{X\setminus{\set{b}},g}  \tag{$\star$}
	\end{align*}
	We know that $\Phi$ satisfies the corresponding equation in $\cspiso{\StructA}{k+1}{\StructB}$, so
	\begin{align*}
		\sum_{\substack{f \in \Hom{\StructB[X]}{\StructA},\\ \restrict{f}{X\setminus\set{b}} = g}} \Phi(x_{X,f}) &=  \Phi(x_{X\setminus{\set{b}},g})  
	\end{align*}
	Lemma \ref{lem:fixingIntegerSolutionInOr} implies that only those extensions $f$ of $g$ with $f(b) \in A_i$ can have a non-zero value in $\Phi$ (this is also true if $g$ has empty domain). These are precisely the $f$ that appear in the sum in $(\star)$. Thus, $\Phi$ also satisfies $(\star)$. 	
	Finally, $\restrict{\Psi}{B_i}$ trivially satisfies Equation~\ref{eqn:csp-iso-empty}.
\end{proof}

\noindent From this lemma it follows that for any algorithm that solves $\cspiso{k}{\StructA}{\StructB}$ while fixing local solutions, the tractable or-construction of two templates is not harder than these templates individually. In particular, the cohomological $k$-consistency algorithm is able to undo the or-construction, and it is therefore not useful to obtain hard examples for that algorithm.
To deal with this and be able to prove the second part of Theorem \ref{thm:mainPowerOfCohomology}, we now sacrifice tractability of the homomorphism or-construction, which will also make it harder for cohomological $k$-consistency.

\subsection{The Intractable OR-construction}
\label{app:intractable-or}
In this section, we fix the setting $W_1 =A_1$ and $W_2 = A_2$.
In this case, the homomorphism or-construction has the drawback to yield an NP-complete CSP even if $\CSP{\StructA_1}$ and $\CSP{\StructA_2}$ are tractable.
But it has the benefit that more partial homomorphisms can be extended to global ones, in particular if $\StructB_1\in\CSP{\StructA_1}$ and $\StructB_2 \notin \CSP{\StructA_2}$,
we can still extend partial homomorphisms $\StructB_2 \to \StructA_2$ to
global homomorphisms.
We set $\ORNPC{\StructA_1,\StructA_2} := \ORparam{\StructA_1,\StructA_2,A_1,A_2}$.
We refer to this as the \defining{intractable homomorphism or-construction}.
We again start with corollaries from the general or-construction.

\begin{lemma}
	\label{lem:hom-or-intractable-compose}
	Let $\StructA=\ORNPC{\StructA_1,\StructA_2}$,
	$\StructB = \OR{\StructB_1,\StructB_2}$,
	$X_i \subseteq B_i$, and $f_i \in \Hom{\StructB_i}{\StructA_i}$ for both $i\in[2]$.
	The map $f\colon X_1\cup X_2 \to A$ induced by $f_1$ and $f_2$
	satisfies $f \in \Hom{\StructB[X_1\cup X_2]}{\StructA}$.
\end{lemma}
\begin{proof}
	The lemma is a consequence of Lemma~\ref{lem:hom-or-compose}.
\end{proof}

\begin{lemma}
	\label{lem:hom-or-intractable-k-consistency}
	Assume $\StructA=\ORNPC{\StructA_1,\StructA_2}$
	and $\StructB = \OR{\StructB_1,\StructB_2}$.
	Let $k \in \nat$, $X_i \subseteq B_i$  and $f_i \in \Hom{\StructB_i}{\StructA_i}$ for both $i\in[2]$
	such that $|X_1 \cup X_2|\leq k$.
	If for some $i\in [2]$
	we have $f_i \in \kcol{k}{\StructA_i}{\StructB_i}(X_i)$,
	then the map $f\colon X_i\cup X_2 \to A$ induced by $f_1$ and $f_2$
	satisfies $f \in \kcol{k}{\StructA}{\StructB}(X_1 \cup X_2)$.
\end{lemma}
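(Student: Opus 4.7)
The plan is to adapt the proof of Lemma \ref{lem:hom-or-k-consistency} to the intractable template, exploiting the more permissive compose-lemma Lemma \ref{lem:hom-or-intractable-compose}. Since $\kcol{k}{\StructA}{\StructB}$ is, by definition, the greatest family of partial homomorphisms closed under the forth and down-closure conditions, it suffices to exhibit \emph{some} family $H$ of legitimate partial homomorphisms $\StructB[Z] \to \StructA$ that is closed under forth and down-closure and satisfies $f \in H(X_1 \cup X_2)$. The conclusion then follows from the inclusion $H(Z) \subseteq \kcol{k}{\StructA}{\StructB}(Z)$.

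Concretely, for $Z \in \binom{B}{\leq k}$ with $Z_i := Z \cap B_i$, I would define
\[
H(Z) := \bigl\{\, g \in \Hom{\StructB[Z]}{\StructA} \,\bigm|\, g|_{Z_i} \in \kcol{k}{\StructA_i}{\StructB_i}(Z_i) \text{ for each } i \in [2] \,\bigr\}.
\]
The given map $f$ lies in $H(X_1 \cup X_2)$: Lemma \ref{lem:hom-or-intractable-compose} places it in $\Hom{\StructB[X_1 \cup X_2]}{\StructA}$, and since each $f_i$ is a total homomorphism $\StructB_i \to \StructA_i$ (so in particular its restrictions survive any fixed-point that respects forth and down-closure), $f|_{X_i} \in \kcol{k}{\StructA_i}{\StructB_i}(X_i)$ for both $i$.

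The down-closure of $H$ is immediate: if $g \in H(Z)$ and $Y \subseteq Z$, then $g|_Y \in \Hom{\StructB[Y]}{\StructA}$ and, by down-closure of each $\kcol{k}{\StructA_i}{\StructB_i}$, the restriction $g|_{Y_i}$ is still $k$-consistent in $\StructA_i$. For the forth condition I would take $g \in H(Y)$ and $b \in B \setminus Y$ with $|Y \cup \set{b}| \leq k$, then invoke the forth property of $\kcol{k}{\StructA_i}{\StructB_i}$ on each side to produce extensions $\hat g_i \in \kcol{k}{\StructA_i}{\StructB_i}((Y \cup \set{b}) \cap B_i)$ of $g|_{Y_i}$ (on the side that does not contain $b$ this extension is just $g|_{Y_i}$ itself). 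Lemma \ref{lem:hom-or-intractable-compose} then glues $\hat g_1$ and $\hat g_2$ into a single map $\hat g \in \Hom{\StructB[Y \cup \set{b}]}{\StructA}$, which by construction lies in $H(Y \cup \set{b})$ and extends $g$.

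The main delicate point, and the reason the analogous step in Lemma \ref{lem:hom-or-k-consistency} required the extra side-conditions involving $c_i$ and $W_i$, is verifying that two independently produced extensions on the two sides glue to a valid homomorphism with respect to $S^{\StructA}$ and to each $R^{\StructA}$ for $R \in \sig_i$. In the intractable template these relations are defined so liberally (they contain every tuple over $A_i \cup \set{c_i}$ involving $c_i$, and $S^{\StructA}$ pairs $A_1$ freely with $A_2 \cup \set{c_2}$ and symmetrically) that the gluing step is unconditional, which is precisely what Lemma \ref{lem:hom-or-intractable-compose} expresses. I therefore do not expect any further obstacle beyond carefully unfolding the definitions.
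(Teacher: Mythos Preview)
Your proposal is correct and follows essentially the same approach as the paper: the paper's proof simply cites the general parameterized Lemma~\ref{lem:hom-or-k-consistency} (with $W_i = A_i$), whose proof builds exactly the kind of family $H$ you describe and verifies forth and down-closure. Your family $H(Z)$, requiring both restrictions to lie in $\kcol{k}{\StructA_i}{\StructB_i}(Z_i)$, is a clean specialization of the family used there; the only point worth flagging is that placing $f$ into $H(X_1\cup X_2)$ relies on reading the hypothesis $f_i \in \Hom{\StructB_i}{\StructA_i}$ literally (so both restrictions are automatically $k$-consistent), which is also what the paper's one-line derivation from Lemma~\ref{lem:hom-or-k-consistency} implicitly needs.
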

\begin{proof}
	The lemma follows from Lemma~\ref{lem:hom-or-k-consistency}.
\end{proof}

\begin{lemma}
	\label{lem:hom-or-intractable-solution}
	Assume $\StructA=\ORNPC{\StructA_1,\StructA_2}$
	and $\StructB = \OR{\StructB_1,\StructB_2}$.
	Let  $i \in [2]$, let $\Phi$ be a solution to $\cspiso{k}{\StructA_i}{\StructB_i}$,
	and let $Y \subseteq B_{3-i}$ and $h \in \Hom{\StructB_{3-i}[Y]}{\StructA_{3-i}}$.
	Then there is a solution~$\Psi$ to $\cspiso{k}{\StructA}{\StructB}$
	such that for every $X \in \tbinom{\StructB}{\leq k}$ and $f \in \Hom{\StructB[X]}{\StructA}$,
	$\Psi(x_{X,f})\neq 0$ implies
	$\Psi(x_{X,f}) = \Phi(x_{X\cap B_i, \restrict{f}{X\cap B_i}})$
	and $\restrict{f}{X\cap Y} = \restrict{h}{X\cap Y}$.
	In particular, if $\Psi(x_{Y,h}) = 1$
	and $\Phi$ is an integral solution or a $p$-solution,
	then $\Psi$ is an integral solution or $p$-solution, respectively.
\end{lemma}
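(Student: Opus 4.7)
The plan is to reduce this lemma to the already-proven Lemma \ref{lem:hom-or-solution} by first extending the partial homomorphism $h$ to a total one on $B_{3-i}$. Because we are in the intractable case, we have $W_{3-i} = A_{3-i}$, so $W_{3-i} \cup \set{c_{3-i}} = A_{3-i} \cup \set{c_{3-i}}$ is the full universe of $\restrict{\StructA}{3-i}$. Lemma~\ref{lem:hom-or-extend} then guarantees that the extension $\hat h \colon B_{3-i} \to A_{3-i} \cup \set{c_{3-i}}$ defined by $\hat h(b) := h(b)$ for $b \in Y$ and $\hat h(b) := c_{3-i}$ for $b \in B_{3-i} \setminus Y$ is a homomorphism in $\Hom{\StructB[B_{3-i}]}{\restrict{\StructA}{3-i}}$. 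This is precisely the kind of object that Lemma~\ref{lem:hom-or-solution} takes as input.

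Applying Lemma~\ref{lem:hom-or-solution} with this $\hat h$ yields a solution $\Psi$ to $\cspiso{k}{\StructA}{\StructB}$ defined on $X \in \tbinom{B}{\leq k}$ and $f \in \Hom{\StructB[X]}{\StructA}$ by $\Psi(x_{X,f}) = \Phi(x_{X \cap B_i, \restrict{f}{X\cap B_i}})$ whenever $\restrict{f}{X \cap B_i} \in \Hom{\StructB_i[X \cap B_i]}{\StructA_i}$ and $\restrict{f}{X \cap B_{3-i}} = \restrict{\hat h}{X \cap B_{3-i}}$, and $\Psi(x_{X,f}) = 0$ otherwise. This immediately gives the first required property: whenever $\Psi(x_{X,f}) \neq 0$, it equals $\Phi(x_{X \cap B_i, \restrict{f}{X \cap B_i}})$. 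For the second, $Y \subseteq B_{3-i}$ implies $X \cap Y \subseteq X \cap B_{3-i}$, and since $\hat h$ agrees with $h$ on $Y$, the condition $\restrict{f}{X \cap B_{3-i}} = \restrict{\hat h}{X \cap B_{3-i}}$ yields $\restrict{f}{X \cap Y} = \restrict{h}{X \cap Y}$.

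For the claim $\Psi(x_{Y,h}) = 1$ that drives the ``in particular'' statement, I would observe that $Y \cap B_i = \emptyset$ and $\restrict{h}{Y} = \restrict{\hat h}{Y}$, so the defining condition is trivially met and $\Psi(x_{Y,h}) = \Phi(x_{\emptyset,\emptyset}) = 1$ by Equation~\ref{eqn:csp-iso-empty}. Since every nonzero value of $\Psi$ equals some value of $\Phi$, the property of being integral or a $p$-solution is inherited.

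There is essentially no obstacle here; the only conceptual point is recognizing that the choice $W_{3-i} = A_{3-i}$ in the intractable construction is exactly what makes $\hat h$ available as a homomorphism, whereas in the tractable variant the extension to $c_{3-i}$ is the \emph{only} option. All the combinatorial verification of the equations in $\cspiso{k}{\StructA}{\StructB}$ is already handled inside Lemma~\ref{lem:hom-or-solution}, so the present lemma is a direct corollary.
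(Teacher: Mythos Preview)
Your proposal is correct and follows exactly the same route as the paper: extend $h$ to $\hat h$ via Lemma~\ref{lem:hom-or-extend} (using $W_{3-i}=A_{3-i}$), then invoke Lemma~\ref{lem:hom-or-solution}, and read off $\Psi(x_{Y,h})=\Phi(x_{\emptyset,\emptyset})=1$. Your write-up is in fact more detailed than the paper's three-line proof, but the argument is identical.
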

\begin{proof}
	Let $\hat{h}$ be the extension of $h$
	to a homomorphism $\StructB_{3-i} \to \StructA[A_{3-i} \cup \set{c_{3-i}}]$
	by Lemma~\ref{lem:hom-or-extend}.
	Then the claimed solution exists by Lemma~\ref{lem:hom-or-solution}.
	It satisfies $\Psi(x_{Y,h}) = 1$ because $\Phi(x_{\emptyset,\emptyset}) =1 $
	since $\Phi$ is a solution to $\cspiso{k}{\StructA_i}{\StructB_i}$.
\end{proof}

The following lemmas show that the intractable homomorphism or-construction yields NP-complete CSPs for various template structures.
We start to consider the very simple case that both $\StructA_i$ are of size~$1$
and contain a ternary relation that is empty.
For a ternary relational symbol $R$,
denote by $\onestruc{R}$ such an $\set{R}$-structure.
Clearly, $\CSP{\onestruc{R}}$ is polynomial-time decidable because an instance is a yes-instance if and only if its $R$-relation is empty.
However, even for this very simple template, the intractable homomorphism or-construction yields an NP-complete CSP.

\begin{lemma}
	\label{lem:intractable-or-NPC-basic}
	For each $i \in [2]$, let $R_i$ be a ternary relation symbol and set
	$\StructA_i := \onestruc{R_i}$.
	Then $\CSP{\ORNPC{\StructA_1,\StructA_2}}$ is NP-complete.
\end{lemma}
\begin{proof}
	Monotone $3$-SAT
	is the NP-complete variant of $3$-SAT in which in every clause the variables all have to be either positive or all negated.
	We show that monotone $3$-SAT
	polynomial-time reduces (and in particular also Karp-reduces) to $\CSP{\ORNPC{\StructA_1,\StructA_2}}$.
	Let $F$ be a $3$-CNF formula with variables $V = \set{x_1, \dots, x_n}$
	such that in each clause the variables are either all positive or all negative.
	We define an instance $\StructB$ of $\CSP{\ORNPC{\StructA_1,\StructA_2}}$.
	For every variable $x_i$, we add two vertices $x_i$ and $\bar{x}_i$
	and add the pair $(x_i, \bar{x}_i)$ to the $S^\StructB$-relation.
	For every clause $\set{x_1, x_2,x_3}$
	in which the variables are positive,
	we add the triple $(x_1,x_2,x_3)$ to $R_1^\StructB$.
	For every clause $\set{\bar{x}_1, \bar{x}_2, \bar{x}_3}$,
	in which the variables are negated,
	we add the triple $(\bar{x}_1,\bar{x}_2,\bar{x}_3)$ to $R_2^\StructB$.
	
	We show that $\StructB \in \CSP{\ORNPC{\StructA_1,\StructA_2}}$ if and only if $F$ is satisfiable.
	Assume that the universe of~$\StructA_i$ is $A_i = \set{a_i}$ for both $i\in[2]$.
	We first assume that~$F$ is satisfiable and let $\Phi\colon V \to \set{0,1}$ be a satisfying assignment.
	We show that the following map $g$ is a homomorphism $\StructB \to \CSP{\ORNPC{\StructA_1,\StructA_2}}$: for a variable $x\in V$ define
	\begin{align*}g(x) &:= \begin{cases}
			a_1 & \text {if } \Phi(x) = 0,\\
			c_1 & \text {if } \Phi(x) = 1,
		\end{cases}\\
		g(\bar{x}) &:= \begin{cases}
			a_2 & \text {if } \Phi(x) = 1,\\
			c_2 & \text {if } \Phi(x) = 0.
		\end{cases}
	\end{align*}
	Now consider a triple $(x_1,x_2,x_2) \in R_1^\StructB$.
	Then for at least one $j \in[3]$, we have that $\Phi(x_j) = 1$ because
	$\Phi$ satisfies all clauses.
	Hence, the tuple $g(x_1,x_2,x_3)$ contains $c_1$ at least once.
	And so, $g(x_1,x_2,x_3) \in R^{\ORNPC{\StructA_1,\StructA_2}}$.
	Similarly, consider a triple $(\bar{x}_1,\bar{x}_2,\bar{x}_2) \in R_2^\StructB$.
	Then for at least one $j \in[3]$, we have that $\Phi(x_j) = 0$ because
	$\Phi$ satisfies all clauses.
	Hence, the tuple $g(\bar{x}_1,\bar{x}_2,\bar{x}_3)$ contains $c_2$ at least once. 
	And so, $g(\bar{x}_1,\bar{x}_2,\bar{x}_3) \in R_2^{\ORNPC{\StructA_1,\StructA_2}}$.
	
	Second, assume that $\StructB \in \CSP{\ORNPC{\StructA_1,\StructA_2}}$
	and let $g \in \Hom{\StructB}{\ORNPC{\StructA_1,\StructA_2}}$.
	Consider the assignment $\Phi \colon V \to \set{0,1}$ that is defined as follows:
	\[\Phi(x) := \begin{cases}
		1 & \text{if } g(x) = c_1,\\
		0 & \text{otherwise.}
	\end{cases}\]
	We show that $\Phi$ satisfies $F$.
	We first see, that if $g(x) = c_1$, then $g(\bar{x})=a_2$ and similarly, if
	$g(x) = c_2$, then $g(\bar{x})=a_1$
	because $(x,\bar{x})\in S^\StructB$.
	Let $\set{x_1,x_2,x_3}$ be a clause of $F$
	in which all variables occur positively.
	Because $g$ is a homomorphism, $g$ maps one $x_i$ to $c_1$ because $R_1^{\StructA_1}$ is empty.
	Hence, $\Phi(x_i)=1$ and $\Phi$ satisfies the clause $\set{x_1,x_2,x_3}$.
	Let $\set{\bar{x}_1,\bar{x}_2,\bar{x}_3}$ be a clause of $F$
	in which all variables occur negatively.
	Because $g$ is a homomorphism, $g$ maps one $\bar{x}_i$ to $c_2$.
	Hence, $g$ cannot map $x_i$ to $c_1$ and thus $g(x_i) = 0$.
	Thus, $\Phi$ satisfies the clause $\set{\bar{x}_1,\bar{x}_2,\bar{x}_3}$.
	Because we considered arbitrary clauses, $\Phi$ satisfies $F$.
\end{proof}

The next lemma shows that many CSPs reduce to the one in the former lemma.
We call a no-instance of a CSP \defining{inclusion-wise minimal}
if every proper induced subinstance of it is a yes-instance.
The following lemma requires that each $\StructA_i$
has an inclusion-wise minimal no-instance of size at least $3$.
This covers various natural CSPs, for example many kinds of equation systems or
group coset CSPs.

\begin{lemma}
	\label{lem:intractable-or-NPC-reduce}
	Let $R_i$ be ternary relation symbols
	and $\StructA_i$ be $\sig_i$-structures for $i\in[2]$.
	If for both $i \in [2]$,
	there is an inclusion-wise minimal $\sig_i$-structure $\StructC_i \notin \CSP{\StructA_i}$
	of size at least $3$,
	then $\ORNPC{\onestruc{R_1}, \onestruc{R_2}}$ is Karp-reducible to $\ORNPC{\StructA_1,\StructA_2}$.
\end{lemma}
\begin{proof}
	For $i\in[2]$,
	let $\StructC_i$ be an inclusion-wise minimal $\sig_i$-structure $\StructC_i  \notin \CSP{\StructA_i}$ of size at least~$3$.
	Pick a partition of $C_i$ into $C_i^1$, $C_i^2$, and $C_i^3$.
	Such a partition exists because $\StructC_i$ has size at least~$3$.
	
	Let $\StructB$ be an $\ORNPC{1_{R_1}, 1_{R_2}}$-instance.
	We define an $\ORNPC{\StructA_1,\StructA_2}$-instance $\StructB'$.
	For every tuple $\tup{u} = (u_1, u_2, u_3) \in R_i^{\StructB}$,
	introduce a fresh copy $\StructC_i^{\tup{u}}$ of $\StructC_i$.
	We partition $C_i^{\tup{u}}$ corresponding to $C_i^1$, $C_i^2$, and $C_i^3$
	into $C_i^{\tup{u},1}$, $C_i^{\tup{u},2}$, and $C_i^{\tup{u},3}$.
	For every pair $(u_1, u_2) \in S^\StructB$,
	add the pairs $(v_1,v_2)$ to $S^{\StructB'}$
	such that for each $i\in[2]$ there exists a tuple $\tup{w}=(w_1,w_2,w_3) \in R_i^{\StructB}$
	for which $w_j = u_i$ and $v_i \in C_i^{\tup{w},j}$ for some $j \in [3]$.
	
	We show that $\StructB \in \CSP{\ORNPC{\onestruc{R_1}, \onestruc{R_2}}}$
	if and only if $\StructB' \in \CSP{\ORNPC{\StructA_1,\StructA_2}}$.
	First assume that $\StructB \in \CSP{\ORNPC{\onestruc{R_1}, \onestruc{R_2}}}$.
	Let $g \in \Hom{\StructB}{\CSP{\ORNPC{\onestruc{R_1}, \onestruc{R_2}}}}$.
	For every tuple $\tup{u} = (u_1, u_2, u_3) \in R_i^{\StructB}$,
	define $W_{\tup{u}} := \setcond{u_i}{g(u_i) = c_i}$,
	where $c_i$ is the fresh vertex added to~$\onestruc{R_i}$ in the homomorphism or-construction $\ORNPC{\onestruc{R_1}, \onestruc{R_2}}$.
	Let $i \in [2]$ and $\tup{u} \in R_i^{\StructB}$.
	Because~$g$ is a homomorphism, we have $W_{\tup{u}} \neq \emptyset$.
	Pick a homomorphism $f_{\tup{u}} \in \Hom{\StructC_i^{\tup{u}}[C_i^{\tup{u}} \setminus W_{\tup{u}}]}{\StructA_i}$.
	Such a homomorphism exists because $\StructC_i$ is inclusion-wise minimal.
	Now define a map $h \colon \StructB' \to \ORNPC{\StructA_1,\StructA_2}$ via
	\begin{align*}
		h(x) = \begin{cases}
			f_{\tup{u}}(x) & \text {if } x \in C_i^{\tup{u},j}, \tup{u}=(u_1,u_2,u_3), \text { and } g(u_j) \in A_i, \\
			c_i' & \text{if otherwise }x \in C_i^{\tup{u},j} \text { and } g(u_j) \notin A_i.
		\end{cases}
	\end{align*}
	Here,  $c_i'$ denotes the the fresh vertex added to $\StructA_i$ in the homomorphism or-construction $\ORNPC{\StructA_1, \StructA_2}$.
	We show that $h$ is indeed a homomorphism.
	Let $\tup{u} \in R_i^{\StructB}$.
	Then $\restrict{h}{C_i^{\tup{u}} \setminus W_{\tup{u}}} = f_{\tup{u}}$
	and $h(W_{\tup{u}}) = \set{c_i'}$.
	Hence $\restrict{h}{C_i^{\tup{u}}} \in \Hom{B'[C^{\tup{u}}]}{\ORNPC{\StructA_1,\StructA_2}}$
	because extending a partial homomorphism by mapping to $c_i'$ always yields a partial homomorphism in the intractable construction (Lemma~\ref{lem:hom-or-compose}).
	It remains to show that $h$ also preserves the $S$ relation.
	Let $(v_1,v_2) \in S^{\StructB'}$.
	Then there exists a pair $(u_1, u_2) \in S^\StructB$
	such that for each $i$ there exists a tuple $\tup{w} =(w_1,w_2,w_3)\in R_i^\StructB$ 
	for which $w_j = u_i$ and $v_i \in C_i^{\tup{w},j}$ for some $j \in [3]$.
	Because $g$ is a homomorphism, $g$ cannot map both $u_i$ to $c_i$.
	Hence, for one $i \in [2]$, we have $g(u_i) \in A_i$ and thus
	$h(v_i) \in A_i$.
	We conclude that $h(u_1, u_2) \in S^{\ORNPC{\StructA_1,\StructA_2}}$.

	Second, assume that $\StructB' \in \CSP{\ORNPC{\StructA_1,\StructA_2}}$
	and let $g \in \Hom{\StructB}{\ORNPC{\StructA_1,\StructA_2}}$.
	We define a map $h \colon \StructB \to \ORNPC{\onestruc{R_1}, \onestruc{R_2}}$ via
	\[h(u) := \begin{cases}
		c_i & \text{if } g(v) = c_i'  \text{ for some }  \tup{w}=(w_1,w_2,w_3) \in R_i^{\StructB}, u = w_j, v \in C_{i}^{\tup{w},j},\\
		1_i & \text{otherwise.}
	\end{cases}\]
	Here $1_i$ denotes the unique universe member of $\onestruc{R_i}$.
	We show that $h$ is a homomorphism.
	
	Let $\tup{u}=(u_1,u_2,u_3) \in R_i^{\StructB}$.
	Because $g$ is a homomorphism, there is a $j \in [3]$ and a $v \in C_i^{\tup{u}, j}$ such that $g(v) = c_i'$ because
	$\StructC_i\notin \CSP{\StructA_i}$ but removing any vertex from $\StructC_i$ turns it into a yes-instance.
	Hence, $h(u_j) = c_i$.
	We conclude that $h(\tup{u}) \in R_i^{\ORNPC{1_{R_1}, 1_{R_2}}}$.
	It remains to show that $g$ preserves the $S$ relation.
	Let $(u_1,u_2) \in S^\StructB$.
	We see for all tuples $\tup{w}^i = (w^i_1,w^i_2,w^i_3) \in R_i^{\StructB}$ 
	such that $w^i_{j_i} = u_i$ and all vertices
	$v_i \in C_{i}^{\tup{u},j_i}$ (for both $i \in[2]$) that,
	if $g(v_i) =  c_i'$,
	then $g(v_{3-i}) \neq c_{3-i}$ by the construction of $S^{\StructB'}$.
	This means that there is at most $i \in [2]$, such that $h$ maps $u_i$ to $c_i$. This implies $h(u_1,u_2) \in S^{\ORNPC{1_{R_1}, 1_{R_2}}}$.
\end{proof}

	Lemmas \ref{lem:intractable-or-NPC-basic} and \ref{lem:intractable-or-NPC-reduce} together imply the NP-completeness of the intractable OR-template, as long as $\CSP{\StructA_1}$ and $\CSP{\StructA_2}$ have an inclusion-wise minimal no-instance of size at least $3$.

	\newcommand{\grpCSP}[3]{\mathcal{C}^{#1,#2,#3}}
	\newcommand{\grpCSPf}[3]{\mathcal{C}^{#3}}
	\newcommand{\egrpCSP}[4]{\mathcal{C}^{#1,#2,#3}_{#4}}
	\newcommand{\egrpCSPf}[4]{\mathcal{C}^{#3}_{#4}}

	\section{Tseitin Formulas over Abelian Groups and Expanders}
	\label{sec:tseitin}
	In this section, we define the CSP instances to which we will later apply the or-construction from the previous section in order to fool the affine algorithms.
	Our construction is based on families of graphs with high connectivity, so-called \emph{expanders}. 
	
	\paragraph*{Expander Graphs.} We introduce a well-studied notion of expander graphs and analyze some of their properties.
	
	\begin{definition}[Expander graphs]
		\label{def:expanderGraphs}
		The \defining{expansion ratio} of a graph $G$ is
		\[
		h(G) \coloneqq \min_{\stackrel{W \subseteq V(G)}{0<|W|<|G|/2}} \frac{|\delta(W)|}{|W|}.
		\]
		The expansion ratio of a family $(G_n)_{n \in \bbN}$  of graphs is 
		\[
		h((G_n)_{n \in \bbN}) = \inf_{n \in \bbN} h(G_n).
		\]
		If $h((G_n)_{n \in \bbN}) > 0$, then $(G_n)_{n \in \bbN}$ is a family of \defining{expander graphs}.
	\end{definition}
	Here, $\delta(W) \subseteq E(G)$ denotes the boundary of~$W$, i.e., the set of edges between $W$ and its complement in $G$.
	It is folklore that families of expander graphs exist, and can be obtained for example with a randomized construction.  
	
	\begin{lemma}[{\cite[Fact A.1]{berkholzGrohe2016fullVersion}}]
		\label{lem:existenceOfExpanders}
		There exists a family $(G_n)_{n \in \bbN}$ of $3$-regular $2$-connected expander graphs such that $|G_n| \in \Theta(n)$. 
	\end{lemma}	
	A graph being $3$-regular means that every vertex has degree exactly $3$, and $2$-connectivity means that the graph is connected and cannot be disconnected by removing only one vertex. We are never going to need the properties of expanders as stated in Definition \ref{def:expanderGraphs}, but rather the following consequence of the definition.
	
	\begin{definition}[Closure properties of $2$-connected expanders]
		\label{def:expanderClosureProperty}
		Let $(G_n)_{n \in \bbN}$ be a family of $2$-connected graphs. 
		\begin{itemize}
		\item For a constant $c > 0$, we say that the family is \defining{$c$-closed} if, for every $n \in \bbN$ and every $X \subseteq E(G_n)$, there is an edge set $\hat{X} \supseteq X$ of size $|\hat{X}| \leq c \cdot |X|$ such that $E(G_n) \setminus \hat{X}$ is empty or the edge set of a $2$-connected subgraph of~$G_n$.
		\item For a constant $c > 0$, we say that the family is \defining{weakly $c$-closed} if there is a function $f \in o(n)$, such that, for every $n \in \bbN$ and every $X \subseteq E(G_n)$, there is an edge set $\hat{X} \supseteq X$ of size $|\hat{X}| \leq c \cdot |X| + f(|E(G_n)|)$ such that $E(G_n) \setminus \hat{X}$ is empty or the edge set of a $2$-connected subgraph of~$G_n$.
		\end{itemize}
	\end{definition}	
		We will also sometimes call an individual graph $G$ (weakly) $c$-closed, when it stems from a (weakly) $c$-closed family.
	
	\begin{lemma}[{\cite[Fact A.4]{berkholzGrohe2016fullVersion}}]
		\label{lem:expanderKeyProperty}
		For every family of expander graphs $(G_n)_{n \in \bbN}$, there is a constant $c > 0$ such that the family is $c$-closed.
	\end{lemma}	
	
	We introduce the notion of \emph{weakly} $c$-closed families because we will consider graphs obtained by removing an edge set $\hat{X}$ as in the above definition from a $c$-closed graph. 
	Being $c$-closed itself is not closed under such edge removals; one only obtains \emph{weakly} $c$-closed graph families in this way:
	\begin{lemma}
		\label{lem:expandersRobust}
		Let $k \colon \bbN \to \bbN$ be a function in $o(n)$. Let $(G_n)_{n \in \bbN}$ be a $c$-closed graph family. Fix a set $X_n \in \binom{E(G_n)}{\leq k(|E(G_n)|)}$ for every $n$.
		Let $G'_n$ be the $2$-connected subgraph of $G_n$ that remains after removing the edges $\hat{X}_n \supseteq X_n$ (and potentially isolated vertices) from $G_n$. Then the graph family $(G'_n)_{n \in \bbN}$ is weakly $c$-closed. 
	\end{lemma}	
	\begin{proof}
		Let $Y \subseteq E(G'_n)$. Then $Y \cup \hat{X}_n \subseteq E(G_n)$ and $|Y \cup \hat{X}_n| \leq  |Y| + ck(|E(G_n)|)$. Because $G_n$ is $c$-closed, by Definition \ref{def:expanderClosureProperty}, there exists $\hat{Y} \supseteq Y \cup \hat{X}_n$ such that $E(G) \setminus \hat{Y} = E(G') \setminus (\hat{Y} \setminus \hat{X}_n)$ is the edge set of a $2$-connected subgraph. Moreover, $|\hat{Y}| \leq c \cdot |Y \cup \hat{X}_n| = c|Y| + c^2k(|E(G_n)|)$.
		Note that \[k(|E(G_n)|) = k(|E(G'_n)| + |\hat{X}_n|) \leq k(|E(G'_n)| + c \cdot k(|E(G_n)|)) \leq k((1+c) \cdot |E(G'_n)|) \in \Theta(k(|E(G'_n)|)).\] 
		Hence, $|\hat{Y}| \leq c \cdot |Y| + c^2k(|E(G_n)|) = c \cdot |Y| + \Theta(k(|E(G'_n)|))$, as desired.
	\end{proof}

	In the following, we will frequently apply Lemmas 4.4, 4.5, and 4.6 from \cite{BerkholzGrohe2017}. All these lemmas assume the context of a $c$-closed family of graphs. However, it is not hard to check that they are also true for \emph{weakly} $c$-closed families. To be precise: These lemmas make the assumption that for the respective edge set $X \subseteq E(G)$ in question, $|X|$ (more precisely a certain matroid rank of $X$) is upper-bounded by $\ell \coloneqq \lfloor \frac{|E(G)|-1}{3c}  \rfloor$. For weakly $c$-closed families, this assumption has to be strengthened so that $|X| \leq \ell - f(|E(G)|)$, for the sublinear function $f$ from Definition \ref{def:expanderClosureProperty}. Whenever we are going to need these lemmas from \cite{BerkholzGrohe2017}, our respective edge set $X$ will be of size at most $o(|E(G)|) < \ell - f(|E(G)|)$, so this stronger condition will always hold. Therefore, we can always safely use Lemmas 4.4, 4.5, and 4.6 from \cite{BerkholzGrohe2017} even though we are only assuming \emph{weakly} $c$-closed graphs most of the time.

	\paragraph*{Tseitin Equations.}
	Families of expander graphs will be used as the basis for so-called \emph{Tseitin} systems of equations.
	Let $G$ be a $3$-regular $2$-connected graph from a weakly $c$-closed family $(G_n)_{n \in \bbN}$; by Lemma \ref{lem:expanderKeyProperty}, in particular, families of expander graphs satisfy this. 
	Fix an orientation~$H$ of~$G$, i.e., a directed graph
	with one direction of each edge of $G$. Let $V:= V(G), E := E(G)$ in this section.
	For a set $W \subseteq V$, denote by $\delta_-(W) \subseteq E$ the in-boundary of $W$, that is, 
	the set of all $uv\in E$
	such that $(u,v) \in E(H) \cap (V \setminus W) \times W$.
	Analogously, $\delta_+(W)\subseteq E$ is the out-boundary, the set of all edges leaving $W$, and $\delta(W) := \delta_+(W) \cup \delta_{-}(W)$.
	Fix a finite Abelian group $\Gamma$. Let $\lambda \colon V \to \Gamma$.
	Define the $\Gamma$-coset-CSP $\grpCSP{H}{\Gamma}{\lambda}$, or $\grpCSPf{H}{\Gamma}{\lambda}$ for short, with variable set $\setcond{y_e }{ e \in E}$ and linear equations
	\begin{align*}
		\sum_{e\in \delta_+(v)} y_e - 	\sum_{e\in \delta_-(v)} y_e &= \lambda(v) &\text{for all } v\in V.
	\end{align*}
	In the case $\Gamma = \ZZ_2$, we obtain the classic Tseitin contradictions~\cite{Tseitin1983}.
	The CSP $\grpCSPf{H}{\Gamma}{\lambda}$ is solvable if and only if $\sum_{v \in V} \lambda(v) = 0$ \cite{BerkholzGrohe2017}. When we view $\grpCSPf{H}{\Gamma}{\lambda}$ as a homomorphism problem,
	then for every $X \subseteq E$,
	a partial solution $f \colon X \to \Gamma$ of $\grpCSPf{H}{\Gamma}{\lambda}$ (so in particular a robustly consistent partial assignment)
	is a homomorphism in $\Hom{\grpCSPf{H}{\Gamma}{\lambda}}{\CosetGrpTmplt{\Gamma}{3}}$
	and vice versa
	(recall that $\CosetGrpTmplt{\Gamma}{3}$ is the template structure for ternary $\Gamma$-coset-CSPs).
	
	\paragraph*{Locally Consistent Solutions.}
	The key feature of Tseitin systems over expanders is the fact that they are always locally consistent even if they are globally inconsistent. The framework to formally reason about this is the following. 
	For all sets $W \subseteq V$,
	the CSP $\grpCSPf{H}{\Gamma}{\lambda}$ implies the constraint $C(W)$ defined via
	\[ \sum_{e\in \delta_+(W)} y_e - 	\sum_{e\in \delta_-(W)} y_e = \sum_{v \in W}\lambda(v).\]

	\begin{definition}[Robustly consistent assignments~\cite{BerkholzGrohe2017}]
		For $\lambda \colon V \to \Gamma$, a set $X \subseteq E$, and an $\ell \in \bbN$,
		a partial assignment $f \colon X \to \Gamma$ for $\grpCSPf{H}{\Gamma}{\lambda}$ is \defining{$\ell$\nobreakdash-consistent},
		if for every $W \in \binom{V}{\leq \ell}$ such that $\delta(W) \subseteq X$, the assignment~$f$ satisfies the constraint $C(W)$.
		Note that $f$ is a partial solution if it is $1$-consistent.
		We call $f$ \defining{robustly consistent} if it is $n/3$-consistent.
	\end{definition}
	\noindent We review facts about robustly consistent assignments for~$\grpCSPf{H}{\Gamma}{\lambda}$. The expansion property of~$G$ ensures that~$\grpCSPf{H}{\Gamma}{\lambda}$ is always locally satisfiable, on subinstances of size up to~$k$, where~$k$ may be a constant or a function in $o(|E|)$.
	This is because the inconsistency can be ``shifted around'' the graph to any equation outside of the local scope. 
	Thus, for every set~$X$ of at most~$k$ variables, there is at least one robustly consistent assignment with domain~$X$. 
	
	In the following lemmas, to avoid cluttered notation, we speak only about a single fixed Tseitin system~$\grpCSPf{H}{\Gamma}{\lambda}$ defined over a graph $G = (V,E)$ from a weakly $c$-closed family. Whenever there are functions of the graph size involved, these are not meant to be constants, but their value explicitly depends on the choice of $G$ from the graph family.

	\begin{lemma}
		\label{lem:robustly-consistent-for-all-small-contexts}
		Let $k \colon \bbN \to \bbN$ be a function in $o(n)$.
		For all $\lambda \colon V\to \Gamma$ and $X \in \tbinom{E}{\leq k(|E|)}$,
		there is a robustly consistent partial assignment $f \colon X \to \Gamma$ for $\grpCSPf{H}{\Gamma}{\lambda}$. 
	\end{lemma}
	\begin{proof}
		For $|X| = 1$, it is clear that there are robustly consistent partial solutions. 
		It then follows from Lemmas 4.4 and 4.5 in \cite{BerkholzGrohe2017} that such a robustly consistent partial solution can always be extended while maintaining robust consistency, as long as the domain size is sublinear in $|E|$. 
	\end{proof}	
	
	Robustly consistent assignments are also not discarded by the $k$-consistency procedure. 
	In particular, $k$-consistency always accepts $\grpCSPf{H}{\Gamma}{\lambda}$ even if it has no solution. 
		\begin{lemma}
		\label{lem:robustlyConsistentSurviveKconsistency}
		Let $k \colon \bbN \to \bbN$ be a function in $o(n)$.
		For every $\lambda \colon V \to \Gamma$,
		the $k(|E|)$-consistency algorithm does not rule out any robustly consistent partial assignments.
		This means that, for every $X \in \binom{E}{\leq k(|E|)}$,  every robustly consistent assignment
		contained in $\Hom{\grpCSPf{H}{\Gamma}{\lambda}[X]}{\CosetGrpTmplt{\Gamma}{3}}$
		is contained in $\kcol{k(|E|)}{\CosetGrpTmplt{\Gamma}{3}}{\grpCSPf{H}{\Gamma}{\lambda}} (X)$.
	\end{lemma}
	\begin{proof}
		Write $k \coloneqq k(|E|)$.
		For a set $X \in \binom{E}{\leq k}$, 
		denote by $\Hom{\grpCSPf{H}{\Gamma}{\lambda}[X]}{\CosetGrpTmplt{\Gamma}{3}}_{n/3} \subseteq \Hom{\grpCSPf{H}{\Gamma}{\lambda}[X]}{\CosetGrpTmplt{\Gamma}{3}}$ 
		the set of robustly consistent homomorphisms in $\Hom{\grpCSPf{H}{\Gamma}{\lambda}[X]}{\CosetGrpTmplt{\Gamma}{3}}$. 
		In order to prove the lemma, we show that the collection 
		\[\setcond*{\Hom{\grpCSPf{H}{\Gamma}{\lambda}[X]}{\CosetGrpTmplt{\Gamma}{3}}_{n/3} }{ X \in \tbinom{E}{\leq k} }\]
		of robustly consistent partial homomorphisms satisfies the down-closure and the forth-condition of $k$-consistency.
		For the down-closure, let $f \in \Hom{\grpCSPf{H}{\Gamma}{\lambda}[X]}{\CosetGrpTmplt{\Gamma}{3}}_{n/3}$.
		By robust consistency,
		the partial solution $f$ satisfies the constraint $C(W)$ for every $W \in \tbinom{V}{\leq n/3}$ such that $\delta(W) \subseteq X$.
		Then the restriction of $f$ to any subset of $X$ still satisfies $C(W)$ for every $W \in \tbinom{V}{\leq n/3}$ such that $\delta(W)$ is in its domain. So this restriction is also robustly consistent. For the forth-condition, let $f \in \Hom{\grpCSPf{H}{\Gamma}{\lambda}[X]}{\CosetGrpTmplt{\Gamma}{3}}_{n/3}$, for some $|X| < k$. Let $y \in E \setminus X$. We need to show that there exists an $f' \in  \Hom{\grpCSPf{H}{\Gamma}{\lambda}[X \cup \{y\}]}{\CosetGrpTmplt{\Gamma}{3}}_{n/3}$ that extends $f$.
		This is again proven in Lemmas 4.4 and 4.5 in \cite{BerkholzGrohe2017}, as long as $k$ is at most sublinear in $|E|$.
	\end{proof}

	\begin{corollary}
		\label{cor:kConsistencyFails}
		For all $\lambda \colon V \to \Gamma$, all $k(n) \in o(n)$, and all $X \in \binom{E}{\leq k(|E|)}$,
		we have $\kcol{k(|E|)}{\CosetGrpTmplt{\Gamma}{3}}{\grpCSPf{H}{\Gamma}{\lambda}}(X) \neq \emptyset$.
	\end{corollary}	
	\begin{proof}
		Combine Lemmas \ref{lem:robustly-consistent-for-all-small-contexts} and \ref{lem:robustlyConsistentSurviveKconsistency}.
	\end{proof}

	Similarly, Tseitin systems fool the width-$k$ LP relaxation in a certain sense. For a prime $p$, a \defining{$p$-group} is a group in which the order of every element is a power of $p$. For instance, $\bbZ_{p^\ell}$ is a $p$-group. 
	
	\begin{lemma}[consequence of {\cite[Lemma 4.6]{BerkholzGrohe2017}}]
		\label{lem:group-csp-p-solution}
		Let $k \colon \bbN \to \bbN$ be a function in $o(n)$,
		$\Gamma$ be a $p$-group,
		and ${\lambda \colon V \to \Gamma}$.
		Then there is a $p$-solution of $\cspiso{k(|E|)}{\CosetGrpTmplt{\Gamma}{3}}{\grpCSPf{H}{\Gamma}{\lambda}}$ such that all non-robustly consistent partial assignments are set to $0$,
		and each robustly consistent partial solution is mapped to $1/p^\ell$ for some $\ell \in \nat$.
	\end{lemma}
	\noindent 
	In the following, we assume the graph $G$ to be $c$-closed, rather than just weakly $c$-closed.
	With this assumption, the above lemma can be refined so that the resulting $p$-solution assigns the value $1$ to $x_{Z,f}$ for a single fixed robustly consistent partial homomorphism $f\colon Z \to \Gamma$ of our choice, where $Z \in \binom{E}{\leq k}$. 
	\begin{lemma}
		\label{lem:fHatCanBeExtended}
		Assume that $G$ is $c$-closed.
		Let $k \colon \bbN \to \bbN$ be a function in $o(n)$, and let $Z \in \binom{E}{\leq k(|E|)}$. Let $f\colon Z \to \Gamma$ be a robustly consistent partial homomorphism, and let $\hat{Z} \supseteq Z$ such that $E \setminus \hat{Z}$ is the edge set of a $2$-connected subgraph of $G$.
		There is an assignment $h \colon \hat{Z} \to \Gamma$ such that $h|_{Z} = f$, and $h \in \Hom{\grpCSPf{H}{\Gamma}{\lambda}[\hat{Z}]}{\CosetGrpTmplt{\Gamma}{3}}$ is robustly consistent.
	\end{lemma}	
	\begin{proof}
		By Definition \ref{def:expanderClosureProperty}, we can choose $\hat{Z}$ such that $|\hat{Z}| \leq c \cdot |Z|$.
		Since $|\hat{Z}| \leq c \cdot |Z| \leq c \cdot k(|E|)$, which is sublinear in $|E|$, we can use Lemmas 4.4 and 4.5 in \cite{BerkholzGrohe2017} to extend~$f$ to a robustly consistent~$h$ with domain $\hat{Z}$. This is in particular a partial homomorphism.
	\end{proof}	
	\noindent Fix this partial solution $h \colon \hat{Z} \to \Gamma$ 
	for $\grpCSPf{H}{\Gamma}{\lambda}$ given by Lemma~\ref{lem:fHatCanBeExtended} in the following.
	Let $G'=(V',E')$ be the graph obtained from $G$ by deleting all edges in $\hat{Z}$
	and all vertices that are not in the $2$\nobreakdash-connected subgraph of $G-\hat{Z}$.
	Similarly, obtain the directed graph $H'$ from $H$ by deleting the same (directed) edges and vertices.
	Let $\lambda' : V' \to \Gamma$ be defined as follows. For every $v \in V'$, set
	\[
	\lambda'(v) := \lambda(v) - \sum_{e \in \delta_+(v) \cap \hat{Z}} h(y_e) + \sum_{e \in \delta_-(v) \cap \hat{Z}} h(y_e) .
	\] 
	With this definition, $\grpCSP{H'}{\Gamma}{\lambda'}$ is the CSP that we obtain from $\grpCSPf{H}{\Gamma}{\lambda}$ by fixing values for the variables in~$\hat{Z}$ according to $h$ from Lemma \ref{lem:fHatCanBeExtended}. In what follows, let $\StructC = \grpCSP{H}{\Gamma}{\lambda}$
	and $\StructC' = \grpCSP{H'}{\Gamma}{\lambda'}$. 
	Next, we show that we can lift a $p$-solution to the width-$k$ relaxation of the instance $\StructC'$ to a $p$-solution of the relaxation of the original instance $\StructC$.
	
	\begin{lemma}
		\label{lem:piecingTogetherPhiAndSolution}
		Let $k \in o(n)$ and write $k \coloneqq k(|E|)$.
		If $\cspiso{k}{\CosetGrpTmplt{\Gamma}{3}}{\StructC'}$ has a $p$-solution $\Phi$, then $\cspiso{k}{\CosetGrpTmplt{\Gamma}{3}}{\StructC}$ has a $p$-solution $\Psi$ such that
		\begin{enumerate}
			\item For all $X' \in \binom{E'}{\leq k}$, all $f' \in \Hom{\StructC'[X]}{\CosetGrpTmplt{\Gamma}{3}}$, if $\Phi(x_{X',f'}) = 0$, then $\Psi(x_{X,f}) = 0$, for every $X$ with $X \cap E' = X'$ and $f|_{E'} = f'$. 
			\item  for all sets of variables $X \in \binom{E \setminus E'}{\leq k}$ of the system $\StructC$ and for all partial homomorphisms $f \in \Hom{\StructC[X]}{\CosetGrpTmplt{\Gamma}{3}}$,
			we have $\Psi(x_{X,f}) = 1$ if $\restrict{f}{X\cap \hat{Z}} = \restrict{h}{X\cap \hat{Z}}$, and $\Psi(x_{X,f}) = 0$, otherwise.
		\end{enumerate}
	\end{lemma}	
	\begin{proof}
		Define $\Psi$ as follows. We identify the variable set of $\StructC' = \grpCSP{H'}{\Gamma}{\lambda'}$
		with $E'$. For all $X \in \binom{E}{\leq k}$ and $f \in \Hom{\StructC[X]}{\CosetGrpTmplt{\Gamma}{3}}$, we set
		\[
		\Psi(x_{X,f}) := \begin{cases}
			\Phi(x_{X \cap E',f|_{E'}}) & \text{ if } f|_{X \setminus E'} = h|_{X \setminus E'} \text{ or if } X \subseteq E',\\
			0 & \text{ otherwise.}
		\end{cases}
		\]
		It remains to show that $\Psi$ is a solution for $\cspiso{k}{\CosetGrpTmplt{\Gamma}{3}}{\StructC}$.
		For Equation~\ref{eqn:csp-iso-empty}, this is clear. Now consider an equation of Type~\ref{eqn:csp-iso-agree}:
		Let $X \in \binom{E}{\leq k}$, $b \in X$, and $g \in \Hom{\StructC}{\CosetGrpTmplt{\Gamma}{3}}$.
		We need to show 
		\[\sum_{\substack{f \in \Hom{\StructC[X]}{\CosetGrpTmplt{\Gamma}{3}},\\ \restrict{f}{X\setminus\set{b}} = g}} \Psi(x_{X,f}) =  \Psi(x_{X\setminus \{b\},g}).\]
		If $g|_{(X \setminus b) \setminus C}$ does not agree with $h$, then both sides of the equation are zero. Hence it remains the case that $g|_{(X \setminus b) \setminus E'}$ does agree with $h$.
		For every $f \in \Hom{\StructC[X]}{\CosetGrpTmplt{\Gamma}{3}}$, it holds: If $f|_{X \setminus E'} = h|_{X \setminus E'}$, then $f|_{E'} \in \Hom{\StructC'[X \cap E']}{\CosetGrpTmplt{\Gamma}{3}}$. This is due to the definition of~$\lambda'$. Thus we have
		\begin{align*}
			\sum_{\substack{f \in \Hom{\StructC[X]}{\CosetGrpTmplt{\Gamma}{3}},\\ \restrict{f}{X\setminus\set{b}} = g}} \Psi(x_{X,f}) &= \sum_{\substack{f \in \Hom{\StructC'[X \cap E']}{\CosetGrpTmplt{\Gamma}{3}},\\ \restrict{f}{X \cap E' \setminus\set{b}} = g}} \Phi(x_{X \cap E',f|_{E'}})\\
			&= \Phi(x_{(X \cap E')\setminus \{b\},g|_{E'}}) = \Psi(x_{X\setminus \{b\},g}  ).
		\end{align*}	
		Therefore, $\Psi$ is a solution of $\cspiso{k}{\CosetGrpTmplt{\Gamma}{3}}{\StructC}$. For every $X \in E \setminus E'$, we have $\Psi(x_{X,f}) = 0$ if $f$ disagrees with~$h$, and $\Psi(x_{X,f}) = \Phi(x_{X \cap C,f|_{C}}) = \Phi(x_{\emptyset,\emptyset}) = 1$, otherwise.
	\end{proof}

	\begin{corollary}
		\label{cor:group-csp-p-solution-with-fixed-assignment}
		Let $k \colon \bbN \to \bbN$ be a function in $o(n)$. Let $G = (V,E)$ be a $c$-closed graph.
		Let $Z \in \binom{E(G)}{\leq k(|E|)}$ and assume that $\Gamma$ is a $p$-group.
		If $f \in \Hom{\StructC[Z]}{\Gamma}$ is robustly consistent, then $\cspiso{k(|E|)}{\Gamma}{\StructC}$ has a $p$-solution $\Psi$ such that
		\begin{itemize}
			\item $\Psi$ is $0$ for all partial assignments that are not robustly consistent.
			\item $\Psi(x_{Z,f}) = 1$.
		\end{itemize}
	\end{corollary}
	\begin{proof}
		With Lemma \ref{lem:fHatCanBeExtended}, we extend $f$ to $h \in \Hom{\StructC[\hat{Z}]}{\Gamma}$.
		By Lemma \ref{lem:expandersRobust}, $G'$ as defined above is weakly $c$-closed. Hence
		Lemma \ref{lem:group-csp-p-solution} can be applied and gives us a $p$-solution for $\cspiso{k}{\CosetGrpTmplt{\Gamma}{3}}{\StructC'}$, to which we can apply Lemma \ref{lem:piecingTogetherPhiAndSolution} to get a $p$-solution for $\cspiso{k}{\CosetGrpTmplt{\Gamma}{3}}{\StructC}$.
		This has the property that it is zero for assignments which are not robustly consistent and it is $1$ for $x_{Z,f}$.
	\end{proof}

	\newcommand{\ZtwoOrThreeInst}{\ORT{\CosetGrpTmplt{\ZZ_2}{3}, \CosetGrpTmplt{\ZZ_3}{3}}}
	\section{Limitations of the Affine Algorithms}
	\label{sec:power-of-affine}

	All of the affine algorithms are \emph{sound}: they accept all yes-instances.
	In this section we prove our main result, namely that many of them are not \emph{complete} on tractable CSPs:  they do not reject all no-instances, and thus do not solve the CSP.
	We consider the tractable homomorphism or-construction $\ORT{\CosetGrpTmplt{\ZZ_2}{3}, \CosetGrpTmplt{\ZZ_3}{3}}$
	of the ternary $\ZZ_2$-coset-CSP and  the ternary $\ZZ_3$-coset-CSP.

	\subsection{\texorpdfstring{$\ZZ$}{ℤ}-Affine \texorpdfstring{$k$}{k}-Consistency Relaxation}
	\label{sec:zAffineConsistency}

	The \defining{$\ZZ$-affine $k$-consistency relaxation} \cite{DalmauOprsal2024}
	solves the following system of affine linear equations over the integers.
	Let $\StructA$ be a template, $\StructB$ be an instance,
	and $\kappa$ be a map
	assigning to every set  $X \in \binom{B}{\leq k}$ a set of partial homomorphisms $\StructB[X] \to \StructA$.
	Define the system $\zafkleq{k}{\StructA}{\StructB}{\kappa}$: 
	
	\begin{systembox}{$\zafkleq{k}{\StructA}{\StructB}{\kappa}$: variables $z_{X,f}$
			for all $X \in {\tbinom{B} {\leq k}}$
			and $f \in \kappa(X)$}
		\begin{align*}
			z_{X,f} &\in \ZZ &  \text{for all } X \in \tbinom{B}{\leq k} \text { and } f \in \kappa(X)\\
			\sum_{f \in \kappa(X)}  z_{X,f}&= 1 &  \text{for all } X \in \tbinom{B}{\leq k}\\
			\sum_{f \in \kappa(X), \restrict{f}{Y} = g} z_{X,f} &= z_{Y,g} & \text{for all } Y\subset X \in \tbinom{B}{\leq k} \text { and } g \in \kappa(Y) 
		\end{align*}
	\end{systembox}
	\noindent Recall that $\kcol{k}{\StructA}{\StructB}$ denotes the output of the $k$-consistency algorithm,
	which is a function that  assigns a set of partial homomorphisms to each set $X \in \binom{B}{\leq k}$.
	The $\ZZ$-affine $k$-consistency relaxation runs, for a fixed positive integer $k$ and a template structure $\StructA$, as follows:
	\begin{algobox}{$\ZZ$-affine $k$-consistency relaxation for template $\StructA$: input a $\CSP{\StructA}$-instance $\StructB$}
		\begin{enumerate}
			\item Compute $\kcol{k}{\StructA}{\StructB}$ using the $k$-consistency algorithm.
			\item Accept if the system $\zafkleq{k}{\StructA}{\StructB}{\kcol{k}{\StructA}{\StructB}}$ is solvable and reject otherwise.
		\end{enumerate}
	\end{algobox}

	\noindent In 2024, Dalmau and Opr\v{s}al~\cite{DalmauOprsal2024} put forward the following
	conjecture on the power of the $\ZZ$-affine $k$-consistency relaxation:
	\begin{conjecture}[\cite{DalmauOprsal2024}]
		\label{con:s3-or-Z}
		For every finite structure $\StructA$,
		either
		$\CSP{K_3}$ is Datalog$^\cup$-reducible to $\CSP{\StructA}$
		or
		$\CSP{\StructA}$ is Datalog$^\cup$-reducible to $\CSP{\ZZ}$,
		where $K_3$ denotes the triangle.
	\end{conjecture}
	\noindent
	Under the assumption $\Ptime \neq \NP$, this conjecture is saying that whenever $\CSP{\StructA}$ is not NP-complete, then it is, up to a simple reduction, equivalent to solving linear equations over the integers. 
	In fact, by the results of \cite{DalmauOprsal2024}, Datalog$^\cup$-reducibility to $\CSP{\ZZ}$ already implies that $\CSP{\StructA}$ is solved by the $\ZZ$-affine $k$-consistency algorithm for some constant $k$.
	Our counterexample is, however, not solved by $\ZZ$-affine $k$-consistency (not even for sublinearly growing $k$), and it does not fall into the first case of the conjecture, either. 
	
	\begin{theorem}[restate=zAffineDoesNotSolveBoundedColorClass, name =]
		\label{thm:z-affine-does-not-solve-bounded-color-class}
		For every $k\geq 1$, the $\ZZ$-affine $k$-consistency relaxation does not solve $\ZtwoOrThreeInst$.
		This is even true if $k \in o(n)$ is a sublinear function of the instance size.
	\end{theorem}
	\begin{proof}
		Let $(G_n)_{n \in \bbN}$ be a family of $3$\nobreakdash-regular $2$\nobreakdash-connected expander graphs, which exists by Lemma \ref{lem:existenceOfExpanders}. Fix $G \coloneqq G_n$, for a large enough $n \in \bbN$, such that $|V(G)|$ is sufficiently larger than~$k$. Let~$H$ be an orientation of $G$.
		Let $p_1 := 2$ and $p_2 := 3$. For each $i \in [2]$, let $\Gamma_i := \ZZ_{p_i}$, and ${\lambda_i: V(G) \to \Gamma_i}$ be $0$ everywhere except at one vertex $v^* \in V(G)$, where we set $\lambda_i(v^*) := 1$.
		For each $i \in [2]$, we consider the $3$\nobreakdash-ary $\Gamma_i$-coset-CSP $\StructB_i := \grpCSP{H}{\Gamma_i}{\lambda_i}$. 
		Let $\StructB := \OR{\StructB_1,\StructB_2} $ and $\StructA := \ZtwoOrThreeInst$ be the corresponding
		tractable homomorphism or-template.
		From $\sum_{v\in V(G)} \lambda_i(v) \neq 0$
		it follows $\StructB_i\notin\CSP{\CosetGrpTmplt{\Gamma_i}{3}}$ for both $i \in [2]$. Thus $\StructB\notin \CSP{\StructA}$  by Lemma \ref{lem:hom-or-construction-correct}.
		For a set $X\subseteq B$, a partial homomorphism  $f\colon\StructB[X] \to\StructA$ is called robustly consistent
		if $\restrict{f}{B_i}$
		is a robustly consistent partial homomorphism $\StructB_i[X\cap B_i] \to \StructA_i$ for some $i \in [2]$
		and $f(X\cap B_{3-i}) = \set{c_{3-i}}$, where $c_{3-i}$ is the fresh vertex of the or-construction.
		Robustly consistent partial solutions of the Tseitin systems 
		are not discarded by $k$-consistency, and by Lemma \ref{lem:hom-or-tractable-k-consistency} this is also true for the 
		robustly consistent partial homomorphisms of the or-instance.
		Then $\cspiso{k}{\StructA_i}{\StructB_i}$ has a $p_i$-solution for both $i\in[2]$ by Lemma~\ref{lem:group-csp-p-solution},
		which is only non-zero for variables indexed by robustly consistent partial homomorphisms.
		By Lemma~\ref{lem:hom-or-tractable-solution}, $\cspiso{k}{\StructA}{\StructB}$ has a $p_i$-solution in which only robustly consistent partial homomorphisms are non-zero, too.
		By Lemma~\ref{lem:p-q-solution-implies-integral},
		there is an integral solution to $\cspiso{k}{\StructA}{\StructB}$,
		which is only non-zero for robustly consistent partial solutions of $\StructB$.
		Such solutions to $\cspiso{k}{\StructA}{\StructB}$
		imply solutions to $\zafkleq{k}{\StructA}{\StructB}{\kappa_k^{\StructA}[\StructB]}$.
		Hence, the $\ZZ$-affine $k$-consistency relaxation wrongly accepts $\StructB$.
	\end{proof}

	\begin{lemma}[restate=notDatalogReducible, name =]
		\label{lem:not-datalog-reducible}
		$\CSP{K_3}$ is not Datalog$^\cup$-reducible to $\CSP{\ZtwoOrThreeInst}$.
	\end{lemma}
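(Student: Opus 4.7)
The plan is to show that $\CSP{\ZtwoOrThreeInst}$ is definable in rank logic and then invoke a known inexpressibility result to rule out Datalog$^\cup$-reducibility from $\CSP{K_3}$. Since Datalog$^\cup$-reductions can be expressed as IFP-interpretations, and rank logic is closed under such interpretations, this will suffice.

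First, I would observe that for each prime $p$, the CSP $\CSP{\CosetGrpTmplt{\ZZ_p}{3}}$ is equivalent to deciding solvability of a system of $\ZZ_p$-linear equations: every coset of a subgroup of $\ZZ_p^3$ is the affine solution set of such a system, so an instance amounts to a conjunction of affine constraints over $\FF_p$. Solvability of $\FF_p$-linear systems is the canonical problem expressible in rank logic via the $p$-rank operator. Hence both $\CSP{\CosetGrpTmplt{\ZZ_2}{3}}$ and $\CSP{\CosetGrpTmplt{\ZZ_3}{3}}$ are rank-logic definable. Since rank logic extends inflationary fixed-point logic, Corollary~\ref{cor:hom-or-definable} applies to it, so $\CSP{\ZtwoOrThreeInst}$ is itself rank-logic definable.

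Next, I would argue by contradiction. Suppose $\CSP{K_3}$ were Datalog$^\cup$-reducible to $\CSP{\ZtwoOrThreeInst}$. Then, by the remark in the preliminaries that Datalog$^\cup$-reductions are expressible as IFP-interpretations, and since rank logic subsumes IFP and is closed under IFP-interpretations, $\CSP{K_3}$ would also be definable in rank logic. But this contradicts the inexpressibility result of Grädel and Pakusa~\cite{GradelPakusa19}, yielding the desired conclusion.

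The main obstacle is matching the Grädel--Pakusa inexpressibility result to the exact statement that $\CSP{K_3}$ is not rank-logic definable. Their theorem is phrased for a specific combinatorial class rather than directly for 3-colorability, so one needs to point out a rank-logic-preserving interpretation (e.g., a quantifier-free or IFP-interpretation) from that class into $\CSP{K_3}$ so that non-definability transfers. Once this is in place, the rest of the argument is routine given the preservation lemmas from Section~\ref{sec:orConstruction} and the standard correspondence between coset-CSPs over Abelian prime-order groups and linear equation systems.
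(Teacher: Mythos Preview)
Your approach is essentially the paper's, and the obstacle you flag is precisely where the care is needed. The one sharpening you must make explicit is the choice of the \emph{characteristic-$\{2,3\}$ fragment} of rank logic, not rank logic with access to all primes. The Gr\"adel--Pakusa result is that CFI structures over $\bbZ_p$ are indistinguishable in characteristic-$P$ rank logic whenever $p\notin P$; it says nothing about full rank logic, and indeed no unconditional inexpressibility of $\CSP{K_3}$ in full rank logic is known. So your first step should read: $\CSP{\CosetGrpTmplt{\ZZ_2}{3}}$ and $\CSP{\CosetGrpTmplt{\ZZ_3}{3}}$ are definable in $\{2,3\}$-rank logic (via Lemma~\ref{lem:ZpcosetsAreEquations} and the rank operators modulo $2$ and $3$), hence by Corollary~\ref{cor:hom-or-definable} so is $\CSP{\ZtwoOrThreeInst}$.

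For the contradiction, the paper does exactly what you sketch in your last paragraph: pick a prime $p\notin\{2,3\}$ and use the intermediate problem $\CSP{\CosetGrpTmplt{\ZZ_p}{r}}$. Every finite-domain CSP is Datalog$^\cup$-reducible to $\CSP{K_3}$ (this is stated in \cite{DalmauOprsal2024}), so in particular $\CSP{\CosetGrpTmplt{\ZZ_p}{r}}$ is. By transitivity, a hypothetical Datalog$^\cup$-reduction from $\CSP{K_3}$ to $\CSP{\ZtwoOrThreeInst}$ would make $\CSP{\CosetGrpTmplt{\ZZ_p}{r}}$ definable in $\{2,3\}$-rank logic. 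But distinguishing CFI graphs over $\bbZ_p$ first-order reduces to $\CSP{\CosetGrpTmplt{\ZZ_p}{r}}$ \cite[Lemma~18]{GradelPakusa19}, so this contradicts the Gr\"adel--Pakusa indistinguishability result for $p\notin\{2,3\}$. With these two clarifications your argument matches the paper's proof.
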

		\begin{proof}
		Let $p \notin \set{2,3}$ be a prime
		and $r\geq 3$ be an arity.
		Then $\CSP{\CosetGrpTmplt{\ZZ_p}{r}}$ is Datalog$^\cup$-reducible to $\CSP{K_3}$
		because every finite-domain CSP is Datalog$^\cup$-reducible to $\CSP{K_3}$ (see \cite{DalmauOprsal2024}). 
		We claim that $\CSP{\CosetGrpTmplt{\ZZ_p}{r}}$ is not Datalog$^\cup$-reducible to $\CSP{\ZtwoOrThreeInst}$, which by transitivity of Datalog$^\cup$-reducibility~\cite{DalmauOprsal2024} implies the lemma.
		Suppose for the sake of a contradiction that $\CSP{\CosetGrpTmplt{\ZZ_p}{r}}$
		is Datalog$^\cup$-reducible to $\CSP{\ZtwoOrThreeInst}$. 
		\emph{Rank logic}~\cite{DawarGHL09}
		extends inflationary fixed-point logic by operators
		to define the rank of definable matrices
		over finite prime fields.
		For a set of primes $P$, the characteristic-$P$ fragment
		only provides these operators for finite prime fields
		whose characteristic is contained in $P$.
		For all $i\in\set{2,3}$,
		characteristic-$\set{2,3}$ rank logic
		defines $\CSP{\CosetGrpTmplt{\ZZ_i}{3}}$ by Lemma~\ref{lem:ZpcosetsAreEquations}, that is,
		there is a formula of the logic that is satisfied by a $\CSP{\CosetGrpTmplt{\ZZ_i}{3}}$-instance if and only if it has a homomorphism to $\CosetGrpTmplt{\ZZ_i}{3}$.
		By Corollary~\ref{cor:hom-or-definable}, $\CSP{\ZtwoOrThreeInst}$ 
		is definable in characteristic-$\set{2,3}$ rank logic.
		Because Datalog$^\cup$-reductions are definable in inflationary fixed-point logic,
		$\CSP{\CosetGrpTmplt{\ZZ_p}{r}}$
		is then definable in characteristic-$\set{2,3}$ rank logic.
		This contradicts the result by Grädel and Pakusa~\cite[Theorem 3.3]{GradelPakusa19} that non-isomorphic
		Cai-Fürer-Immerman graphs over the group $\bbZ_p$ (see Section 3.3 in \cite{GradelPakusa19} for the construction) are indistinguishable in characteristic-$P$ rank logic whenever $p \notin P$   :
		The problem of distinguishing these graphs is first-order-reducible to solving a system of (ternary) linear equations in $\bbZ_p$~\cite[Lemma 3.18]{GradelPakusa19},
		so to a $\CSP{\CosetGrpTmplt{\ZZ_p}{r}}$-instance.
		Therefore, the latter cannot be definable in characteristic-$\set{2,3}$ rank logic. This contradicts the assumption that $\CSP{\CosetGrpTmplt{\ZZ_p}{r}}$
		is Datalog$^\cup$-reducible to $\CSP{\ZtwoOrThreeInst}$.
	\end{proof}
	
	\noindent
	Theorem~\ref{thm:z-affine-does-not-solve-bounded-color-class}
	and Lemma~\ref{lem:not-datalog-reducible}
	disprove Conjecture~\ref{con:s3-or-Z}.

	\newcommand{\BLPAIP}[1]{\mathsf{BLP{+}AIP}(#1)}
	\newcommand{\BAk}[2]{\mathsf{BA}^{#1}(#2)}
	\newcommand{\VarsIP}[3]{\mathcal{V}^{#1,#2}(#3)}
	\subsection{BLP+AIP and BA\texorpdfstring{$^k$}{k}}
	\label{sec:BLP}
	We introduce another well-studied system of equations for CSPs~\cite{BartoBKO2021,BrakensiekGWZ2020} parameterized by the size of partial solutions~\cite{CiardoZivny2023GraphColoring}.
	Let $k$ be a positive integer, $\StructA$ a template $\sig$-structure
	and~$\StructB$ a $\CSP{\StructA}$-instance.
	We define the system
	$\ipk{k}{\StructA}{\StructB}$ with variable set $\VarsIP{k}{\StructA}{\StructB}$.
	
	\begin{systembox}{$\ipk{k}{\StructA}{\StructB}$: 
			variables $\lambda_{X,f}$ for all $X \in \tbinom{B}{\leq k}$ and  $f\colon X \to A$, and \\\phantom{$\ipk{k}{\StructA}{\StructB}$: }variables $\mu_{R,\tup{b},\tup{a}}$ for all $R \in \sig$, $\tup{b} \in R^\StructB$, and $\tup{a} \in R^\StructA$}
		\setlength{\belowdisplayskip}{2pt}
		\begin{align*}
			\sum_{f \colon X \to A} \lambda_{X,f} &= 1  &\text{for all } X \in \tbinom{B}{\leq k} \label{eqn:ip-only-one-local-solution}\tag{B1},\\
			\sum_{\substack{f \colon X \to A,\\\restrict{f}{Y} = g}} \lambda_{X,f} &= \lambda_{Y,g} & \text{for all } Y\subset X \in \tbinom{B}{\leq k}, g\colon Y \to A \label{eqn:local-solutions-consistent}\tag{B2},\\
			\sum_{\tup{a} \in R^\StructA, a_{\tup{i}} = \tup{a}'} \mu_{R,\tup{b},\tup{a}} &= \lambda_{X(\tup{b}_{\tup{i}}), \tup{b}_{\tup{i}} \mapsto \tup{a}' } &  \text {for all } R \in \sig, \tup{a}' \in A^k, \tup{b} \in R^\StructB, \tup{i} \in [\arity{R}]^k,\label{eqn:ip-hom}\tag{B3}
		\end{align*}
	where $a_{\tup{i}}$ and $b_{\tup{i}}$ are the $k$-tuples $(a_{\tup{i}_1},...,a_{\tup{i}_k})$ and $(b_{\tup{i}_1},...,b_{\tup{i}_k})$, respectively, $X(\tup{b}_{\tup{i}})$ is the set of entries of $\tup{b}_{\tup{i}}$, and $\tup{b}_{\tup{i}} \mapsto \tup{a}'$ is the function $X(\tup{b}_{\tup{i}})\to A$ mapping $\tup{b}_{\tup{i}}$ to $\tup{a}'$.
	\end{systembox}
	\noindent We consider different domains of the variables (see~\cite{BrakensiekGWZ2020}):
	\begin{itemize}
	\item If we restrict the variables to $\set{0,1}$, then
	$\ipk{1}{\StructA}{\StructB}$ is solvable if and only if
	$\StructB \in \CSP{\StructA}$.
	\item The relaxation of $\ipk{k}{\StructA}{\StructB}$ to nonnegative rationals is the \defining{basic linear programming (BLP)} relaxation $\blk{k}{\StructA}{\StructB}$.
	\item The affine relaxation of $\ipk{k}{\StructA}{\StructB}$ to $\bbZ$ is the \defining{affine integer programming (AIP)} relaxation $\aipk{k}{\StructA}{\StructB}$.
	\end{itemize}
	By increasing the parameter~$k$, 
	the BLP and AIP relaxations result in the Sherali-Adams LP hierarchy and
	the affine integer programming hierarchy of the $\{0,1\}$-system, respectively.
	
	Brakensiek, Guruswami, Wrochna, and Živný \cite{BrakensiekGWZ2020} use a certain combination of $\blk{1}{\StructA}{\StructB}$ and $\aipk{1}{\StructA}{\StructB}$
	to formulate the \defining{BLP+AIP algorithm}.
	Similarly to the $\ZZ$-affine $k$-consistency relaxation,
	the BLP+AIP algorithm tries to solve $\CSP{\StructA}$
	in the sense that it is sound.
	However, it may wrongly answer $\StructB \in \CSP{\StructA}$.
	The question is whether the BLP+AIP algorithm is also complete for tractable CSPs.
	In contrast to the $\ZZ$-affine $k$-consistency relaxation,
	the BLP+AIP algorithm is not parameterized by the size of partial solutions $k$.
	This parameterized version was proposed by Ciardo and Živný~\cite{CiardoZivny2023Tensors, CiardoZivny2023BAk}
	and is called BA$^k$, where BA$^1$ is just the BLP+AIP algorithm.
	\begin{algobox}{$\BAk{k}{\StructA}$-algorithm: input a $\CSP{\StructA}$-instance $\StructB$}
	\begin{enumerate}
		\item Compute a relative interior point $\Phi \colon \VarsIP{k}{\StructA}{\StructB} \to \QQ $ in the polytope defined by $\blk{k}{\StructA}{\StructB}$.
		The solution $\Phi$ has in particular the property that for each variable $x \in \VarsIP{k}{\StructA}{\StructB}$ there is a solution $\Psi$ to $\blp{\StructA}{\StructB}$ such that $\Psi(x) \neq 0$
		if and only if $\Phi(x) \neq 0$.
		If such a point does not exist, reject.\label{itm:bak-interior-point}
		\item \label{item:bak-refined-constr} Refine $\aipk{k}{\StructA}{\StructB}$ by adding the constraints
		\[x = 0 \quad \text { whenever } \quad \Phi(x) = 0 \qquad \text{ for all }{x\in \VarsIP{k}{\StructA}{\StructB}}.\] 
		\item If the refined system is feasible (over $\ZZ$), then accept, otherwise reject.
	\end{enumerate}	
	\end{algobox}
	\noindent The original presentation of BA$^k$ in \cite{CiardoZivny2023Tensors} uses a slightly different system of equations but
	one can verify that our presentation is indeed equivalent. The system in \cite{CiardoZivny2023Tensors} does not have variables $\lambda_{X,f}$ but uses variables $\lambda_{R_k,\tup{b},\tup{a}}$ instead, where $R_k$ is the full $k$-ary relation. These have equivalent semantics. Equation~\ref{eqn:ip-only-one-local-solution} corresponds to Equation~$(1)$ in \cite{CiardoZivny2023Tensors}, and Equations~\ref{eqn:local-solutions-consistent} and~\ref{eqn:ip-hom} are expressed by Equation~$(2)$ in \cite{CiardoZivny2023Tensors}. We deviate from the original presentation to keep it consistent with the systems for the other algorithms.

	We show that BA$^k$ fails on the counterexample provided for $\ZZ$-affine $k$-consistency.
	To do so, we relate solutions of $\cspiso{k}{\StructA}{\StructB}$ 
	to solutions of $\blk{k}{\StructA}{\StructB}$ or $\aipk{k}{\StructA}{\StructB}$.
	
	\begin{lemma}\label{lem:cspiso-implies-aip}
	Let $\StructA$ and $\StructB$ be $\sig$-structures and $k \geq \arity{\sig}$.
	If $\cspiso{k}{\StructA}{\StructB}$ has a solution $\Phi$
	over the non-negative rationals or the integers, then the following assignment $\Psi$ is a solution to $\blk{k}{\StructA}{\StructB}$ or $\aipk{k}{\StructA}{\StructB}$, respectively:
	\begin{align*}
		\Psi(\lambda_{X,f}) &:= \begin{cases}
			\Phi(x_{X,f}) &\text{if } f\in \Hom{\StructB[X]}{\StructA},\\
			0 &\text{otherwise}
		\end{cases} & \text{for all } X \in \tbinom{\StructB}{\leq k}, f \colon X \to A,\\
		\Psi(\mu_{R,\tup{b},\tup{a}}) &:= \Phi (x_{X(\tup{b}), \tup{b} \mapsto \tup{a}}) & \text{for all } R \in \sig, \tup{a} \in R^\StructA, \tup{b} \in R^\StructB,
	\end{align*}
	where $X(\tup{b})$ denotes the set of elements appearing in the tuple $\tup{b}$ and $\tup{b} \mapsto \tup{a}$ denotes the partial homomorphism sending $\tup{b}$ to $\tup{a}$. 
	\end{lemma}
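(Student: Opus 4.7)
The plan is to verify directly that $\Psi$ satisfies each of the three equation families (B1), (B2), and (B3) defining $\blk{k}{\StructA}{\StructB}$, and then to observe that non-negativity and integrality transfer trivially from $\Phi$ to $\Psi$ by inspection. Throughout, I will adopt the convention that $\Psi(\mu_{R,\tup{b},\tup{a}}) := 0$ whenever the map $\tup{b} \mapsto \tup{a}$ is either ill-defined (because $b_i = b_j$ but $a_i \neq a_j$ for some indices) or not a homomorphism in $\Hom{\StructB[X(\tup{b})]}{\StructA}$, since in either case the variable $x_{X(\tup{b}), \tup{b} \mapsto \tup{a}}$ does not exist in $\cspiso{k}{\StructA}{\StructB}$. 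The assumption $k \geq \arity{\sig}$ ensures $|X(\tup{b})| \leq k$, so all the variables of $\cspiso{k}{\StructA}{\StructB}$ referenced below are indeed present.

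Equations (B1) and (B2) reduce immediately to Lemma~\ref{lem:csp-iso-subsets}. For (B1), the sum $\sum_{f\colon X\to A} \Psi(\lambda_{X,f})$ vanishes on non-homomorphisms and collapses to $\sum_{f \in \Hom{\StructB[X]}{\StructA}} \Phi(x_{X,f})$, which equals $1$ by the second assertion of the lemma. For (B2), if $g \in \Hom{\StructB[Y]}{\StructA}$, then every $f$ extending $g$ with non-zero $\Psi(\lambda_{X,f})$ is itself a homomorphism, and the first assertion of Lemma~\ref{lem:csp-iso-subsets} applied to $Y \subseteq X$ yields the equality; if $g$ is not a homomorphism, then no extension $f$ can be a homomorphism either, and both sides are zero.

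Equation (B3) is the substantive step. Fix $R \in \sig$, $\tup{b} \in R^\StructB$, $\tup{i} \in [\arity{R}]^k$, and $\tup{a}' \in A^k$. Consider the map that sends a homomorphism $f \in \Hom{\StructB[X(\tup{b})]}{\StructA}$ with $f|_{X(\tup{b}_{\tup{i}})} = \tup{b}_{\tup{i}} \mapsto \tup{a}'$ to the tuple $f(\tup{b}) \in R^\StructA$. This is a bijection between such homomorphisms and those $\tup{a} \in R^\StructA$ with $a_{\tup{i}} = \tup{a}'$ for which $\tup{b} \mapsto \tup{a}$ is well-defined and a homomorphism on $\StructB[X(\tup{b})]$; the inverse sends $\tup{a}$ to the partial homomorphism $\tup{b} \mapsto \tup{a}$. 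Applying this bijection, together with the convention that $\Psi(\mu_{R,\tup{b},\tup{a}}) = 0$ for the remaining $\tup{a}$, rewrites the left-hand side of (B3) as $\sum_{f} \Phi(x_{X(\tup{b}),f})$ where $f$ ranges over homomorphisms extending $\tup{b}_{\tup{i}} \mapsto \tup{a}'$. By Lemma~\ref{lem:csp-iso-subsets} applied to the inclusion $X(\tup{b}_{\tup{i}}) \subseteq X(\tup{b})$, this sum equals $\Phi(x_{X(\tup{b}_{\tup{i}}), \tup{b}_{\tup{i}} \mapsto \tup{a}'}) = \Psi(\lambda_{X(\tup{b}_{\tup{i}}), \tup{b}_{\tup{i}} \mapsto \tup{a}'})$. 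The degenerate case where $\tup{b}_{\tup{i}} \mapsto \tup{a}'$ fails to be a homomorphism is handled separately: the right-hand side is zero by our convention, and on the left every $\tup{a}$ extending $\tup{a}'$ inherits the same defect, so its term vanishes.

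The main obstacle will be bookkeeping in (B3), where repeated entries of $\tup{b}$ impose well-definedness constraints on admissible $\tup{a}$, and where homomorphy on $\StructB[X(\tup{b})]$ may be strictly stronger than membership of $\tup{a}$ in $R^\StructA$ due to other tuples and relations whose support lies in $X(\tup{b})$. Both subtleties are absorbed cleanly by the zero-convention on $\Psi$, after which (B3) becomes a direct instance of Lemma~\ref{lem:csp-iso-subsets}. Finally, non-negativity (for the BLP case) and integrality (for the AIP case) of $\Psi$ are immediate from the definition, since every value of $\Psi$ is either $0$ or a value of $\Phi$.
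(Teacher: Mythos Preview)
Your proof is correct and follows essentially the same approach as the paper's: both verify (B1), (B2), and (B3) by appealing to Lemma~\ref{lem:csp-iso-subsets}, and both note that non-negativity and integrality transfer trivially. Your version is simply more explicit about the bijection underlying (B3) and the handling of degenerate cases where $\tup{b}\mapsto\tup{a}$ is ill-defined or not a partial homomorphism, which the paper's terse proof leaves implicit.
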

	\begin{proof}
	Lemma~\ref{lem:csp-iso-subsets} implies that equations of Type~\ref{eqn:ip-only-one-local-solution} and~\ref{eqn:local-solutions-consistent} are satisfied.
	For $R \in \sig$, $\tup{a}' \in A^k$, $\tup{b} \in R^\StructB$, and $\tup{i} \in [\arity{R}]^k$,
	we let $Y$ be the set of entries of $\tup{b}_{\tup{i}}$, and consider the homomorphism $g \coloneqq \tup{b}_{\tup{i}}\mapsto \tup{a}'$. Then
	Lemma~\ref{lem:csp-iso-subsets} also implies that equations of Type~\ref{eqn:ip-hom} are satisfied. 
	It is clear that non-negativity or integrality, respectively, of the solution is preserved.
	\end{proof}

	\begin{theorem}[restate=BLPDoesNotSolveBoundedColorClass, name=]
		\label{thm:BLP-does-not-solve-bounded-color-class}
		For every integer $k$, the algorithm $\BAk{k}{\StructA}$ does not solve
		$\ZtwoOrThreeInst$. This is even true if $k \in o(n)$ is a sublinear function of the instance size.
	\end{theorem}
	\begin{proof}
	Let $p_1 = 2$, $p_2=3$, and $\StructA := \ZtwoOrThreeInst$.
	As in the proof of Theorem \ref{thm:z-affine-does-not-solve-bounded-color-class},
	we consider the ternary Tseitin CSP instances over $\ZZ_2$ and $\ZZ_3$
	constructed over sufficiently large expander graphs.
	Again, let $\StructB_i$ be the $\ZZ_{p_i}$ instance for each $i\in[2]$.
	We again have that $\StructB_i \not \in \CSP{\CosetGrpTmplt{\ZZ_{p_i}}{3}}$
	for both $i\in[2]$ and hence $\StructB := \OR{\StructB_1,\StructB_2} \notin \CSP{\StructA}$.
	
	By Lemma \ref{lem:group-csp-p-solution}, for each $i \in [2]$, there is a $p_i$-solution $\Phi_i$ for $\cspiso{k}{\CosetGrpTmplt{\ZZ_{p_i}}{3}}{\StructB_i}$ 
	which sets exactly the robustly consistent partial solutions to a non-zero value. With Lemma \ref{lem:hom-or-tractable-solution}, each $\Phi_i$ gives rise to a solution
	$\Psi_i$ of $\cspiso{k}{\StructA}{\StructB}$ that is non-zero exactly for all partial homomorphisms $f\colon X \to A$ that are robustly consistent for $X \cap B_i$ and map $X \cap B_{3-i}$ to $c_{3-i}$.
	We call these partial homomorphisms $\StructB\to\StructA$ also robustly consistent.
	Let $F$ be the set of all robustly consistent partial homomorphisms  $\StructB[X]\to\StructA$ for all $X\in \tbinom{B}{\leq k}$.
	The relative interior point computed in Step~\ref{itm:bak-interior-point} of the BA$^k$-algorithm exists (because the system is solvable)
	and is in particular non-zero for every $f \in F$.
	By Lemma~\ref{lem:p-q-solution-implies-integral},
	the  $p_1$-solution $\Psi_1$ and the $p_2$-solution $\Psi_2$ for $\cspiso{k}{\StructA}{\StructB}$ can be combined to an integral solution that is only non-zero for partial homomorphisms in $F$.
	Therefore by Lemma~\ref{lem:cspiso-implies-aip},
	the system $\aipk{k}{\StructA}{\StructB}$ also has such an integral solution.
	This solution satisfies the refined constraints from
	Step~\ref{item:bak-refined-constr} of the BA$^k$-algorithm.
	Hence, the algorithm incorrectly accepts the unsatisfiable instance $\StructB$. 
\end{proof}

	\subsection{The CLAP Algorithm}
	\label{sec:CLAP}
	The CLAP algorithm~\cite{CiardoZivny2023CLAP} combines the BLP and the AIP relaxationss.
	It first iteratively reduces the solution space with the BLP by fixing partial solutions to $1$ and discarding those for which this refined BLP is not solvable.
	Then BLP+AIP is run on the restricted solution space, where again a partial solution is fixed:
		\begin{algobox}[breakable]{$\CLAP{\StructA}$-algorithm:
			input a $\CSP{\StructA}$-instance $\StructB$}
		\begin{enumerate}
			\item Maintain, for each pair of a relation symbol $R\in \sig$ and a  tuple $\tup{b} \in R^\StructB$,
			a set $S_{\tup{b},R} \subseteq R^\StructA$ of possible images of $\tup{b}$ under a homomorphism.
			Initialize $S_{\tup{b},R} := R^\StructA$ for all $R\in \sig$ and $\tup{b} \in R^\StructB$. \label{itm:clap-init}
			\item Repeat until no set $S_{\tup{b},R}$ changes anymore:
			For each $R\in\sig$, $\tup{b} \in R^\StructB$, and $\tup{a} \in S_{\tup{b},R}$, solve $\blk{1}{\StructA}{\StructB}$ together with the following additional constraints:
			\begin{align*}
				\mu_{R,\tup{b},\tup{a}} &= 1,\\
				\mu_{R,\tup{b}',\tup{a}'} &= 0 &\text{for all } R' \in \sig, \tup{b}'\in R'^\StructB, \tup{a}' \not\in S_{\tup{b'},R'}.
			\end{align*}
			If this system is not feasible, remove $\tup{a}$ from $S_{\tup{b},R}$.\label{itm:clap-refine-images}
			\item If there are $R\in\sig$ and $\tup{b}\in R^\StructB$ such that $S_{\tup{b},R} =\emptyset$, then reject\label{itm:clap-no-possible-image}.
			\item For each $R \in \sig$, $\tup{b} \in R^\StructB$, and $\tup{a} \in S_{\tup{b},R}$, execute $\BAk{1}{\StructA}$ (which is BLP+AIP) on $\StructB$, where we additionally fix
			\begin{align*}
				\mu_{R,\tup{b},\tup{a}} &= 1,\\
				\mu_{R,\tup{b}',\tup{a}'} & = 0 &\text{for all } R' \in \sig, \tup{b}' \in R'^\StructB, \tup{a}' \not\in S_{\tup{b}',R'}
			\end{align*}
			in Step~\ref{itm:bak-interior-point} of $\BAk{1}{\StructA}$
			(and thus also implicitly in $\aipk{1}{\StructA}{\StructB}$ in Step~\ref{item:bak-refined-constr} of $\BAk{1}{\StructA}$).
			If $\BAk{1}{\StructA}$ accepts, then accept.\label{itm:clap-call-bak}
			\item If $\BAk{1}{\StructA}$ rejects all inputs in the step before, then reject.\label{itm:clap-reject}
		\end{enumerate}
	\end{algobox}
	
	\noindent To simplify the analysis,
	we consider a variant of the CLAP algorithm
	which we call CLAP$'$.
	
	\begin{algobox}[top=0.4em]{$\CLAPw{\StructA}$-algorithm: input a $\CSP{\StructA}$-instance $\StructB$}
		Execute Steps~\ref{itm:clap-init} to~\ref{itm:clap-no-possible-image} of $\CLAP{\StructA}$. Then execute
		\begin{enumerate}[label=\arabic*{'}.,ref=\arabic*{'}]
			\setcounter{enumi}{3}
			\item Execute $\BAk{1}{\StructA}$ on $\StructB$\, where we additionally fix
			\begin{align*}
				\mu_{R,\tup{b}',\tup{a}'}& = 0 &\text{for all } R' \in \sig, \tup{b}' \in R'^\StructB, \tup{a}' \not\in S_{\tup{b}',R'}.
			\end{align*}
			Accept if $\BAk{1}{\StructA}$ accepts this input and reject otherwise.\label{itm:clap'-call-bak}
		\end{enumerate}
	\end{algobox}
	\noindent It is immediate that $\CLAPw{\StructA}$ does not solve more CSPs than $\CLAP{\StructA}$.
	We show that it actually solves the same ones:
	\begin{lemma}
		\label{lem:clap-iff-weak}
		For every structure $\StructA$,
		$\CLAP{\StructA}$ solves $\CSP{\StructA}$ if and only if $\CLAPw{\StructA}$ solves $\CSP{\StructA}$.
	\end{lemma}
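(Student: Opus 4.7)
The forward direction, that CLAP$'$ solving $\CSP{\StructA}$ implies CLAP solving $\CSP{\StructA}$, is the already-noted easy direction. By contraposition, it suffices to show that CLAP accepting an instance $\StructB$ forces CLAP$'$ to accept $\StructB$ as well. This is essentially immediate: if CLAP accepts via some fix $\mu_{R,\tup{b},\tup{a}}=1$, the integer solution produced by that run of $\BAk{1}{\StructA}$ also satisfies the (weaker) CLAP$'$ system, because the polytope $P_{\mathrm{fix}} = P_{\mathrm{free}} \cap \{\mu_{R,\tup{b},\tup{a}}=1\}$ is a face of the unrestricted polytope $P_{\mathrm{free}}$, so the variables identically zero on $P_{\mathrm{fix}}$ form a superset of those identically zero on $P_{\mathrm{free}}$.

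For the converse, the plan is to show that $\CLAPw{\StructA}$ accepting $\StructB$ forces $\CLAP{\StructA}$ to accept $\StructB$. Let $z$ be the integer solution to the refined AIP produced by CLAP$'$. Summing equation~\ref{eqn:ip-hom} over $\tup{a} \in R^{\StructA}$ with a one-index tuple $\tup{i}$ and using~\ref{eqn:ip-only-one-local-solution} yields $\sum_{\tup{a}\in R^{\StructA}} z(\mu_{R,\tup{b},\tup{a}}) = 1$ for every pair $(R,\tup{b})$. The pruning in Step~2 of CLAP$'$ simultaneously forces $z(\mu_{R,\tup{b},\tup{a}}) = 0$ for $\tup{a}\notin S_{\tup{b},R}$. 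Consequently, for each $(R,\tup{b})$ there is some $\tup{a}^{*}\in S_{\tup{b},R}$ with $z(\mu_{R,\tup{b},\tup{a}^{*}}) \neq 0$, giving a natural candidate triple for CLAP's fix.

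It then remains to verify that the $\BAk{1}{\StructA}$ run with fix $\mu_{R^{*},\tup{b}^{*},\tup{a}^{*}}=1$ is successful for a suitable choice of such a triple. The BLP step is unproblematic because feasibility is automatic from $\tup{a}^{*}\in S_{\tup{b}^{*},R^{*}}$, so the hard part will be integer feasibility of the refined AIP: the face $P_{\mathrm{fix}}$ may possess strictly more identically-zero variables than $P_{\mathrm{free}}$, and $z$ itself need not satisfy these extra zero constraints. The main obstacle is therefore to argue that one can select the triple so as to either match $Z_{\mathrm{fix}}$ with $Z_{\mathrm{free}}$, or, failing that, to modify $z$ by an element of the integer kernel of the AIP system that absorbs the additional zeros while preserving $\mu_{R^{*},\tup{b}^{*},\tup{a}^{*}}=1$. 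The leverage I expect to exploit is that every $v\in Z_{\mathrm{fix}}\setminus Z_{\mathrm{free}}$ is forced to zero by a rational linear combination of the BLP equalities that uses the fix $\mu_{R^{*},\tup{b}^{*},\tup{a}^{*}}=1$, which should provide the combinatorial room for the required integer adjustment.
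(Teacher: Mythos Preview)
Your easy direction (CLAP$'$ solves $\Rightarrow$ CLAP solves) is fine and matches the paper.

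The hard direction has a genuine gap. You are attempting to prove the instance-level implication ``CLAP$'$ accepts $\StructB$ $\Rightarrow$ CLAP accepts $\StructB$'', which is strictly stronger than what the lemma requires and, as far as anyone knows, may simply be false. Your own analysis identifies the obstruction: fixing $\mu_{R^*,\tup{b}^*,\tup{a}^*}=1$ passes to a proper face of the BLP polytope, which can force additional variables to be identically zero there, and the integer solution $z$ you obtained from CLAP$'$ need not respect these extra zeros. Your proposed remedies (choose the triple cleverly, or adjust $z$ by an element of the integer kernel) are not carried out, and there is no reason to expect either to succeed in general; the ``leverage'' you mention is just the observation that the new zero constraints are rationally implied by the fix, which says nothing about integer feasibility.

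The paper sidesteps this entirely by \emph{changing the instance}. Given a no-instance $\StructB$ accepted by CLAP$'$, it forms $\StructB'$ as the disjoint union of $\StructB$ with a single $r$-tuple $\tup{x}$ placed in some nonempty relation $T$. Then $\StructB'$ is still a no-instance, Steps~1--3 behave on $\StructB'$ at least as permissively as on $\StructB$, and in Step~4 one fixes the image of the \emph{gadget} tuple $\tup{x}$. Because the gadget is disconnected from $\StructB$, this fix imposes no constraint on the $\StructB$-part, so the resulting BA$^1$ run is essentially the unrestricted run that CLAP$'$ already passed. Hence CLAP accepts $\StructB'$, and CLAP fails to solve $\CSP{\StructA}$. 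The point is that the lemma only concerns whether the algorithm \emph{solves} $\CSP{\StructA}$, not instance-by-instance agreement; exploiting that freedom is the missing idea in your attempt.
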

	\begin{proof}
		Let $\StructA$ be a template $\sig$-structure.
		It is clear that if $\CLAPw{\StructA}$ solves  $\CSP{\StructA}$, then also $\CLAP{\StructA}$ solves  $\CSP{\StructA}$.
		We show that if $\CLAPw{\StructA}$ does not solve $\CSP{\StructA}$, then $\CLAP{\StructA}$ does not solve $\CSP{\StructA}$, either.
		Let $\StructB$ be a $\sig$-structure such that $\StructB \notin \CSP{\StructA}$, but $\CLAPw{\StructA}$ accepts $\CSP{\StructA}$.
		We create a modified variant of $\StructB$ as follows.
		Let $T \in \sig$ be some relation symbol of arity~$r$
		that is non-empty in $\StructA$ (if $\StructA$ contains only empty relations, then CLAP and CLAP' can trivially solve $\CSP{\StructA}$).
		Let $\StructB'$ be the disjoint union of $\StructB$
		and the $r$\nobreakdash-element $\sig$-structure, for which one $r$-tuple of distinct elements $\tup{x}$ is contained in $T$.
		Obviously, we have $\StructB' \notin \CSP{\StructA}$.
		We show that $\CLAP{\StructA}$ accepts $\StructB'$.
		Since $\StructB'$ is a disjoint union,
		after Steps~\ref{itm:clap-init} to~\ref{itm:clap-no-possible-image}, the sets $S_{\tup{b},R}$ on input $\StructB'$ will contain at least the elements as on input $\StructB$.
		The set $S_{\tup{x},T}$ will be equal to $T^\StructA$
		because fixing the assignment of $\tup{x}$ does not restrict any other partial homomorphisms,
		and since CLAP' accepts $\StructB$, the system $\blp{\StructA}{\StructB}$ is solvable
		when an image of~$\tup{x}$ is fixed.
		In particular, no set  $S_{\tup{b},R}$ will be empty after Step~\ref{itm:clap-refine-images}.
		Hence, Step~\ref{itm:clap-no-possible-image} is passed successfully.
		Now for Step~\ref{itm:clap-call-bak}, we consider the relation $T$ and the tuple $\tup{x}$.
		We consider the execution of BA$^1$, where an arbitrary image of $\tup{x}$ contained in $T$ is fixed.
		Because $\StructB'$ is a disjoint union and the mapping of $\tup{x}$ is a valid homomorphism from the attached structure to $\StructA$
		and because $\BAk{1}{\StructA}$ accepts in Step~\ref{itm:clap'-call-bak},
		$\BAk{1}{\StructA}$ will accept in Step~\ref{itm:clap-call-bak} for the tuple $\tup{x}$.
		Hence, $\CLAP{\StructA}$ wrongly accepts $\StructB'$, which means that it does not solve $\CSP{\StructA}$.
	\end{proof}
	
	\begin{theorem}
	\label{thm:clap-does-not-solve-all}
	$\CLAP{\ZtwoOrThreeInst}$ does not solve $\CSP{\ZtwoOrThreeInst}$.
\end{theorem}
	\begin{proof}
		We prove the result for $\CLAPw{\StructA}$, which is sufficient by Lemma \ref{lem:clap-iff-weak}.
		Let $k=3$.
		As in the proofs of Theorems \ref{thm:BLP-does-not-solve-bounded-color-class} and \ref{thm:z-affine-does-not-solve-bounded-color-class}, 
		we consider ternary Tseitin systems over $\ZZ_2$ and $\ZZ_3$
		for a sufficiently large $3$-regular $2$-connected expander graph.
		Let again $\StructB_1$ and $\StructB_2$ be these instances,
		which for $p_i= i+1$ are no-instances for $\CSP{\CosetGrpTmplt{\ZZ_{p_i}}{3}}$
		for both $i \in[2]$.
		Again, let $\StructB := \OR{\StructB_1,\StructB_2}$
		and $\StructA := \CSP{\ZtwoOrThreeInst}$.
		So, $\StructB \notin \CSP{\StructA}$.
		
		By Corollary \ref{cor:group-csp-p-solution-with-fixed-assignment}, for each $i \in [2]$, and every $f \in \Hom{\StructB_i[X]}{\StructA_i}$ that is robustly consistent, 
		there exists a $p_i$-solution $\Phi_{i,f}$ to $\cspiso{k}{\CosetGrpTmplt{\ZZ_{p_i}}{3}}{\StructB_i}$ which sets $x_{X,f}$ to~$1$ and every partial homomorphism that is not robustly consistent to~$0$.
		By Lemma \ref{lem:hom-or-tractable-solution}, each $\Phi_{i,f}$ translates into a $p_i$-solution $\Psi_{i,f}$ for $\cspiso{k}{\StructA}{\StructB}$ that sets every partial homomorphism to $1$ which agrees with $f$ on $B_i$ and sends the $B_{3-i}$-part of its domain to $c_{3-i}$.
		We call such partial homomorphisms (for both $i\in[2]$)
		again also robustly consistent.
		Let $F$ denote the set of al robustly consistent  partial solutions 
		$\StructB[X] \to \StructA$ for every $X \in \tbinom{B}{\leq k}$.
		Now consider Step \ref{itm:clap-refine-images} of CLAP. The algorithm adds in particular the equation $\mu_{R,\tup{b},\tup{a}} = 1$ to the systems considered in $\BAk{1}{\StructA}$.
		If there is some $f \in F$ that contains the assignment $\tup{b} \mapsto \tup{a}$, then $\Psi_{i,f}$ gives us a solution for $\BAk{1}{\StructA}$ (via Lemma \ref{lem:cspiso-implies-aip}) that also satisfies $\mu_{R,\tup{b},\tup{a}} = 1$.
		By Lemmas \ref{lem:robustly-consistent-for-all-small-contexts} and \ref{lem:hom-or-tractable-solution}, for every $R \in \tau(\StructA)$ and $\bar{b} \in R^{\StructA}$, there is at least one $\bar{a} \in R^{\StructA}$ such that we can find an $f \in F$ containing $\bar{b} \mapsto \bar{a}$. All tuples $\bar{a}$ that are removed from $S_{\bar{b},R}$ in Step \ref{itm:clap-refine-images} do not satisfy that $\bar{b} \mapsto \bar{a}$ is contained in an $f \in F$. This means that $\bar{b} \mapsto \bar{a}$ is not part of a robustly consistent partial solution of $\StructB_1$ or $\StructB_2$. Thus, it will be set to zero by all the $\Phi_{i,f}$ and $\Psi_{i,f}$, and hence, these solutions also satisfy the extra equations $\mu_{R,\tup{b},\tup{a}} = 0$ in Step \ref{itm:clap-refine-images}.
		In total, Step \ref{itm:clap-no-possible-image} of CLAP is passed successfully, and the only tuples $\bar{a}$ that are removed from $S_{\bar{b},R}$ are such that $\bar{b} \mapsto \bar{a}$ is not part of a robustly consistent partial solution.
		In Step~\ref{itm:clap'-call-bak}, $\CLAPw{\StructA}$ will then accept:
		The proof of Theorem~\ref{thm:BLP-does-not-solve-bounded-color-class} shows that $\BAk{1}{\StructA}$ accepts $\StructB$, and it can be seen that this proof also goes through if we set $\mu_{R,\tup{b}',\tup{a}'} = 0$ for all partial solutions $\bar{b}' \mapsto \bar{a}'$ that are not robustly consistent.
	\end{proof}	
	
	\noindent In contrast to the $\ZZ$-affine $k$-consistency relaxation and the BA$^k$ algorithms,
	CLAP is not parameterized by a width $k$.
	However, we did not exploit this fact, and our techniques could also be applied to a version of CLAP parameterized by a width.
	
	The reason why our simplified algorithm CLAP' is equivalent to CLAP is because CLAP immediately accepts	if Step~\ref{itm:clap-call-bak} is passed successfully for at least one tuple.
	One could modify CLAP so that Step~\ref{itm:clap-call-bak} has to find one possible image for all $R\in\sig$ and all $\tup{b}\in R^\StructB$. This would still be a sound algorithm.
	Ciardo and Živný~\cite{CiardoZivny2023CLAP} already noted this possibility
	when introducing CLAP, and moreover suggested a possibly even stronger version:
	replace BLP with BLP+AIP in Step~\ref{itm:clap-refine-images},
	which in turn would make Steps~\ref{itm:clap-call-bak} and~\ref{itm:clap-reject} unnecessary.
	The authors refer to this algorithm as C(BLP+AIP)
	but considered CLAP because it allows  to characterize
	the CSPs solved by CLAP in terms
	of the polymorphisms of the template structure~$\StructA$.
	A similar characterization for C(BLP+AIP) has recently been found by Zhuk \cite{zhuk2025singletonalgorithms} (where the algorithm is called CSingl(BLP + AIP)).
	We do not study C(BLP+AIP) in this article but suspect that it has similar properties as the cohomological algorithm, which we turn to next.
	In particular, we believe that Theorem \ref{thm:clap-does-not-solve-all} is not true for C(BLP+AIP).
	
	\subsection{The Cohomological \texorpdfstring{$k$}{k}-Consistency Algorithm}
	\label{sec:cohomology}
		We review the cohomological $k$-consistency algorithm due to Ó Conghaile \cite{OConghaile22}.
	It combines techniques of the algorithms we have seen so far --
	the iterative approach of $k$-consistency with solving the AIP with a fixed local solution (called Singleton-AIP in \cite{zhuk2025singletonalgorithms})
	in \emph{every} iteration.
	The name references \emph{cohomology} because solving the AIP 
	can be interpreted as checking for the existence of a cohomological obstruction in the presheaf of partial homomorphisms. The algorithm itself is straightforward and can be stated without the categorical terminology:
	\begin{algobox}{Cohomological $k$-consistency algorithm:
			input a $\CSP{\StructA}$-instance $\StructB$}
		\begin{enumerate}
			\item Maintain, for each $X \in \binom{B}{ \leq k}$, a set $\Hh(X) \subseteq \Hom{\StructB[X]}{\StructA}$. Initialize $\Hh(X) := \Hom{\StructB[X]}{\StructA}$.
			\item Repeat until none of the sets $\Hh(X)$ changes anymore: 
			\begin{enumerate}[label=(\alph*)]
				\item Run the $k$-consistency algorithm on $\Hh$ to remove from each $\Hh(X)$ the partial homomorphisms that fail the forth-condition or down-closure property.
				\item For each $X \in \binom{B}{ \leq k}$ and $f \in \Hh(X)$, check whether $\zafkleq{k}{\StructA}{\StructB}{\Hh}$ has a solution that satisfies $x_{X,f} = 1$ and $x_{X,f'} = 0$ for every $f' \in \Hh(X) \setminus \{f\}$. If it does not, then remove $f$ from $\Hh(X)$ for the next iteration of the loop. 
			\end{enumerate}	
			\item If $\Hh(X) = \emptyset$ for some $X \in \binom{B}{\leq k}$, then reject; otherwise accept. 
		\end{enumerate}
	\end{algobox}
	Step 2(b) of the algorithm tries to check whether
	there is a global homomorphism whose restriction to $X$ is equal to $f$
	-- and this check is approximated by solving the AIP
	in which we set $x_{X,f} = 1$ and $x_{X,f'} = 0$ for all other $f'$ with domain $X$. It is specifically this fixing of a local solution $f$ in the AIP that makes the cohomological algorithm more powerful than the previous ones: 
	Indeed, as shown in Theorem \ref{thm:cohomologySolvesCounterexample} below, it correctly solves the template $\CSP{\ZtwoOrThreeInst}$ that we have used as a counterexample for the other algorithms. 
	
	Nonetheless, we can also show a limitation of this algorithm: It fails to solve the \emph{intractable} homomorphism or-construction on $\bbZ_2$ and $\bbZ_3$. This proves unconditionally, that is, without any complexity-theoretic assumptions like P $\neq$ NP, that this polynomial-time algorithm does not solve all finite-domain CSPs.

	\begin{theorem}
	\label{thm:cohomologySolvesCounterexample}
	If $\StructA_1, \StructA_2$ are templates of Abelian coset-CSPs
	and $k\geq \arity{A_i}+1$ for both $i\in[2]$,
	then cohomological $k$-consistency solves $\CSP{\ORT{\StructA_1,\StructA_2}}$.
	\end{theorem}
	\begin{proof}
		We first argue that the cohomological $k$-consistency algorithm
		correctly rejects or-instances $\StructB=\OR{\StructB_1,\StructB_2} \notin \CSP{\ORT{\StructA_1, \StructA_2}}$. Let $\StructA = \ORT{\StructA_1, \StructA_2}$.
		
		We argue that when the algorithm terminates, $\Hh(X) = \emptyset$ for at least one $X \in \binom{B}{\leq k}$.
		Consider some $X \subseteq B_1$ that is exactly the set of entries of some tuple $\bar{b} \in R^{\StructB_1}$, for some $R \in \tau(\StructB_1)$.
		Let $f : X \to A_1$ be an arbitrary partial homomorphism. Then $\cspiso{k}{\StructA}{\StructB}$ does not have an integral solution $\Phi$ that satisfies $\Phi(x_{X,f}) = 1$ and $\Phi(x_{X,f'}) = 0$ for every $f' \in \Hh(X) \setminus \{f\}$: By Lemma~\ref{lem:integerSolutionWithLocalFixingSolvesBi}, such a $\Phi$ would in particular be a solution to $\cspiso{\StructA_1}{k-1}{\StructB_1}$. 
		But it is known that for Abelian coset-CSPs, the existence of an integral solution to $\cspiso{\StructA_1}{k-1}{\StructB_1}$, where $k-1 \geq \arity{\StructA_1}$, is equivalent to $\StructB_1 \in \CSP{\StructA_1}$ (see also Theorem \ref{thm:abelianSolvedByAIP}). Thus, since $\StructB_1 \notin \CSP{\StructA_1}$, such a $\Phi$ cannot exist.
		Then in particular, $\zafkleq{k}{\StructA}{\StructB}{\Hh}$ does not have such an integral solution, even if $\Hh(X) = \Hom{\StructB[X]}{\StructA}$, as initially. 
		Hence, all $f$ with $f(X) \subseteq A_1$ are removed from $\Hh(X)$ in this iteration. Since $X$ is the entry set of $\bar{b} \in R^{\StructB_1}$, and $R^{\StructA} \subseteq A_1^3 \cup \{(c_1,c_1,c_1)\}$, the only other partial homomorphism in $\Hh(X)$ is the one with $f(X) = \{c_1\}$. This is the only homomorphism that may still be in $\Hh(X)$ after the first iteration.
		We can also consider another $X' \subseteq B_{2}$ that is the set of entries of some $\bar{b}' \in R'^{\StructB_2}$, and the same argument shows that after the first iteration, there is at most the partial homomorphism $f$ with $f(X') = \{c_2\}$ in $\Hh(X')$.
		Then consider the set $\{x,x'\}$ for some $x \in X, x' \in X'$. 
		After $k$-consistency is run in the second iteration, $\Hh(\{x,x'\})$ will be empty. This is because $\Hh(X)$ and $\Hh(X')$ enforce that $x$ is mapped to $c_1$ and $x'$ is mapped to $c_2$, but every partial homomorphism in $\Hom{\StructB[x,x']}{\StructA}$ maps either $x$ or $x'$ to an element of $A_1$ or $A_2$, respectively.

		We now not only want to show that the cohomological $k$-consistency algorithm correctly rejects or-instances but every $\CSP{\ORT{\StructA_1,\StructA_2}}$-instance that does not have a solution.
		Assume that $\StructB$ is an unsatisfiable $\CSP{\ORT{\StructA_1,\StructA_2}}$-instance.
		To show that the algorithm rejects $\StructB$, we follow the algorithm to solve this CSP provided in the proof of Lemma~\ref{lem:hom-or-tractable}.
		For $i\in[2]$, let $B_i\subseteq B$ be the set of all $\sig_i$-vertices.
		We can assume that~$B_1$ and~$B_2$ form a partition of~$B$:
		a vertex which is neither in~$B_1$ nor in~$B_2$ is isolated,
		and a vertex in $B_1\cap B_2$ appears in relations of both~$\sig_1$ and~$\sig_2$,
		which is not the case for any element of $\StructA$, and thus $k$-consistency would immediately reject $\StructB$.
		The proof of Lemma~\ref{lem:hom-or-tractable} shows that if $\StructB$ is unsatisfiable, then there is some $S$-component $D$ in the graph $G_S$ that contains both an unsatisfiable $\tau_1$- and unsatisfiable $\tau_2$-component (we refer to that proof for the terminology). Therefore we know that such an $S$-component $D$ exists and we will now argue that the cohomological algorithm detects it.

		We first note that $k$-consistency detects $\sig_i$-components in the following sense: for every $i\in[2]$ and  $X\subseteq B_i$ that is also a subset of a single $\sig_i$-component,
		$k$-consistency discards all partial homomorphisms $f\colon \StructB[X] \to \StructA$ such that there are~$b$ and~$b'$
		with $f(b) \in A_i$ and $f(b') = c_i$.
		This is the case because~$b$ and~$b'$ are connected via relations in $\sig_i$,
		but $f(b)$ and $f(b')$ are not connected via $\sig_i$-relations in the tractable homomorphism-or construction,
		which can easily be detected by $k$-consistency because $k \geq \arity{\StructA_i} + 1$.
		
		For the now  following Step~2 in the cohomological algorithm,
		we show that unsatisfiable $\sig_i$\nobreakdash-components (for both $i \in [2]$) are detected:
		Let $i\in[2]$ and $D$ be a $\sig_i$-component such that $\StructB[D] \notin \CSP{\StructA_i}$.
		Let $X \subseteq D$ be of size at most $k$.
		If we now fix a partial homomorphism $f\colon \StructB[X] \to \StructA_i$
		by setting its variable to $1$,
		we in particular set the partial homomorphism $g\colon \StructB[X] \to \StructA$ with $g(X) = \{c_i\}$ to $0$.
		Now consider Lemma~\ref{lem:fixingIntegerSolutionInOr} but only for the component~$D$.
		One can easily show that $k$-consistency has already discarded the partial homomorphisms
		for which we showed in Claim~\ref{clm:fixingIntegerSolutionInOr-A} and~\ref{clm:fixingIntegerSolutionInOr-B} in the proof of Lemma~\ref{lem:fixingIntegerSolutionInOr} that their variable is set to $0$.
		Then similarly to Claim~\ref{clm:fixingIntegerSolutionInOr-C} in that proof,
		a solution to $\zafkleq{k}{\StructA}{\StructB}{\Hh}$
		where we set~$f$ to~$1$ and the other partial homomorphisms to~$0$,
		has to induce a solution to $\cspiso{k}{\StructA_i}{\StructB[D]}$.
		But for~$\StructA_i$, we know that AIP solves $\CSP{\StructA_i}$.
		This implies that there is no solution to $\zafkleq{k}{\StructA}{\StructB}{\Hh}$ that fixes~$f$ in this sense.
		So similarly to the case of proper or-instances above, we can show that
		$f$ is discarded by the cohomological algorithm.
		
		Now consider the next iteration of the algorithm,
		in which $k$-consistency is executed again.
		As with $\sig_i$-components, the $k$-consistency algorithm detects $S$\nobreakdash-components.
		So let $D$ be an $S$\nobreakdash-component
		in which neither all $\sig_1$-components are solvable nor all $\sig_2$-components are solvable.
		Let $D^i$ be an unsolvable $\sig_i$-component in $D$ and $D^{3-i}$
		a neighbored $\sig_{3-i}$-component of~$D^i$ in~$G_S$,
		which means that there are vertices $u \in D^i$ and $v \in D^{3-i}$
		connected via an $S$-edge.
		All partial homomorphisms mapping vertices in~$D^i$ to~$A_i$ have already been discarded,
		which means that only maps to $c_i$ remain.
		That means that all homomorphisms $f \colon \set{u,v} \to B$
		which map $v$ to $c_{3-i}$ are also discarded because the map $uv \mapsto c_ic_{3-i}$ is not a partial homomorphism.
		In particular, the partial homomorphism $v\to c_{3-i}$ is discarded.
		By the reasoning before, all partial homomorphisms from $D^{3-i}$ to $c_{3-i}$
		get discarded by $k$-consistency.
		This is then propagated to the $\sig_i$-components which are neighbors of $D^{3-i}$ in the sense that in those, the partial homomorphisms to $A_i$ are discarded and only those to $c_i$ remain.
		This propagation continues through the whole $S$-component $D$.
		Because there is an  unsatisfiable $\sig_1$-component and an unsatisfiable $\sig_2$-component in $D$, 
		at some point all partial homomorphisms of some vertex in $D$ get discarded.
		But that means that $k$-consistency rejects the input
		and so the cohomological $k$-consistency algorithm correctly rejects $\StructB$.
	\end{proof}		
	
	After this positive result about the power of cohomological $k$-consistency, we now turn to the NP-complete counterexample.
	
	\begin{theorem}
	\label{thm:cohomologyDoesNotSolveAllCSP}
	There is an NP-complete template structure $\StructA$ such that for every sublinear function in the instance size $k \in o(n)$, the cohomological $k$-consistency algorithm does not solve $\CSP{\StructA}$.
\end{theorem}
	\begin{proof}
		Let $p_1=2$, $p_2=3$ and let $\StructA_i =\CosetGrpTmplt{\ZZ_{p_i}}{3}$ be the template structure
		for ternary $\ZZ_{p_i}$-coset CSPs.
		Set $\StructA := \ORNPC{\StructA_1,\StructA_2}$. 
		We show that the cohomological $k$-consistency algorithm does not solve $\CSP{\StructA}$.
		Let $G = (V,E)$ be a sufficiently large $3$-regular $2$-connected expander graph and $H$ be an orientation of~$G$. Let $k \coloneqq k(|E|)$.
		Let $\lambda_i \colon V(G) \to \ZZ_2$ be zero apart from one arbitrary vertex that is mapped to $1$.
		Let $\StructB_i := \grpCSP{H}{\ZZ_{p_i}}{\lambda_i}$
		and $\StructB := \OR{\StructB_1,\StructB_2}$.
		Consider the following family of sets of partial homomorphisms:
		for each $X \in \tbinom{B}{\leq k}$ define $\Hh(X) \subseteq \Hom{\StructB[X]}{\StructA}$ as follows.
		For $i\in[2]$, let $X_i = X \cap B_i$.
		Let $\Hh_i(X_i)$ be the set of all robustly consistent homomorphisms $\StructB_i[X_i] \to \StructA_i$.
		All $f_1 \in \Hh_1(X_1)$ and $f_2 \in \Hh_2(X_2)$
		induce a partial homomorphism $f \in \Hom{\StructB[X]}{\StructA}$
		by Lemma~\ref{lem:hom-or-intractable-compose}.
		Let $\Hh(X)$ be the sets of all these homomorphisms.
		We show that this family of partial homomorphisms is stable
		under the cohomological $k$-consistency algorithm.
		By Lemma~\ref{lem:robustlyConsistentSurviveKconsistency},
		the robustly consistent partial homomorphisms are not ruled out by $k$-consistency for each $\StructB_i$
		and by Lemma~\ref{lem:hom-or-intractable-k-consistency},
		the induced ones for $\StructB$ are also not ruled out by $k$\nobreakdash-consistency
		on the intractable homomorphism-or construction.
		Let $X \in \tbinom{B}{\leq k}$
		and $f \in \Hh(X)$ be induced by $f_i \in \Hh_i(X_i)$, for an $i \in [2]$.
		By Corollary~\ref{cor:group-csp-p-solution-with-fixed-assignment},
		$\cspiso{k}{\StructA_i}{\StructB_i}$ has a solution $\Phi$
		such that $\Phi(x_{X_i,f_i}) = 1$.
		By Lemma~\ref{lem:hom-or-intractable-solution},
		$\cspiso{k}{\StructA}{\StructB}$ has a solution $\Psi$
		such that $\Psi(x_{X,f}) = \Phi(x_{X_i,f_i}) = 1$.
		Hence indeed, this family of partial homomorphisms is stable under the cohomological $k$-consistency algorithm.
		Since by Lemma~\ref{lem:fHatCanBeExtended}
		none of the sets $\Hh(X)$ is empty,
		$\StructB$ is accepted by the cohomological $k$-consistency algorithm.
		However, $\StructB \notin \CSP{\StructA}$ by Lemma~\ref{lem:hom-or-construction-correct} and
		because $\StructB_i \notin \CSP{\StructA_i}$ for both $i\in[2]$.
		
		To show that $\CSP{\StructA}$ is NP-complete,
		it suffices 
		by Lemmas~\ref{lem:intractable-or-NPC-basic} and~\ref{lem:intractable-or-NPC-reduce}
		that there is an inclusion-wise minimal no-instance $\StructC_i \notin \CSP{\StructA_i}$ of size $3$ for every $i\in [2]$.
		This is, e.g., achieved by the equations $x_1+x_2+x_3 = 1$
		and $x_1+x_2+x_3 = 0$,
		which over both $\ZZ_2$ and $\ZZ_3$ form an inclusion-wise minimal no-instance:
		deleting one variable removes all equations.
	\end{proof}

	\section{Affine Algorithms and Coset-CSPs}	
	\label{sec:groupStuff}
	The counterexample we have used so far is not a coset-CSP itself, but a combination of two Abelian coset-CSPs.
	We now set out to explore the power of the affine algorithms on coset-CSPs.
	Effectively, this is asking the question which coset-CSPs are reducible to a CSP over the infinite Abelian group $(\bbZ,+)$. Our answer is summarized in Theorem \ref{thm:mainPowerOnGroupCSPs}, whose three parts we now prove.
	
	\mainGroupCSPs*
	
		The three parts of this theorem are proved by Theorems \ref{thm:abelianSolvedByAIP}, \ref{thm:AIPsolvesOdd2nilpotent}, and \ref{thm:counterexampleSymmetricGroup} below.
		
	\paragraph*{Abelian Groups.}
	The coset CSPs of Abelian groups are solved by any of the affine algorithms, in fact already by the simplest possible one, which is AIP. This just checks for solvability of the basic affine integer relaxation (AIP) of a CSP. 
	This relaxation is the system $\aipk{k}{\StructA}{\StructB}$ for $k=1$ introduced in Section \ref{sec:BLP}, and 
	every algorithm we have studied clearly solves at least those CSPs that AIP can solve.
	It can be derived from the literature that already AIP suffices to solve all Abelian coset-CSPs:
	\begin{theorem}
		\label{thm:abelianSolvedByAIP}
		If $\Gamma$ is Abelian, then AIP solves $\CSP{\CosetGrpTmplt{\Gamma}{r}}$ for every $r \in \bbN$.
	\end{theorem}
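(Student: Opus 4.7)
The forward direction is immediate: any homomorphism $h\colon \StructB \to \CosetGrpTmplt{\Gamma}{r}$ yields an integer AIP solution by setting $\lambda_{X,f} = 1$ exactly when $f = \restrict{h}{X}$ and $\mu_{R,\tup{b},\tup{a}} = 1$ exactly when $\tup{a} = h(\tup{b})$, with everything else $0$. The real work lies in the converse: extracting a homomorphism from a given integer solution $\Phi$ of $\aipk{1}{\CosetGrpTmplt{\Gamma}{r}}{\StructB}$.

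My plan is to define a candidate map $g\colon B \to \Gamma$ by a $\ZZ$-affine average,
\[
g(b) \;:=\; \sum_{a \in \Gamma} \Phi(\lambda_{\set{b}, b \mapsto a}) \cdot a,
\]
where the sum uses the Abelian group operation of $\Gamma$ together with integer scalar multiplication, both of which are well-defined precisely because $\Gamma$ is Abelian. By equation~\ref{eqn:ip-only-one-local-solution} applied to $X = \set{b}$, the coefficients $\Phi(\lambda_{\set{b}, b \mapsto a})$ sum to $1$, so each $g(b)$ is a genuine $\ZZ$-affine combination of elements of $\Gamma$.

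To show that $g$ preserves every constraint $\tup{b} = (b_1,\dots,b_r) \in R^\StructB$ with $R^{\CosetGrpTmplt{\Gamma}{r}} = \Delta\delta$ a coset of some $\Delta \leq \Gamma^r$, I would use equation~\ref{eqn:ip-hom} (with $k=1$) to rewrite the single-variable marginals appearing in each $g(b_i)$ in terms of the joint weights $\Phi(\mu_{R,\tup{b},\tup{a}})$. Interchanging the order of summation yields
\[
(g(b_1),\dots,g(b_r)) \;=\; \sum_{\tup{a} \in \Delta\delta} \Phi(\mu_{R,\tup{b},\tup{a}}) \cdot \tup{a},
\]
again a $\ZZ$-affine combination whose weights sum to $1$ (by~\ref{eqn:ip-only-one-local-solution} combined with~\ref{eqn:ip-hom}). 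The argument then closes because cosets of subgroups of an Abelian group are closed under $\ZZ$-affine combinations: writing each $\tup{a}_i = \delta_i + \delta$ with $\delta_i \in \Delta$, one checks $\sum_i \lambda_i \tup{a}_i = (\sum_i \lambda_i \delta_i) + \delta \in \Delta\delta$ whenever $\sum_i \lambda_i = 1$, using commutativity of $+$ and the subgroup property of $\Delta$. Hence $g(\tup{b}) \in R^\StructA$ and $g$ is a homomorphism.

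The only real obstacle is the Abelian hypothesis itself: both the averaging construction and the coset-closure step require integer scalar multiplication to commute with the group operation, so the argument genuinely breaks for non-Abelian $\Gamma$. This is not a defect of the proof but a feature of AIP, as Theorem~\ref{thm:mainPowerOnGroupCSPs}\ref{itm:powerOnGroupsDontSolveNonAbelian} exhibits a non-Abelian coset-CSP on which AIP (and the stronger affine algorithms of Section~\ref{sec:power-of-affine}) fail.
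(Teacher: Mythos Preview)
Your argument is correct and gives a self-contained, constructive proof: you round an integer AIP solution to a homomorphism via the $\ZZ$-affine average $g(b)=\sum_a \Phi(\lambda_{\{b\},b\mapsto a})\cdot a$, then use equation~\ref{eqn:ip-hom} to rewrite each coordinate of $g(\tup{b})$ as $\sum_{\tup{a}\in R^\StructA}\Phi(\mu_{R,\tup{b},\tup{a}})\,a_i$, and finally use that cosets of subgroups of an Abelian group are closed under integer affine combinations. All steps go through; the only point worth making explicit is the case of repeated entries in $\tup{b}$, but the marginal equations~\ref{eqn:ip-hom} for different indices $i$ with the same $b_i$ all equal the same $\lambda$-variable, so the resulting coordinates of $\sum_{\tup{a}}\Phi(\mu_{R,\tup{b},\tup{a}})\,\tup{a}$ agree and equal $g(b_i)$.

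The paper takes a different, non-constructive route: it invokes the polymorphism characterization of AIP from~\cite{BartoBKO2021} (AIP solves $\CSP{\StructA}$ iff $\StructA$ has alternating polymorphisms of every odd arity) and then observes that the Abelian Maltsev term $f(x,y,z)=x-y+z$ generates all the required alternating operations $x_1-x_2+\cdots+x_{2n+1}$ by nesting. Your proof is more elementary in that it avoids the black-box characterization and directly exhibits the homomorphism; the paper's proof is shorter given the cited result and situates the statement within the general algebraic theory. The two are morally the same idea viewed from opposite ends: your averaging map is precisely the application of the alternating polymorphism to the support of the AIP solution, which is why both arguments hinge on commutativity in the same way.
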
	
	\begin{proof}
		Theorem 7.19 in \cite{BartoBKO2021} characterizes the power of AIP in terms of the polymorphisms of the template~$\CosetGrpTmplt{\Gamma}{r}$:
		The CSP is solved correctly by AIP if and only if $\CosetGrpTmplt{\Gamma}{r}$ has alternating functions of all odd arities as polymorphisms. The exact definition of an alternating function is not needed here but it suffices to know that in any Abelian group, the $(2n+1)$-ary function $a(x_1,...,x_{2n+1}) = x_1 - x_2 +x_3 - \cdots + x_{2n+1}$ is alternating \cite[Example 7.17]{BartoBKO2021}.
		We know that $\CosetGrpTmplt{\Gamma}{r}$ has the polymorphism $f(x,y,z) = x-y+z$, so we can use $f$ to generate a $2n+1$-ary function $a$ as above for every $n \in \bbN$, e.g., $a(x_1,x_2,x_3,x_4,x_5) = f(x_1,x_2,f(x_3,x_4,x_5))$, and likewise for higher $n$.  
	\end{proof}

	\paragraph*{2-Nilpotent Groups.}
	Surprisingly, Abelian problems are not the demarcation line for the power of affine algorithms. The following result shows that there exist non-Abelian groups for which AIP still works; these are certain 2-nilpotent groups, so intuitively, they are as close to being Abelian as possible. Formally, a group $\Gamma$ is \emph{2-nilpotent} if its commutator subgroup is contained in its center, i.e.\ the \emph{commutator} $\inv{\alpha}\inv{\beta}\alpha\beta$ of any two $\alpha,\beta \in \Gamma$ commutes with all elements of $\Gamma$. 
	\begin{theorem}
		\label{thm:AIPsolvesOdd2nilpotent}
		If $\Gamma$ is 2-nilpotent and of odd order, then AIP solves $\CSP{\CosetGrpTmplt{\Gamma}{r}}$ for every $r \in \bbN$. 
	\end{theorem}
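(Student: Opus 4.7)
The plan is to apply the characterization used in the proof of Theorem~\ref{thm:abelianSolvedByAIP}, namely \cite[Theorem 7.19]{BartoBKO2021}: AIP solves $\CSP{\CosetGrpTmplt{\Gamma}{r}}$ if and only if $\CosetGrpTmplt{\Gamma}{r}$ admits alternating polymorphisms of every odd arity. So the task is to construct, for each $n \in \nat$, a $(2n+1)$-ary alternating polymorphism of the template when $\Gamma$ is $2$-nilpotent of odd order.

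My starting point is the fact from Lemma~\ref{lem:group-csp-maltsev} that the Maltsev term $f(x,y,z) = xy^{\shortminus 1}z$ is a polymorphism of $\CosetGrpTmplt{\Gamma}{r}$ for any group. By iterated composition it yields the balanced word $a_{2n+1}(x_1,\dots,x_{2n+1}) := x_1 x_2^{\shortminus 1} x_3 x_4^{\shortminus 1} \cdots x_{2n+1}$, which is again a polymorphism (because the $\delta$-shifts telescope and the word has the form ``product of $K$-elements'' when applied to a coset $K\delta$). In the Abelian case, this $a_{2n+1}$ is exactly the alternating operation of \cite[Example 7.17]{BartoBKO2021}. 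For the non-Abelian $2$-nilpotent case I would exploit the Baer-type correspondence: since $[\Gamma,\Gamma] \subseteq Z(\Gamma)$, the commutator map $(x,y)\mapsto [x,y]$ is bi-homomorphic, and since $|\Gamma|$ (hence $|[\Gamma,\Gamma]|$) is odd, every central commutator has a unique square root in $Z(\Gamma)$. This allows one to define a commutative, associative operation $x \oplus y = xy\cdot [y,x]^{1/2}$ on the underlying set of $\Gamma$, turning $\Gamma$ into an Abelian group $(\Gamma,\oplus)$ with the same identity element, and to ``symmetrize'' $a_{2n+1}$ by multiplying it by a suitable central commutator-valued correction $c(x_1,\dots,x_{2n+1}) \in [\Gamma,\Gamma]$.

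Because $c$ takes values in the central subgroup $[\Gamma,\Gamma]$ and is itself expressible (via the bilinearity of the commutator and the odd-order square roots) as a product of commutators of the standard Maltsev term applied to the inputs, I expect the corrected operation $\tilde a_{2n+1}$ to again be a polymorphism of $\CosetGrpTmplt{\Gamma}{r}$: cosets $K\delta$ are mapped to $K\delta$ because every commutator $[k_i\delta, k_j\delta] = [k_i,k_j]$ lies inside $K$ (using that $K$ is a subgroup of $\Gamma^r$ and that central commutators are absorbed), and similarly for the central square roots. With $\tilde a_{2n+1}$ built compatibly from the Abelianized operation $\oplus$, it will then satisfy the same family of alternating identities as its Abelian counterpart, verified on adjacent equal pairs $(x_i,x_{i+1}) = (y,y)$: such a pair cancels in $\tilde a_{2n+1}$ and reduces it to $\tilde a_{2n-1}$ on the remaining arguments, uniformly in the position of the pair.

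The main obstacle will be the identity-checking in the previous paragraph: establishing, in the non-Abelian setting, that the shuffling of an adjacent equal pair to a different position really leaves $\tilde a_{2n+1}$ invariant. This is where commutativity of $(\Gamma,\oplus)$ is needed, and where the correction $c$ has to be chosen exactly so that the commutator discrepancies between the two sides (one at a time using $xy = [x,y]yx$) cancel after taking odd-order square roots. A clean way to organize this is to verify the identities on the Baer-corresponded Abelian group $(\Gamma,\oplus)$, where $\tilde a_{2n+1}$ becomes $x_1 \ominus x_2 \oplus x_3 \ominus \cdots \oplus x_{2n+1}$ and the alternating identities hold trivially, and then to translate back. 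Once the alternating polymorphisms of all odd arities are in place, \cite[Theorem 7.19]{BartoBKO2021} finishes the proof.
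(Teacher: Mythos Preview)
Your overall strategy matches the paper's: invoke the Baer trick to get an Abelian operation $\oplus$ on the same set, then argue that the alternating terms $x_1 \ominus x_2 \oplus \cdots \oplus x_{2n+1}$ are polymorphisms of $\CosetGrpTmplt{\Gamma}{r}$, and conclude via \cite[Theorem~7.19]{BartoBKO2021}. The gap is in the step where you argue that the corrected operation is a polymorphism.

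Your justification that ``cosets $K\delta$ are mapped to $K\delta$ because every commutator $[k_i\delta,k_j\delta]=[k_i,k_j]$ lies inside $K$'' is incorrect: in a $2$-nilpotent group the commutator is biadditive, so $[k_i\delta,k_j\delta]=[k_i,k_j]\,[k_i,\delta]\,[\delta,k_j]$, and the cross-terms $[k_i,\delta],[\delta,k_j]$ have no reason to lie in the arbitrary subgroup $K\leq\Gamma^r$. (It turns out the cross-terms do cancel in the full correction, but that is a computation you have not done, and it does not follow from the identity you wrote.) Your alternative phrasing, that $c$ is ``expressible as a product of commutators of the standard Maltsev term applied to the inputs'', does not help either: a commutator of two polymorphism values is not itself a polymorphism value, since the group multiplication is not among the polymorphisms of $\CosetGrpTmplt{\Gamma}{r}$.

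The paper closes this gap differently and more cleanly. It works only in the ternary case and shows that the Baer Maltsev $x\ominus y\oplus z$ equals $xy^{-1}z\,[x,y^{-1}]^{(m-1)/2}[xy^{-1},z]^{(m-1)/2}$, and then proves by induction that this expression is literally a term built from the original Maltsev polymorphism $d(a,b,c)=ab^{-1}c$: with $s=d(x,y,z)$ and $t=d(z,y,x)$, one has
\[
xy^{-1}z\,[x,y^{-1}]^{c}[xy^{-1},z]^{c}=\underbrace{d(\ldots d(d(t,s,t),s,t)\ldots,s,t)}_{c\text{ applications of }d}
\]
for every $c\geq 1$, and takes $c=(m-1)/2$. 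Since polymorphisms are closed under composition, this immediately makes $x\ominus y\oplus z$ a polymorphism---no direct coset-preservation argument is needed. Higher-arity alternating operations then follow by composing the ternary one exactly as in the Abelian case, so there is no need to handle all $n$ simultaneously as you propose.
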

	\begin{proof}
		We begin with some background. Let $\Gamma=(G, \cdot)$ be a group. For group elements $\alpha, \beta \in \Gamma$, their commutator is defined as $[\alpha, \beta] = \inv{\alpha}\cdot \inv{\beta} \cdot \alpha \cdot \beta$. The commutator subgroup of $\Gamma$ is denoted $[\Gamma, \Gamma]$ and is defined as the group generated by all $[\alpha, \beta]$, for $\alpha, \beta \in \Gamma$. Let $m$ be the exponent of $[\Gamma,\Gamma]$, the least common multiple of the order of all  elements in $[\Gamma,\Gamma]$. This is odd because whenever $|\Gamma|$ and hence $|[\Gamma,\Gamma]|$ are odd.
		For 2-nilpotent groups of odd order, we can apply the so-called ``Baer trick'' \cite{kompatscher2024, Isaacs} to obtain an Abelian group reduct. Define $(G,+)$ as the group on the same universe as $\Gamma$ but with the operation defined as $x + y := xy[x,y]^{(m-1)/2}$. As shown in the proof of \cite[Corollary 5.2]{kompatscher2024}, $(G,+)$ is Abelian. 
		The goal is now to show that the operation $f(x,y,z) = x-y+z$ in $(G,+)$ is a polymorphism of $\CosetGrpTmplt{\Gamma}{r}$. Once we have that, we can obtain alternating operations of all odd arities exactly as in the proof of Theorem \ref{thm:abelianSolvedByAIP}.
		To start with, it is easy to check that the inverse $-x$ in $(G,+)$ is $\inv{x}$. Thus we have
		\begin{align*}
			x-y+z &= x\inv{y}[x,\inv{y}]^{(m-1)/2} \cdot z[x\inv{y}[x,\inv{y}]^{(m-1)/2},z]^{(m-1)/2}\\
			&= x\inv{y}z[x,\inv{y}]^{(m-1)/2} \cdot [x\inv{y},z]^{(m-1)/2} \cdot [[x,\inv{y}]^{(m-1)/2},z]^{(m-1)/2}\\
			&= x\inv{y}z[x,\inv{y}]^{(m-1)/2} \cdot [x\inv{y},z]^{(m-1)/2}.
		\end{align*}
		Here we used that commutators in 2-nilpotent groups are central, the commutator identity $[xy,z] = [x,z] \cdot [y,z]$ that holds in this form in 2-nilpotent groups, and the fact that a commutator that has another commutator as one of its arguments is the neutral element in any 2-nilpotent group.
		Let $d(x,y,z) := x\inv{y}z$, $s:= d(x,y,z)$, and $t := d(z,y,x)$.
		We show by induction on $c$ that the following identity holds:
		\begin{align*}
			d(\dots d(d(t,s,t),s,t)\dots,s,t) = x\inv{y}z[x,\inv{y}]^{c} \cdot [x\inv{y},z]^{c},
		\end{align*}
		where $d$ appears $c$ times in the equation. Then for $c=(m-1)/2$, this identity gives us a term for $x-y+z$ that just uses the Maltsev polymorphism of $\CosetGrpTmplt{\Gamma}{r}$. To prove the identity, consider first the case $c=1$, in which we get:
		\begin{align*}
			x\inv{y}z[x,\inv{y}] \cdot [x\inv{y},z] &= x\inv{y}z \cdot \inv{x}yx\inv{y} \cdot  y\inv{x}\inv{z}x\inv{y}z\\
			&= x\inv{y}z \cdot \inv{x}y\inv{z} \cdot x\inv{y}z = d(t,s,t).
		\end{align*}
		For the inductive step, we have
		\begin{align*}
			x\inv{y}z[x,\inv{y}]^{c+1} \cdot [x\inv{y},z]^{c+1} &= \underbrace{d(...d(d(t,s,t),s,t)...,s,t)}_{c \text{ occurrences of d}} \cdot [x,\inv{y}] \cdot [x\inv{y},z]\\
			&=  \underbrace{d(...d(d(t,s,t),s,t)...,s,t)}_{c \text{ occurrences of d}} \cdot \inv{x}y\inv{z} \cdot x\inv{y}z\\
			&= \underbrace{d(...d(d(t,s,t),s,t)...,s,t)}_{c+1 \text{ occurrences of d}} 
		\end{align*}
		This finishes the proof of the theorem.
	\end{proof}	
	 A known example of a 2-nilpotent group of odd order is the semi-direct product $\bbZ_9 \rtimes \bbZ_3$, which is of order $27$.

	\paragraph*{A Group Coset-CSP Counterexample via Graph Isomorphism.}
	Next, we show that the affine algorithms studied in Section~\ref{sec:power-of-affine} also fail on group-coset-CSPs.
	The key idea is to exploit that  group coset-CSPs are inter-reducible
	with the \emph{bounded color class size graph isomorphism problem}~\cite{BerkholzGrohe2017}.
	For every constant $d$, this is the task to decide whether two vertex-colored graphs, in which at most $d$ many vertices have the same color, are isomorphic.
	Instead of the homomorphism or-construction,
	we use an \emph{isomorphism or-construction}.
	We first reduce our aforementioned Tseitin instances of $\CSP{\CosetGrpTmplt{\ZZ_2}{3}}$ and
	$\CSP{\CosetGrpTmplt{\ZZ_3}{3}}$
	to bounded color class size graph isomorphism.
	Using the isomorphism or-construction, we combine these two isomorphism problems
	in the same fashion as we did with homomorphisms.
	Finally, the resulting bounded color class size graph isomorphism problem
	is translated back into a group coset-CSP over the symmetric group,
	which, on $d$ elements, is denoted by $\Sym{d}$.
	
	\begin{theorem}
		\label{thm:counterexampleSymmetricGroup}
		For every $d \geq 18$ and every constant or at most sublinearly growing $k$, neither the $\ZZ$-affine $k$-consistency relaxation, BA$^{k}$, nor CLAP solve $\CSP{\CosetGrpTmplt{\Sym{d}}{2}}$. 
	\end{theorem}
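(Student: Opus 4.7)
The strategy is to transport the counterexample $\StructB = \OR{\StructB_1,\StructB_2}$ from the proofs of Theorems~\ref{thm:z-affine-does-not-solve-bounded-color-class}, \ref{thm:BLP-does-not-solve-bounded-color-class}, and \ref{thm:clap-does-not-solve-all} into a $\CosetGrpTmplt{\Sym{d}}{2}$-instance through the chain of reductions sketched in Section~\ref{sec:groupStuff}, while proving that each step preserves both a suitable notion of robust consistency and the existence of $p_i$-solutions to the width-$k$ affine relaxation. I fix again the Tseitin systems $\StructB_i := \grpCSP{H}{\ZZ_{p_i}}{\lambda_i}$ for $p_1 = 2, p_2 = 3$ on a $3$-regular $2$-connected expander $G$ whose order is sufficiently larger than $k$.

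First, I apply a CFI-style translation as in \cite{BerkholzGrohe2017}: each $\ZZ_{p_i}$-coset-instance $\StructB_i$ yields a pair of vertex-colored graphs $(\CFIA{\ZZ_{p_i}}{\StructB_i},\CFIB{\ZZ_{p_i}}{\StructB_i})$, with color class size at most $p_i+1 \leq 9$, such that the graphs are isomorphic if and only if $\StructB_i \in \CSP{\CosetGrpTmplt{\ZZ_{p_i}}{3}}$. I would then set up a graph-isomorphism analogue of the homomorphism or-construction, combining the two pairs into a single pair of colored graphs whose color classes have size at most $d = 2 \cdot 9 = 18$, such that the combined pair is isomorphic if and only if at least one of the original pairs is. Finally, I translate this bounded-color-class isomorphism problem into a binary $\CosetGrpTmplt{\Sym{18}}{2}$-coset-instance $\StructL$ via the standard encoding of~\cite{BerkholzGrohe2017}, using one variable per color class ranging over the permutations of that class. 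Since none of the two Tseitin systems is satisfiable, $\StructL \notin \CSP{\CosetGrpTmplt{\Sym{18}}{2}}$.

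Second, for each step of the translation I would prove the two preservation lemmas that we already established for the homomorphism or-construction, namely the analogues of Lemmas~\ref{lem:hom-or-tractable-k-consistency} and~\ref{lem:hom-or-tractable-solution}. Concretely, I define a notion of \emph{robustly consistent partial homomorphism} for $\StructL$: such a partial homomorphism corresponds, via the chain of reductions, to a partial isomorphism on one side of the or-construction whose underlying $\ZZ_{p_i}$-assignment is robustly consistent in the Tseitin sense of Section~\ref{sec:tseitin}. I then show that (a)~$k$-consistency never discards a robustly consistent partial homomorphism and each $X \in \binom{L}{\leq k}$ has at least one, using the Lemmas~\ref{lem:robustly-consistent-for-all-small-contexts} and~\ref{lem:robustlyConsistentSurviveKconsistency} that are already established at the Tseitin level and pushing them up through CFI, through the or-construction, and through the $\Sym{18}$-encoding; and (b)~a $p_i$-solution of $\cspiso{k}{\CosetGrpTmplt{\ZZ_{p_i}}{3}}{\StructB_i}$ that is non-zero only on robustly consistent partial assignments (Lemma~\ref{lem:group-csp-p-solution}) induces, via the same chain of reductions, a $p_i$-solution of $\cspiso{k}{\CosetGrpTmplt{\Sym{18}}{2}}{\StructL}$ that is non-zero only on robustly consistent partial homomorphisms of $\StructL$, and moreover can be arranged to assign value $1$ to any prescribed robustly consistent partial homomorphism, via Corollary~\ref{cor:group-csp-p-solution-with-fixed-assignment}.

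Once these preservation statements are in place, the rest of the argument is essentially identical to Section~\ref{sec:power-of-affine}: applying Lemma~\ref{lem:p-q-solution-implies-integral} to a $p_1$- and a $p_2$-solution yields an integral solution of $\cspiso{k}{\CosetGrpTmplt{\Sym{18}}{2}}{\StructL}$ that is non-zero only on robustly consistent partial homomorphisms, which translates (using Lemma~\ref{lem:cspiso-implies-aip} for BA$^k$ and the interior-point argument for CLAP) into spurious accepting solutions of $\zafkleq{k}{\CosetGrpTmplt{\Sym{18}}{2}}{\StructL}{\kcol{k}{\CosetGrpTmplt{\Sym{18}}{2}}{\StructL}}$, of $\aipk{k}{\CosetGrpTmplt{\Sym{18}}{2}}{\StructL}$ with the BA$^k$-refinement, and of the CLAP sub-systems, respectively. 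The extension from $d = 18$ to arbitrary $d \geq 18$ is immediate because $\CosetGrpTmplt{\Sym{18}}{2}$ embeds into $\CosetGrpTmplt{\Sym{d}}{2}$ via a padding construction. I expect the main obstacle to be the bookkeeping for the CFI step: partial homomorphisms of the $\Sym{18}$-coset-CSP correspond to partial isomorphisms of colored graphs, which in turn correspond, only via the gadget structure of the CFI graphs, to partial assignments of the underlying Tseitin variables, and one must check carefully that this correspondence is tight enough so that both preservation of $k$-consistency and preservation of $p_i$-solutions (including the one-fixing variant) go through -- the technical proofs in Appendix~\ref{app:isomorphism} will spell out these correspondences via translation lemmas in the spirit of Lemma~\ref{lem:hom-or-solution}.
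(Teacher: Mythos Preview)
Your proposal is correct and follows essentially the same approach as the paper: reduce the Tseitin systems to CFI graphs, combine them via an isomorphism or-construction, encode the resulting bounded-color-class isomorphism problem as an $\Sym{d}$-coset-CSP, and prove preservation lemmas for $k$-consistency and for $p_i$-solutions (the paper's Lemmas~\ref{lem:cfi-p-solution}--\ref{lem:or-construction-k-consistency}) along the chain so that the arguments of Section~\ref{sec:power-of-affine} carry over verbatim. One small slip: the CFI color class size is bounded by $p_i^2$ (there are $9$ triples in $\bbZ_3$ summing to $0$), not $p_i+1$; your intermediate bound is off even though your final value $d=2\cdot 9=18$ is right.
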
	
	
	\noindent The proof of this theorem spans the rest of this section.

	\subsection{Bounded Color Class Structure Isomorphism and Group Coset-CSPs}
	
	A \defining{colored relational structure} is a pair $(\StructA,\chi_\StructA)$
	of a relational structure $\StructA$ and a function $\chi_\StructA \colon A \to
	\colors$, for some finite set of colors $\colors$,
	that assigns colors to the vertices of $\StructA$.
	A \defining{color class} of $\StructA$ is a maximal set $V \subseteq A$ of elements
	of the same color.
	The \defining{color class size} of $\StructA$ is the maximal size of the color classes of $\StructA$.
	For a set of colors $C \subseteq \colors$, we denote by $\StructA[C]$ the substructure of $\StructA$ induced by all vertices those color is in $C$.
	For s color $c \in \colors$, we write $\StructA[c]$ for $\StructA[\set{c}]$.
	An isomorphism between colored structures has to preserve colors,
	that is, it maps vertices of one color to vertices of the same color.
	For two possibly colored relational structures $\StructA$ and $\StructB$ we write $\isos{\StructA}{\StructB}$ for the set of \defining{isomorphisms} $\StructA \to \StructB$.
	For a number $d \in \bbN$,
	instances of the \defining{$d$-bounded color class size structure isomorphism problem} are pairs $(\StructA, \StructB)$ of relational structures
	of color class size at most $d$. The problem asks whether there is a color-preserving isomorphism from $\StructA$ to $\StructB$.
	Polynomial time reductions in both directions between this problem and group coset-CSPs~\cite{BerkholzGrohe2017} are presented in the following.
	
	\subparagraph*{Reducing Coset-CSPs to Bounded Color Class Size Isomorphism.}
	Let $\Gamma$ be a finite group and $\StructB$ be an $r$-ary $\Gamma$-coset-CSP instance.
	We encode $\StructB$ into a colored graph $\CFIA{\Gamma}{\StructB}$ as follows:
	For every variable $x$ of $\StructB$,
	we add a vertex $(x,\gamma)$ for every $\gamma \in \Gamma$.
	We call $x$ the origin of $(x,\gamma)$ and color
	all vertices with origin $x$ with a fresh color $c_x$.
	For every constraint $C\colon (x_1,\dots, x_r) \in \Delta\delta$,
	add a vertex $(C, \gamma_1,\dots, \gamma_r)$
	for all $(\gamma_1,\dots, \gamma_r) \in \Delta\delta$.
	We call $C$ the origin of these vertices and color
	all vertices with origin $C$ with a fresh color $c_C$.
	We then add edges $\set{(x_i,\gamma_i), (C, \gamma_1,\dots, \gamma_r)}$,
	which we color with fresh colors $c_i'$,
	for all $i \in [r]$ (which formally is encoded in a fresh binary relation symbol).
	Note that, since~$\CFIA{\Gamma}{\StructB}$ is a graph,
	its arity is always~$2$, independently of the arity of~$\StructB$.
	
	We now derive the homogeneous $\Gamma$-coset-CSP $\tilde{\StructB}$ from
	$\StructB$ as follows:
	we replace every constraint $C \colon (x_1, \dots, x_r) \in \Delta\delta$ of $\StructB$
	with the constraint $\tilde{C} \colon (x_1, \dots, x_r) \in \Delta$ in $\tilde{\StructB}$.
	For $\tilde{\StructB}$, we obtain the graph $\CFIB{\Gamma}{\StructB}$
	by the construction before,
	where we identify the colors~$c_{C}$ and~$c_{\tilde{C}}$ for every constraint~$C$.
	The graphs $\CFIA{\Gamma}{\StructB}$ and $\CFIB{\Gamma}{\StructB}$
	are the \defining{CFI graphs} over $\Gamma$ for $\StructB$.
	If~$\StructB$ is the Tseitin equation system over~$\bbZ_2$,
	the obtained CFI graphs correspond to the known CFI graphs
	introduced by Cai, Fürer, and Immerman~\cite{CaiFuererImmerman1992},
	which have found many applications in finite model theory and other areas since then.
	\begin{lemma}[\cite{BerkholzGrohe2015}]
		\label{lem:cfi-basics}
		Let $\Gamma$ be a finite group and $\StructB$ an $r$-ary $\Gamma$-coset-CSP instance.
		\begin{enumerate}
			\item $\CFIA{\Gamma}{\StructB}$ and $\CFIB{\Gamma}{\StructB}$ have color class size at most the maximum of $|\Gamma|$ and $|\Delta|$ over all subgroups $\Delta$ occurring in constraints of $\StructB$,
			which is in particular bounded by
			$|\Gamma|^r$.
			\item $\CFIA{\Gamma}{\StructB} \iso \CFIB{\Gamma}{\StructB}$ if and only if $\StructB \in \CSP{\CosetGrpTmplt{\Gamma}{r}}$.
		\end{enumerate}
	\end{lemma}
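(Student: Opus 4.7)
Part (1) is a counting argument. The color classes of $\CFIA{\Gamma}{\StructB}$ are exactly the variable classes $\set{(x,\gamma) : \gamma \in \Gamma}$, of size $|\Gamma|$, and the constraint classes $\set{(C,\bar\gamma) : \bar\gamma \in \Delta\delta}$, of size $|\Delta\delta| = |\Delta|$, where $C\colon (x_1,\dots,x_r) \in \Delta\delta$ ranges over the constraints of $\StructB$; viewing each variable as the trivial constraint $x \in \Gamma$ absorbs the first case into the second. The same count applies to $\CFIB{\Gamma}{\StructB}$, and $|\Delta| \leq |\Gamma^r| = |\Gamma|^r$ gives the stated bound.

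For part (2), the forward direction ($\Leftarrow$) is constructive: given a solution $\sigma$ of $\StructB$, I would define $\phi \colon \CFIA{\Gamma}{\StructB} \to \CFIB{\Gamma}{\StructB}$ on variable vertices by $\phi(x,\gamma) := (x, \gamma\sigma(x)^{\shortminus 1})$, and on constraint vertices of $C \colon \bar x \in \Delta\delta$ by $\phi(C,\bar\gamma) := (\tilde{C}, \gamma_1\sigma(x_1)^{\shortminus 1}, \dots, \gamma_r\sigma(x_r)^{\shortminus 1})$. Well-definedness is the only nontrivial check: since $\sigma$ satisfies $C$, the tuple $\bar\sigma := (\sigma(x_1),\dots,\sigma(x_r))$ lies in $\Delta\delta$, so in $\Gamma^r$ one has $\bar\gamma \cdot \bar\sigma^{\shortminus 1} \in \Delta\delta \cdot (\Delta\delta)^{\shortminus 1} = \Delta\delta\delta^{\shortminus 1}\Delta = \Delta$, which is precisely the constraint class of $\tilde C$ in $\CFIB{\Gamma}{\StructB}$. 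Color preservation follows from the identification $c_C = c_{\tilde C}$, and edge preservation is immediate because the $i$-th component of $\phi(C,\bar\gamma)$ equals the $\Gamma$-coordinate of $\phi(x_i,\gamma_i)$.

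For the reverse direction ($\Rightarrow$), let $\phi$ be a color-preserving isomorphism. On each variable color class, $\phi$ acts as a permutation $\pi_x \colon \Gamma \to \Gamma$, and on the constraint class of $C$ it is a bijection onto that of $\tilde C$; combined with edge preservation, the componentwise map $\Pi := \pi_{x_1} \times \cdots \times \pi_{x_r}$ restricts to a bijection $\Delta\delta \to \Delta$. Because $\Pi$ is itself a bijection of $\Gamma^r$ and $|\Delta\delta| = |\Delta|$, this forces $\Pi^{\shortminus 1}(\Delta) = \Delta\delta$. Setting $\sigma(x) := \pi_x^{\shortminus 1}(e)$ then yields $\Pi(\bar\sigma) = (e,\dots,e) \in \Delta$, hence $\bar\sigma \in \Delta\delta$, so $\sigma$ satisfies $C$. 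The main subtlety lies in exactly this last chain of inferences: $\sigma$ is defined \emph{globally} from pointwise permutations $\pi_x$ that are only constrained \emph{locally} by each constraint, yet one needs $\sigma$ to satisfy \emph{every} constraint simultaneously. What secures this is the cardinality identity $|\Delta\delta| = |\Delta|$, without which $\Pi^{\shortminus 1}(\Delta)$ could strictly contain $\Delta\delta$ and no single global $\sigma$ need exist; here the rigidity of the CFI construction, with each color class carrying the full translation action of the associated coset, is exactly what makes the argument go through.
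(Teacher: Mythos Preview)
The paper does not prove this lemma at all; it is quoted from \cite{BerkholzGrohe2015} and used as a black box. So there is no ``paper's own proof'' to compare against. That said, your argument is correct and is in fact the standard one. It also aligns exactly with how the paper later handles partial isomorphisms in the proof of Lemma~\ref{lem:cfi-p-solution}, where the map $(x,\gamma) \mapsto (x,\gamma f(x)^{-1})$ and $(C,\bar\gamma) \mapsto (\tilde C, \gamma_1 f(x_1)^{-1},\dots,\gamma_r f(x_r)^{-1})$ is used verbatim for a partial solution $f$---your forward direction is simply the global version of that construction.

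Two small remarks. First, in Part~(1) your absorption of variable classes into ``trivial unary constraints with $\Delta = \Gamma$'' is the right way to reconcile the lemma's phrasing with the fact that variable color classes have size $|\Gamma|$; strictly speaking the lemma's bound as stated only covers variable classes if one reads $\Gamma \leq \Gamma^1$ as one of the admissible $\Delta$'s, which is harmless in all applications. Second, in the reverse direction your key step---that edge preservation forces $\phi$ to act on each constraint class as the product map $\Pi = \pi_{x_1}\times\cdots\times\pi_{x_r}$, and then the counting argument $|\Delta\delta|=|\Delta|$ gives $\Pi^{-1}(\Delta)=\Delta\delta$---is exactly right and is where the ``rigidity'' you mention lives. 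The definition $\sigma(x) := \pi_x^{-1}(e)$ then works uniformly across all constraints precisely because each $\pi_x$ is fixed once and for all by the variable color class, independently of which constraints $x$ participates in.
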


	\subparagraph*{Reducing Bounded Color Class Size Isomorphism to Coset-CSPs.}
	Let $(\StructA, \StructB)$ be an instance of the $d$-bounded color class structure isomorphism problem, where the arity of the structures is at most $r$.
	We encode isomorphisms between $\StructA$ and $\StructB$
	as solutions of the following $\Sym{d}$-coset-CSP.
	Denote the set of colors~$\StructA$ and~$\StructB$ by~$\colors$.
	We also assume that $\ell_c := |\StructA[c]| = |\StructB[c]|$ for each color $c \in \colors$.
	Otherwise,~$\StructA$ and~$\StructB$ are trivially non-isomorphic.
	For every $c\in \colors$, we introduce a variable $y_c$.
	First, we add constraints that ensure that $y_c$ is actually a variable over $\Sym{\ell_c}$:
	\[y_c \in \setcond*{\gamma \in \Sym{d}}{\gamma(j) = j \text{ for all } \ell_c \leq j \leq d}.\]
	It is clear that this set is a subgroup of $S_d$ and hence we indeed added $S_d$-constraints.
	Next, for every $c \in \colors$, let the vertices of
	$\StructA[c]$ be $u_{c,1},\dots, u_{c,\ell_c}$ and the ones of $\StructB[c]$
	be $v_{c,1},\dots, v_{c,\ell_c}$.
	We pick, for every set $C = \{c_1,...,c_{r'}\}$ of $r' \leq r$ color classes, an isomorphism $\phi_C \colon \StructA[C] \to \StructB[C]$ if it exists. We identify~$\phi_C$
	with a permutation in $\bigtimes_{i \in [r']} \Sym{\ell_{c_i}}$: The $i$-th component of this tuple of permutations maps
	$j$ to $k$ if $\phi_C(u_{c_i,j}) = v_{c_i,k}$.
	If for some $C$ such an isomorphism $\phi_C$ does not exist, then $\StructA \not\iso \StructB$ and we just add some unsatisfiable constraints and are done (e.g., use two cosets $\set{1}\gamma$, $\set{1}\delta$ for $\gamma \neq \delta$).
	Via these identifications, we add the $r'$-ary $\Sym{d}$-constraint
	\[(y_{c_1},\dots, y_{c_r'}) \in \autgrp{\StructA[C]}\phi_C.\]
	We denote the resulting $\Sym{d}$-coset-CSP by $\bcisosys{\StructA}{\StructB}$.
	For a set $C$ of colors of $\StructA$ and $\StructB$,
	we denote by $\bcisosys{\StructA}{\StructB}[C]$
	the subsystem induced by all variables $y_c$ of colors $c \in C$.

	\begin{lemma}[\cite{BerkholzGrohe2015,KlinLOT2014}]
		\label{lem:iso-system-correct}
		For all $r$-ary colored structures $\StructA$ and $\StructB$ of color class size $d$,
		the structure $\bcisosys{\StructA}{\StructB}$ is an instance of an $r$-ary $S_{d}$-coset-CSP
		such that $\bcisosys{\StructA}{\StructB} \in \CSP{\CosetGrpTmplt{\Sym{d}}{r}}$ if and only if
		$\StructA \iso \StructB$.
	\end{lemma}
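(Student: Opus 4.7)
The plan is to verify three things in sequence: that $\bcisosys{\StructA}{\StructB}$ is syntactically a well-formed $r$-ary $\Sym{d}$-coset-CSP, and then the two directions of the equivalence with $\StructA \iso \StructB$.

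First I would check the structural claim. Each unary constraint $y_c \in \setcond{\gamma \in \Sym{d}}{\gamma(j)=j \text{ for } \ell_c \leq j \leq d}$ uses the pointwise stabilizer of $\set{\ell_c,\dots,d}$, which is a subgroup of $\Sym{d}$, hence is a valid unary $\Sym{d}$-coset-constraint. For each set $C=\set{c_1,\dots,c_{r'}}$ of at most $r$ colors, one has to argue that $\autgrp{\StructA[C]}$, when identified with a subset of $\bigtimes_{i\in[r']}\Sym{\ell_{c_i}}$ via the fixed enumerations of color classes, is genuinely a subgroup. This is immediate because automorphisms compose and invert and each color class is preserved setwise. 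Then $\autgrp{\StructA[C]}\phi_C$ is a right coset inside $\Sym{d}^{r'}$ (after padding with the identity on the unused coordinates), so the constraint is a valid $r'$-ary $\Sym{d}$-coset-constraint. Since $r' \leq r$, the overall CSP is $r$-ary.

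For the forward direction, suppose $\phi \in \isos{\StructA}{\StructB}$. Because $\phi$ preserves colors, its restriction $\phi|_{\StructA[c]}$ is a bijection $\StructA[c] \to \StructB[c]$, and via the chosen enumerations of color classes this restriction corresponds to a unique $\gamma_c \in \Sym{\ell_c} \leq \Sym{d}$. Setting $y_c := \gamma_c$ clearly satisfies the unary constraints. For each $C$, the map $\phi|_{\StructA[C]}$ is itself an isomorphism $\StructA[C] \to \StructB[C]$, so $\phi_C^{-1} \circ \phi|_{\StructA[C]} \in \autgrp{\StructA[C]}$, and therefore $(\gamma_{c_1},\dots,\gamma_{c_{r'}}) \in \autgrp{\StructA[C]}\phi_C$ as required.

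Conversely, suppose $(y_c)_{c \in \colors}$ is a solution. Define $\phi \colon A \to B$ color-wise by declaring $\phi|_{\StructA[c]}$ to be the bijection encoded by $y_c$; the unary constraints guarantee this is well-defined. To show $\phi$ is an isomorphism, I need to verify that it preserves every relation $R \in \sig$. Fix a tuple $\tup{u} \in R^{\StructA}$ and let $C$ be the set of colors appearing in $\tup{u}$; note $|C| \leq r = \arity{R}$. The constraint for $C$ says $(y_{c_1},\dots,y_{c_{r'}}) = \psi \cdot \phi_C$ for some $\psi \in \autgrp{\StructA[C]}$, which means $\phi|_{\StructA[C]} = \phi_C \circ \psi$ as a map $\StructA[C] \to \StructB[C]$. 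Since $\phi_C$ is an isomorphism and $\psi$ is an automorphism, $\phi|_{\StructA[C]}$ is an isomorphism $\StructA[C] \to \StructB[C]$, so $\phi(\tup{u}) \in R^{\StructB}$. Applying the same argument in reverse to $\phi^{-1}$ gives that $\phi$ is a bona fide isomorphism $\StructA \to \StructB$.

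The main obstacle I anticipate is purely bookkeeping: being careful about the identification between $\autgrp{\StructA[C]}$ and its image in $\bigtimes_{i}\Sym{\ell_{c_i}}$, and making sure that in the forward direction the $\gamma_c$ genuinely land in the coset (not in some other coset of $\autgrp{\StructA[C]}$), which is where the arbitrariness of the choice of representative $\phi_C$ matters but cancels out.
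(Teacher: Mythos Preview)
Your proof is correct and is the standard direct verification; the paper itself does not prove this lemma but cites it from \cite{BerkholzGrohe2015,KlinLOT2014}, where essentially the same argument appears. The only cosmetic point worth tightening is in the backward direction: rather than appealing to ``the same argument in reverse to $\phi^{-1}$,'' you can simply note that once $\phi|_{\StructA[C]}$ is an isomorphism $\StructA[C]\to\StructB[C]$ for every $C$ of size at most $r$, preservation of $R$ in both directions is automatic since every $R$-tuple lives inside such a $\StructA[C]$ or $\StructB[C]$.
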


	\subsection{Isomorphism OR-Construction on Structures}
	
	We now present the isomorphism-analogue of our previous homomorphism or-construction. It realizes the disjunction of two (or more) structure isomorphism instances again as an instance of structure isomorphism.
	
	A sequence of colored structures $\StructB_1,\dots, \StructB_j$
	is encoded by a colored structure $\langle \StructB_1, \dots, \StructB_j \rangle$
	that is defined as follows:
	Assume $\StructB_i$ uses $\colors_i$ as set of colors
	and, up to renaming colors, assume that all color sets $\colors_i$ are
	pairwise disjoint\footnote{The renaming has to be done canonically, for example, rename color $c \in \colors_i$ to the new color $(c,i)$.
	In this way, the colors of difference sequences get renamed in the same way.}.
	Next, we extend each~$\StructB_i$ by a new binary relation
	that is interpreted as~$B_i^2$.
	We now start with the disjoint union of all~$\StructB_i$,
	where we call vertices of $\StructB_i$ \defining{entry-$i$ vertices}.
	We add a new binary relation symbol
	such that for all $i <j$ we add an edge
	between all entry-$i$ and entry-$j$ vertices to this relation.

	Now let $\StructB_1^0, \dots, \StructB_j^0$ and $\StructB_1^1,\dots, \StructB_j^1$
	be two sequences of colored structures.
	We define a pair of structures $(\StructC^0, \StructC^1) = \ORISO{i\in[j]} {\StructB_i^0,\StructB_i^1}$ as follows.
	For each $k \in \set{0,1}$, define
	\[\StructC^k := \bigdisunion \setcond*{\langle \StructB_1^{a_1}, \dots, \StructB_j^{a_j} \rangle}{a_1+\dots+a_j \equiv k \mod 2},\]
	where we call the $\langle \StructB_1^{a_1}, \dots, \StructB_j^{a_j} \rangle$ \defining{components}.
	\begin{lemma}
		\label{lem:or-construction-basics}
		Let $\StructB_1^0, \dots, \StructB_j^0$ and $\StructB_1^1,\dots, \StructB_j^1$
		be two sequences of colored structures of color class size at most $d$.
		Then for $(\StructC^0, \StructC^1) = \ORISO{i\in[j]} {\StructB_i^0,\StructB_i^1}$ we have
		\begin{enumerate}
			\item $\StructC^0 \iso \StructC^1$ if and only if there exists an $i \in [j]$ such that $\StructB_i^0 \iso \StructB_i^1$, and
			\item $\StructC^0$ and $\StructC^1$ have color class size at most $2^{j-1}d$.
		\end{enumerate}
	\end{lemma}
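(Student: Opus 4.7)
The plan is to prove both parts through a careful structural analysis of how the building blocks $\langle \StructB_1^{a_1},\dots,\StructB_j^{a_j}\rangle$ sit inside $\StructC^0$ and $\StructC^1$. Two observations underpin everything. First, within each such component the intra-entry relation (interpreted as $B_i^2$) and the inter-entry binary relation connect all vertices into a single connected substructure, while no new relation crosses components; hence any isomorphism $\StructC^0 \to \StructC^1$ must bijectively map components to components. Second, the colors of entry-$i$ vertices occupy the range $[m_i+1, m_i+\ell_i]$, and these ranges are pairwise disjoint across $i$; since isomorphisms preserve colors, any component-to-component map induced by an isomorphism must send entry-$i$ to entry-$i$.

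For part~(2), I would count: fix a color $c \in [m_i+1, m_i+\ell_i]$. Its occurrences in $\StructC^k$ all come from entry-$i$ of components, contributed by $\StructB_i^0$ or $\StructB_i^1$ depending on $a_i$. For fixed $a_i \in \{0,1\}$, the number of components of $\StructC^k$ having that value at position $i$ equals the number of $(a_l)_{l \neq i} \in \{0,1\}^{j-1}$ of the prescribed parity, which is $2^{j-2}$ for $j \ge 2$. Each such component contributes at most $d$ vertices of color $c$ because $\StructB_i^{a_i}$ has color class size at most $d$. Summing over $a_i \in \{0,1\}$ yields at most $2 \cdot 2^{j-2} \cdot d = 2^{j-1}d$; the case $j=1$ is immediate.

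For the backward direction of~(1), assume $\phi \colon \StructB_{i^*}^0 \to \StructB_{i^*}^1$ is an isomorphism for some $i^* \in [j]$. I would define $\Phi\colon \StructC^0 \to \StructC^1$ component-wise: each component $D = \langle\StructB_1^{a_1},\dots,\StructB_j^{a_j}\rangle$ of $\StructC^0$ is sent to $\langle\StructB_1^{a_1},\dots,\StructB_{i^*}^{1-a_{i^*}},\dots,\StructB_j^{a_j}\rangle$, which lies in $\StructC^1$ since flipping one coordinate flips parity. On entries $i \neq i^*$, let $\Phi$ act as the identity; on entry $i^*$, as $\phi$ if $a_{i^*}=0$ and as $\phi^{-1}$ otherwise. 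This is a bijection on components and an isomorphism within each, since inter-entry relations depend only on entry positions and the intra-entry structure of entry~$i^*$ is preserved by $\phi^{\pm 1}$. For the forward direction, given $\Phi\colon \StructC^0 \to \StructC^1$, pick any $D = \langle\StructB_1^{a_1},\dots,\StructB_j^{a_j}\rangle$ in $\StructC^0$; by the two observations above, its image is $\langle\StructB_1^{b_1},\dots,\StructB_j^{b_j}\rangle$ in $\StructC^1$, and $\Phi$ restricted to entry~$i$ is an isomorphism $\StructB_i^{a_i} \iso \StructB_i^{b_i}$ for each $i$. Since $\sum a_i$ is even and $\sum b_i$ is odd, some coordinate $i^*$ satisfies $b_{i^*} = 1-a_{i^*}$, yielding $\StructB_{i^*}^0 \iso \StructB_{i^*}^1$.

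The main obstacle is making the two underpinning observations fully rigorous: that isomorphisms respect components, and that within each component they respect entries. Both reduce to the explicit features of the construction (connected-component structure via the added binary relations, and disjoint color ranges across entries), so this is a matter of careful bookkeeping rather than a genuinely hard step. Once these are in hand, the forward direction of~(1) is a parity pigeonhole argument and part~(2) is straightforward counting.
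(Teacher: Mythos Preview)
Your proposal is correct. The paper's own proof is extremely terse: for part~(1) it simply cites \cite{BerkholzGrohe2017} (noting a slightly different encoding there), and for part~(2) it gives the one-line observation that encoding a sequence does not increase color class size and that each $\StructC^k$ is a disjoint union of $2^{j-1}$ such sequences. Your argument for part~(2) is the same counting, just spelled out with the case split on $a_i$; your argument for part~(1) is a self-contained version of the standard proof that the citation points to, using exactly the two structural facts (components are the connected pieces; colors pin down entry indices) and the parity pigeonhole. So the approach is essentially the same, with your write-up supplying the details the paper outsources.
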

	\begin{proof}
		The first claim was (for a slightly different encoding of sequences of graphs) shown in~\cite{BerkholzGrohe2017}.
		For the second claim, we note that the encoding of a sequence
		does not increase the color class size
		and that there are $2^{j-1}$ such sequences in the disjoint union.
	\end{proof}

	\subsection{Instances of the Counterexample}
	\label{sec:group-csp-counterexamle-instances}
	
	To obtain instances of $\CSP{\CosetGrpTmplt{\Sym{d}}{2}}$
	that are hard for the affine algorithms,
	we start with Tseitin systems over $\ZZ_2$ and $\ZZ_3$
	and then chain together the former constructions.
	From now on, fix a positive integer $k$.
	As in the proofs in Section~\ref{sec:power-of-affine},
	let $G = (V,E)$ be a $3$\nobreakdash-regular $2$\nobreakdash-connected expander graph sufficiently larger than the width parameter $k$.
	Let $H$ be an arbitrary orientation of $G$.
	Let $p_1 := 2$ and $p_2 := 3$. For $i \in [2]$,
	let $\lambda_i \colon V \to \ZZ_{p_i}$ be defined to be $0$ everywhere except at one arbitrarily chosen vertex $v^* \in V$, where we set $\lambda_i(v^*) := 1$.
	For each $i \in [2]$, we consider the $3$-ary $\ZZ_{p_i}$-coset-CSPs $\StructB_i := \grpCSP{H}{\ZZ_{p_i}}{\lambda_i}$. 
	We apply the reduction to graph isomorphism (see Lemma \ref{lem:cfi-basics}) to obtain for each $i \in [2]$ a pair of colored graphs
	$(\CFIA{\bbZ_{p_i}}{\StructB_i}, \CFIB{\bbZ_{p_i}}{\StructB_i})$ such that $\CFIA{\bbZ_{p_i}}{\StructB_i} \cong \CFIB{\bbZ_{p_i}}{\StructB_i}$ if and only if $\StructB_i \in \CSP{\CosetGrpTmplt{\ZZ_{p_i}}{3}}$.
	By construction, $\StructB_i \notin \CSP{\CosetGrpTmplt{\ZZ_{p_i}}{3}}$, so the corresponding graphs are non-isomorphic. 
	Now apply the graph isomorphism or-construction $(\StructC^0, \StructC^1) =
	\ORISO{i \in [2]}{\CFIA{\bbZ_{p_i}}{\StructB_i},\CFIB{\bbZ_{p_i}}{\StructB_i}}$
	so that $\StructC^0 \cong \StructC^1$ if and only if $\CFIA{\bbZ_{p_1}}{\StructB_1} \cong \CFIB{\bbZ_{p_1}}{\StructB_1}$ or $\CFIA{\bbZ_{p_2}}{\StructB_2} \cong \CFIB{\bbZ_{p_2}}{\StructB_2}$. Since neither of these are isomorphic, we have $\StructC^0 \not\cong \StructC^1$. 
	The two graphs $\StructC^0$ and $\StructC^1$ have bounded color class size and it can in fact be checked that this size is $18$:
	The color class size of $\CFIB{\bbZ_{p_2}}{\StructB_2}$ is upper bounded by~$9$ because there exist~$9$ triples in~$\bbZ_{3}$ whose sum in~$\bbZ_3$ is~$0$. The color class size of $\CFIB{\bbZ_{p_1}}{\StructB_1}$ is smaller. The isomorphism or-construction applied to two graphs doubles the color class size.  
	So with the reduction of bounded color class size isomorphism to a coset-CSP as described above (see Lemma \ref{lem:iso-system-correct}),
	the problem ``$\StructC^0 \cong \StructC^1$?'' is turned into the instance $\bcisosys{\StructC^0}{\StructC^1}$ of $\CSP{\CosetGrpTmplt{\Sym{d}}{2}}$ for every $d \geq 18$. This instance does not admit a solution because $\StructC^0 \not\cong \StructC^1$.
	However, we can show that $\cspiso{k}{\CosetGrpTmplt{\Sym{d}}{2}}{\bcisosys{\StructC^0}{\StructC^1}}$ has an integral solution.
	
	To do so, we pull the notion of a robustly consistent partial homomorphism
	of the Tseitin-systems from Section \ref{sec:tseitin} through  all the constructions, so through the translation of group coset-CSP into bounded color class isomorphism, through the isomorphism or-construction,
	and through the reverse translation of bounded color class isomorphism 
	to group coset-CSPs over symmetric groups.
	\begin{itemize}
		\item Partial homomorphisms of the Tseitin system induce partial isomorphisms of the graph encoding.
		\item Partial isomorphisms of the graph encoding
		induce partial isomorphisms in the isomorphism or-construction.
		\item Finally, partial isomorphisms of the isomorphism or-construction
		induce partial homomorphisms of the encoding as a group coset-CSP over $\Sym{d}$.
	\end{itemize}
	The reverse direction is not always true.
	But for the partial isomorphisms or homomorphisms
	for which this is true,
	we can transfer the notion of robust consistency:
	A partial homomorphism
	$\bcisosys{\StructC^0}{\StructC^1} \to \CosetGrpTmplt{\Sym{d}}{2}$ is
	robustly consistent
	if it is induced
	by a robustly consistent partial homomorphism 
	$\grpCSP{H}{\ZZ_{p_i}}{\lambda_i} \to \CosetGrpTmplt{\ZZ_{p_i}}{r}$
	(we will make this notion precise in the following).
	We show that the properties of robustly consistent homomorphisms from Section~\ref{sec:tseitin} transfer to the group coset-CSP setting in the end:
	\begin{itemize}
		\item Robustly consistent partial solutions of the resulting $\SymStruct{d}{2}$-coset-CSP are also
		not ruled out by $k$\nobreakdash-consistency.
		\item A $p_i$-solution to the width-$k$ affine relaxation of the Tseitin
		system over $\ZZ_{p_i}$ translates to a $p_i$\nobreakdash-solution to the width-$k$ affine relaxation for the resulting $\Sym{d}$-coset-CSP.
		In particular, only variables for robustly consistent partial homomorphisms are non-zero in the solution.
		\item Thus, the width-$k$ affine relaxation of the  $\Sym{d}$-coset-CSP also has
		an integral solution.
	\end{itemize}
	So essentially, all the proofs in Section~\ref{sec:power-of-affine}
	translate to the $\Sym{d}$-coset-CSP.
	These arguments are the technically tedious part of the proof of Theorem \ref{thm:counterexampleSymmetricGroup}.
	We prove this in detail in the following subsection
	but the key source of hardness is the same as in Section \ref{sec:tseitin}. 
	
	\subsection{Proof of Theorem~\ref{thm:counterexampleSymmetricGroup}}
	First of all, we show that $p$-solutions to the width-$k$ affine relaxation for any $\Gamma$-coset-CSP translate to $p$-solutions of the width-$k$ affine relaxation for the $\CSP{\Sym{d}}$-formulation of the corresponding graph isomorphism instance.

	\begin{lemma}
		\label{lem:cfi-p-solution}
		Let $k \in \nat$, $\Gamma$ be a finite group, $\StructB$ an $r$-ary $\Gamma$-coset-CSP,
		and $d$ be the maximum color class size of $\CFIA{\Gamma}{\StructB}$.
		If $\cspiso{kr}{\CosetGrpTmplt{\Gamma}{r}}{\StructB}$ has a $p$-solution,
		then $\cspiso{k}{\SymStruct{d}{2}}{\bcisosys{\CFIA{\Gamma}{\StructB}}{\CFIB{\Gamma}{\StructB}}}$ has a $p$-solution.
	\end{lemma}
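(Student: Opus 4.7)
The strategy is to pull back the given $p$-solution $\Phi$ of $\cspiso{kr}{\CosetGrpTmplt{\Gamma}{r}}{\StructB}$ to a $p$-solution $\Psi$ of $\cspiso{k}{\SymStruct{d}{2}}{\bcisosys{\CFIA{\Gamma}{\StructB}}{\CFIB{\Gamma}{\StructB}}}$ along the natural correspondence between partial color-preserving isomorphisms of the CFI graphs and partial homomorphisms of $\StructB$ to the template $\CosetGrpTmplt{\Gamma}{r}$.

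The first step is to set up this correspondence cleanly. Each color of the CFI graphs is either a \emph{variable color} $c_x$ for some $x \in \Var{\StructB}$ or a \emph{constraint color} $c_{C'}$ for some constraint $C'$ of $\StructB$. Given a set $C$ of at most $k$ such colors, define $V(C) \subseteq \Var{\StructB}$ to be the set of variables that are themselves among the variable colors in $C$ or appear in some constraint whose color lies in $C$. Since every constraint has arity at most $r$, we have $|V(C)| \leq kr$. A partial homomorphism $g \in \Hom{\bcisosys{\CFIA{\Gamma}{\StructB}}{\CFIB{\Gamma}{\StructB}}[C]}{\SymStruct{d}{2}}$ is a color-preserving partial isomorphism between the induced substructures of the two CFI graphs on the vertices of colors in $C$. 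Unpacking the CFI construction, the local permutation of $g$ on each color class in $C$ canonically determines a group element of $\Gamma$ for each variable in $V(C)$: for a variable color $c_x \in C$, directly; and for a constraint color $c_{C'} \in C$ via the bijection between $\Delta\delta$ and $\Delta$. Consistency across the edge-colored relations $c_i'$ guarantees that these local data combine into a well-defined partial homomorphism $f_g \colon \StructB[V(C)] \to \CosetGrpTmplt{\Gamma}{r}$; moreover, distinct $g$'s yield distinct $f_g$'s, and every partial homomorphism of $\StructB[V(C)]$ that agrees with the constraints whose color lies in $C$ arises from a (unique) such $g$.

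Having this correspondence, I would define
\[\Psi(y_{C,g}) := \Phi(x_{V(C), f_g})\]
for every $C$ of at most $k$ colors and every admissible $g$. Since each value of $\Psi$ is either $0$ or a value of $\Phi$, $\Psi$ inherits the $p$-solution property. Equation~\ref{eqn:csp-iso-empty} holds because $\Psi(y_{\emptyset, \emptyset}) = \Phi(x_{\emptyset, \emptyset}) = 1$. For Equations~\ref{eqn:csp-iso-agree}, pick $C$ with $|C| \leq k$, a color $c \in C$, and $g' \in \Hom{\bcisosys{\cdot}{\cdot}[C \setminus \{c\}]}{\SymStruct{d}{2}}$. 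Adding $c$ to the domain enlarges the variable set $V(C \setminus \{c\})$ by at most $r$ variables, so still $|V(C)| \leq kr$. By the bijection above, extensions $g$ of $g'$ to $C$ correspond one-to-one with extensions $f$ of $f_{g'}$ to partial homomorphisms $\StructB[V(C)] \to \CosetGrpTmplt{\Gamma}{r}$, whence
\begin{align*}
\sum_{g:\, \restrict{g}{C \setminus \{c\}} = g'} \Psi(y_{C, g}) \;=\; \sum_{f:\, \restrict{f}{V(C \setminus \{c\})} = f_{g'}} \Phi(x_{V(C), f}) \;=\; \Phi(x_{V(C \setminus \{c\}), f_{g'}}) \;=\; \Psi(y_{C \setminus \{c\}, g'}),
\end{align*}
where the middle equality is Lemma~\ref{lem:csp-iso-subsets} applied at width $kr$. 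This is precisely why the hypothesis demands a $p$-solution of the width-$kr$ relaxation rather than merely the width-$k$ one.

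The main technical obstacle is making the bijection $g \mapsto f_g$ rigorous — in particular, verifying that a partial isomorphism restricted to a set containing a constraint color $c_{C'}$ but not all variable colors of $C'$ nonetheless determines unique assignments for all variables of $C'$ via the interaction of the coset $\Delta\delta$ with the edge-colored relations. In the non-Abelian case, one must additionally be careful with left- versus right-multiplication when describing how a permutation of a constraint color class corresponds to a coset element, but the needed structural properties are intrinsic to the CFI construction of~\cite{BerkholzGrohe2017}. Once the bijection is established, the verification above reduces the required equations for $\Psi$ directly to the corresponding equations for $\Phi$.
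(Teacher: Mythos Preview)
Your overall strategy coincides with the paper's: define $\Psi$ by pulling the given $p$-solution $\Phi$ back along the correspondence between partial homomorphisms $f$ of $\StructB$ and the partial isomorphisms $\hat f$ of the CFI pair they induce, and then verify Equation~\ref{eqn:csp-iso-agree} via Lemma~\ref{lem:csp-iso-subsets} at width $kr$.

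The gap is your assertion that $g \mapsto f_g$ is a \emph{bijection}. It is not, and this is not merely a matter of ``making it rigorous'': the map $f \mapsto \hat f$ is only an injection, and most $g \in \Hom{\StructL[Y]}{\SymStruct{d}{2}}$ lie outside its image. Concretely, take $Y = \{c_x\}$ a single variable color. The color class $c_x$ in either CFI graph consists of $|\Gamma|$ isolated vertices (all edges run between variable classes and constraint classes), so every permutation of $\Gamma$ is a valid local isomorphism, giving $|\Gamma|!$ many $g$. But $\hat Y = \{x\}$ admits only $|\Gamma|$ partial homomorphisms $f$, each inducing the right translation $\gamma \mapsto \gamma f(x)^{-1}$. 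The same overcount occurs for a single constraint color $c_{C'}$ with subgroup $\Delta$: there are $|\Delta|!$ bijections $\Delta\delta \to \Delta$, but only $|\Delta|$ tuples $(f(x_1),\dots,f(x_r)) \in \Delta\delta$. So $f_g$ is simply undefined for most $g$, your formula $\Psi(y_{C,g}) := \Phi(x_{V(C), f_g})$ does not specify $\Psi$ on all variables, and your displayed verification of Equation~\ref{eqn:csp-iso-agree} is incorrect because the left-hand sum ranges over \emph{all} extensions $g$ of $g'$, not just those of the form $\hat f$.

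The paper repairs exactly this: it sets $\Psi(x_{Y,g}) := 0$ whenever $g$ is not of the form $\hat f$, and splits the check of Equation~\ref{eqn:csp-iso-agree} into two cases. If $g' = \hat g$ for some $g$, then among the extensions $f'$ of $g'$ only those of the form $\hat f$ contribute, and by injectivity of $f \mapsto \hat f$ these are in bijection with extensions $f$ of $g$, so the sum collapses as you wrote. If $g'$ is not in the image, then no extension of $g'$ is either, and both sides vanish. Adding this ``otherwise $0$'' clause and the second case is what is missing from your argument.
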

	\begin{proof}
		Let $\StructL = \bcisosys{\CFIA{\Gamma}{\StructB}}{\CFIB{\Gamma}{\StructB}}$ and
		let $\Phi$ be a $p$-solution of $\cspiso{kr}{\CosetGrpTmplt{\Gamma}{r}}{\StructB}$.
		We define a $p$-solution $\Psi$ for $\cspiso{k}{\SymStruct{d}{2}}{\StructL}$ as follows.
		Let $\colors$ be the colors of $\CFIA{\Gamma}{\StructB}$.
		Associate with a set of colors $Y \subseteq \colors$
		the set~$\hat{Y}$ of the corresponding elements of~$\StructB$:
		If~$Y$ contains the color of a variable vertex $(x,\gamma)$,
		add~$x$ to~$\hat{Y}$.
		If~$Y$ contains the color of a constraint vertex $(C,(\gamma_1,\dots,\gamma_{r'}))$ for $C\colon (x_1,\dots,x_{r'}) \in \Delta\delta$, add $x_1, \dots, x_{r'}$ to $\hat{Y}$.
		Note that $|\hat{Y}| \leq r |Y|$ because variable-vertices for different variables, and constraint-vertices for different constraints, have different colors, respectively.
		
		Let $Y \in \tbinom{\colors}{\leq k}$ and 
		$f\in \Hom{\StructB[\hat{Y}]}{\CosetGrpTmplt{\Gamma}{r}}$.
		We define a bijection $f' \colon V(\CFIA{\Gamma}{\StructB}[Y]) \to V(\CFIB{\Gamma}{\StructB}[Y])$ via
		\begin{align*}
			f'((x,\gamma)) &:=\left(x, \gamma\inv{f(x)} \right) & \text{ for all } (x,\gamma) \in V(\CFIA{\Gamma}{\StructB}[Y]),\\
			f'((C, \gamma_1,\dots,\gamma_{r
'})) & := \left(C, \gamma_1\inv{f(x_1)} , \dots,  \gamma_{r'}\inv{f(x_{r'})}\right) & \text{for all }(C, \gamma_1,\dots,\gamma_{r'}) \in V(\CFIA{\Gamma}{\StructB}[Y]),
		\end{align*}
		where $C$ is the constraint $C \colon (x_1, \dots, x_{r'}) \in \Delta \delta$.
		Since $f$ is a partial homomorphism, the map $f'$ indeed maps to vertices of $\CFIB{\Gamma}{\StructB}[Y]$ and moreover is a partial isomorphism: 
		If $f$ satisfies a constraint $C$ of $\StructB$, then $(f(x_1),...,f(x_{r'})) \in \Delta\delta$. So for all $(\gamma_1,...,\gamma_{r'}) \in \Delta\delta$, we have $(\gamma_1\inv{f(x_1)},...,\gamma_{r'}\inv{f(x_{r'})}) \in \Delta$. This is exactly the homogeneous version of the constraint, which occurs in $\CFIB{\Gamma}{\StructB}$. Thus, $f' \in \isos{\CFIA{\Gamma}{\StructB}[Y]}{\CFIB{\Gamma}{\StructB}[Y]}$.
		Hence, $f'$ induces a partial homomorphism $\hat{f} \in \Hom{\StructL[Y]}{\SymStruct{d}{2}}$.
		For all $g \in \Hom{\StructL[Y]}{\SymStruct{d}{2}}$, define
		\begin{align*}
			\Psi(x_{Y,g}) &:= \begin{cases}
				\Phi(x_{\hat{Y},f}) & \text{if } g = \hat{f} \text{ for some } f\in \Hom{\StructB[\hat{Y}]}{\CosetGrpTmplt{\Gamma}{r}},\\
				0 & \text{otherwise.}
			\end{cases}
		\end{align*}
		We say that the \defining{partial homomorphism $g$
			corresponds to $f$} in the equation above.
		Likewise, the \defining{variable $x_{Y,g}$ of $\cspiso{k}{\SymStruct{d}{2}}{\bcisosys{\CFIA{\Gamma}{\StructB}}{\CFIB{\Gamma}{\StructB}}}$ 
			corresponds} to the variable $x_{\hat{Y},f}$ of $\cspiso{kr}{\CosetGrpTmplt{\Gamma}{r}}{\StructB}$.
		
		We show that $\Psi$ is a solution to 
		$\cspiso{k}{\SymStruct{d}{2}}{\StructL}$,
		which then is obviously a $p$-solution.
		We first consider the equations of Type~\ref{eqn:csp-iso-agree}.
		Let $Y \in \tbinom{\colors}{\leq k}$,
		$c \in Y$, and $g' \in \Hom{\StructL[Y \setminus \set{c}]}{\SymStruct{d}{2}}$.
		First assume that there is a $g \in \Hom{\StructB[\widehat{Y\setminus \set{c}}]}{\CosetGrpTmplt{\Gamma}{r}}$ such that $g' = \hat{g}$.
		Then, exploiting Lemma~\ref{lem:csp-iso-subsets},
		\begin{align*}
			\sum_{\substack{f' \in \Hom {\StructL[Y]}{\SymStruct{d}{2}},\\ \restrict{f'}{Y\setminus\set{c}}= \hat{g}}} \Psi(x_{Y,f'})  &=
			\sum_{\substack{f \in \Hom{\StructB[\hat{Y}]}{\CosetGrpTmplt{\Gamma}{r}},\\ \restrict{f}{\widehat{Y \setminus \set{c}}} = g}} \Phi(x_{\hat{Y},f})
			= \Phi(x_{\hat{Y},g}) = \Psi(x_{Y, \hat{g}}).
		\end{align*}
		Second assume that there is no $g \in \Hom{\StructB[\widehat{Y\setminus \set{c}]}}{\CosetGrpTmplt{\Gamma}{r}}$ such that $g' = \hat{g}$.
		Then for every partial homomorphism $ f'\in \Hom {\StructL[Y]}{\SymStruct{d}{2}}$ (via the identification of permutation on each color class with the $S_d$\nobreakdash-variables) such that $\restrict{f'}{Y\setminus\set{c}} = g'$,
		there is also no $f \in \Hom{\StructB[\hat{Y}]}{\CosetGrpTmplt{\Gamma}{r}}$ such that $f' = \hat{f}$.
		Hence both sides of Equation~\ref{eqn:csp-iso-agree} are $0$.
		
		Finally, consider Equation~\ref{eqn:csp-iso-empty}:
		we have $\Psi(x_{\emptyset,\emptyset}) = \Psi(_{\emptyset,\hat{\emptyset}}) = \Phi(x_{\emptyset,\emptyset}) = 1$
		because the empty homomorphism $\emptyset \colon \StructB[\emptyset] \to \CosetGrpTmplt{\Gamma}{r}$
		induces the empty homomorphism $\hat{\emptyset} \colon \StructL[\emptyset] \to \SymStruct{d}{2}$.
	\end{proof}
	
	\noindent In the setting of the previous proof,
	we show that if a partial homomorphism of $\StructB$
	is not discarded by the $k$-consistency algorithm,
	then the corresponding one of $\StructL$ is not discarded, either.
	
	\begin{lemma}
		\label{lem:cfi-k-consistency}
		Let $k \in \nat$, let $\Gamma$ be a finite group and $\StructB$ an $r$-ary $\Gamma$-coset-CSP,
		and let $\StructL = \bcisosys{\CFIA{\Gamma}{\StructB}}{\CFIB{\Gamma}{\StructB}}$.
		Let $\colors$ be the set of colors of $\CFIA{\Gamma}{\StructB}$.
		For all $Z \in \tbinom{B}{rk}$, $Z' \in \tbinom{\colors}{k}$,
		$f \in \Hom{\StructB[Z]}{\CosetGrpTmplt{\Gamma}{r}}$,
		and $g \in \Hom{\StructL[Z']}{\CosetGrpTmplt{\Gamma}{r}}$
		such that $f$ corresponds to $g$
		(in the sense of the proof of Lemma~\ref{lem:cfi-p-solution}),
		if $f \in \kcol{rk}{\CosetGrpTmplt{\Gamma}{r}}{\StructB}(Z)$
		then $g \in \kcol{k}{\SymStruct{d}{2}}{\StructL}(Z')$.	
	\end{lemma}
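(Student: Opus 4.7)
The plan is to exploit the greatest-fixed-point characterization of $k$-consistency: I will exhibit a family $H$ of partial homomorphisms of $\StructL$ that contains $g$ and satisfies both the Forth-Condition and the Down-Closure, so that $H$ must be contained in $\kcol{k}{\SymStruct{d}{2}}{\StructL}$. Concretely, for every $Y \in \tbinom{\colors}{\leq k}$, define
\[
H(Y) := \setcond*{ g' \in \Hom{\StructL[Y]}{\SymStruct{d}{2}} }{ g' \text{ corresponds to some } f' \in \kcol{rk}{\CosetGrpTmplt{\Gamma}{r}}{\StructB}(\hat{Y}) },
\]
where $\hat{Y}$ is the set of $\StructB$-variables associated with the colors in $Y$, as in the proof of Lemma~\ref{lem:cfi-p-solution}. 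Since $|\hat{Y}| \leq r|Y| \leq rk$, the set $\hat{Y}$ lies in $\tbinom{B}{\leq rk}$, so the underlying $k$-consistency at width $rk$ is defined on it. The given $g$ with $X = \hat{Y}$ lies in $H(Y)$ by hypothesis.

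For Down-Closure, let $g' \in H(X)$ correspond to $f' \in \kcol{rk}{\CosetGrpTmplt{\Gamma}{r}}{\StructB}(\hat{X})$ and let $Y \subseteq X$. The key observation is that the permutation $g'$ assigns to a color $c \in Y$ is determined solely by $f'$'s values on the variables locally associated with $c$: namely $x$ if $c = c_x$, or $x_1,\dots,x_{r'}$ if $c = c_C$ for a constraint $C\colon(x_1,\dots,x_{r'}) \in \Delta\delta$. By definition of $\hat{\cdot}$, all these variables lie in $\hat{Y}$ whenever $c \in Y$. Thus $\restrict{g'}{Y}$ corresponds to $\restrict{f'}{\hat{Y}}$, and by Down-Closure of $\kcol{rk}{\CosetGrpTmplt{\Gamma}{r}}{\StructB}$ applied to $\hat{Y} \subseteq \hat{X}$, we have $\restrict{f'}{\hat{Y}} \in \kcol{rk}{\CosetGrpTmplt{\Gamma}{r}}{\StructB}(\hat{Y})$, whence $\restrict{g'}{Y} \in H(Y)$.

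For the Forth-Condition, let $g' \in H(Y)$ correspond to $f' \in \kcol{rk}{\CosetGrpTmplt{\Gamma}{r}}{\StructB}(\hat{Y})$ and fix $Y \subset X \in \tbinom{\colors}{\leq k}$. Then $\hat{Y} \subseteq \hat{X}$ and $|\hat{X}| \leq rk$, so the Forth-Condition of $\kcol{rk}{\CosetGrpTmplt{\Gamma}{r}}{\StructB}$ yields an extension $f'' \in \kcol{rk}{\CosetGrpTmplt{\Gamma}{r}}{\StructB}(\hat{X})$ of $f'$. Let $g'' := \widehat{f''}$ be the corresponding partial homomorphism in $\Hom{\StructL[X]}{\SymStruct{d}{2}}$; by construction $g'' \in H(X)$. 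By the same locality observation used for Down-Closure, the permutations assigned by $g''$ to colors in $Y$ depend only on $f''$'s restriction to $\hat{Y}$, which agrees with $f'$; therefore $\restrict{g''}{Y} = g'$.

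The main (minor) technical hurdle is the bookkeeping for the correspondence $f \mapsto \hat{f}$ under restriction and extension; once the locality fact above is observed, both closure properties transfer immediately from the width-$rk$ $k$-consistency on $\StructB$ to the width-$k$ $k$-consistency on $\StructL$. Since $\kcol{k}{\SymStruct{d}{2}}{\StructL}$ is the greatest family with these two properties, we obtain $H(Y) \subseteq \kcol{k}{\SymStruct{d}{2}}{\StructL}(Y)$ for every $Y$, and in particular $g \in \kcol{k}{\SymStruct{d}{2}}{\StructL}(Y)$ as required.
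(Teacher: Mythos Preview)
Your proof is correct and follows essentially the same approach as the paper: both define the family of partial homomorphisms of $\StructL$ that correspond to $rk$-consistent partial homomorphisms of $\StructB$, and verify the Down-Closure and Forth-Condition by pulling them back from $\kcol{rk}{\CosetGrpTmplt{\Gamma}{r}}{\StructB}$ via the correspondence $f \mapsto \hat f$. Your version is in fact slightly more explicit about the locality of the correspondence (that $\restrict{\hat f}{Y} = \widehat{\restrict{f}{\hat Y}}$), which is exactly the key fact the paper uses.
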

	\begin{proof}
		Recall that if $f$ corresponds to $g$,
		we have $g = \hat{f}$ and $Z = \widehat{Z'}$.
		We consider the sets of partial homomorphisms
		$\setcond{\hat{f}}{f \in \kcol{rk}{\CosetGrpTmplt{\Gamma}{t}}{\StructB}(\hat{X})}$
		for every $X \in \tbinom{\colors}{\leq k}$
		and show that this collection satisfies the down-closure and forth-condition.
		
		Let $Y \subseteq X \in \tbinom{\colors}{\leq k}$.
		To show the down-closure,
		let $f \in \kcol{rk}{\CosetGrpTmplt{\Gamma}{t}}{\StructB}(\hat{X})$.
		Because $Y \subseteq X$, we have $\hat{Y} \subseteq \hat{X}$.
		From the down-closure of $\kcol{rk}{\CosetGrpTmplt{\Gamma}{r}}{\StructB}$ it
		follows that $\restrict{f}{\hat{Y}} \in \kcol{rk}{\CosetGrpTmplt{\Gamma}{t}}{\StructB}(\hat{Y})$.
		By the construction of the corresponding homomorphisms,
		we have $\restrict{\hat{f}}{Y} = \widehat{\restrict{f}{\hat{Y}}}$.
		
		The forth-condition is similarly inherited from $\kcol{rk}{\CosetGrpTmplt{\Gamma}{r}}{\StructB}$:
		Let $g \in \kcol{rk}{\CosetGrpTmplt{\Gamma}{t}}{\StructB}(\hat{Y})$.
		Then $g$ extends to some $f \in \kcol{rk}{\CosetGrpTmplt{\Gamma}{t}}{\StructB}(\hat{X})$
		by the forth-condition of $ \kcol{rk}{\CosetGrpTmplt{\Gamma}{r}}{\StructB}$.
		Again by the construction of corresponding homomorphisms,
		we have that $\hat{f}$ extends $\hat{g}$.
	\end{proof}
	
	The previous lemmas establish the link between coset-CSPs and their isomorphism formulation. The next step is to deal with the isomorphism or-construction.
	We extend the notion of entry-$\ell$ vertices from the encoding of sequences
	to the isomorphism or-construction:
	For $\ell \in [j]$, we call a vertex of $\StructC^0$ or $\StructC^1$
	an entry-$\ell$ vertex
	if it is a an entry-$\ell$ vertex of some component $\langle \StructB_1^{a_1}, \dots, \StructB_j^{a_j} \rangle$.
	
	For the following, fix an $i \in [j]$. We now describe how partial isomorphisms between~$\StructB_i^0$ and~$\StructB_i^1$ can be extended to partial isomorphisms of~$\StructC_0$ and~$\StructC_1$.
	We fix a bijection $b$ between the components of~$\StructC^0$ and~$\StructC^1$, that is, between the structures $\langle \StructB_1^{a_1}, \dots, \StructB_j^{a_j} \rangle$ with even and odd sum of the $a_\ell$,
	such that identified components only differ in entry $i$:
	\[b(\langle \StructB_1^{a_1}, \dots, \StructB_i^{a_i}, \dots, \StructB_j^{a_j} \rangle ) = \langle \StructB_1^{a_1}, \dots,\StructB_i^{1-a_i}, \dots, \StructB_j^{a_j} \rangle.\]
	Using the identity map on $\StructB_\ell^0$ and $\StructB_\ell^1$ for all $\ell \neq i$,
	the bijection~$b$ induces a bijection $\hat{b}$ between
	the vertices of these components apart from the entry-$i$ vertices.
	
	Let~$X$ be a set of colors of~$\StructC^0$ and~$\StructC^1$
	and denote by $\restrict{X}{i}$ the set of colors of~$\StructB_i^0$ and~$\StructB_i^1$ that occur (after the possible renaming to encode sequences)
	in~$X$.
	We define the function $\iota_i^X\colon \isos{\StructB_i^0[\restrict{X}{i}]}{\StructB_i^1[\restrict{X}{i}]} \to
	\isos{\StructC^0[X]}{\StructC^1[X]}$ for every set of colors $X$. It essentially defines an extension for each partial isomorphism with domain $X|_i$ to the whole color class $X$.
	For a partial isomorphism $f \in \isos{\StructB_i^0[\restrict{X}{i}]}{\StructB_i^1[\restrict{X}{i}]}$,
	the function $\iota_i^X(f)$ is defined as follows:
	\begin{itemize}
		\item Let $v$ be an entry-$i$ vertex of a component $D=\langle \StructB_1^{a_1}, \dots, \StructB_i^{a_j}\rangle$.
		If $a_i = 0$, then $\iota_i^X(f)$ maps $v$ to an entry-$i$ vertex of $b(D)$ according to $f$
		(when seeing $v$ as a vertex of $\StructB_i^0$).
		If $a_i = 1$, then we proceed as in the previous case using $\inv{f}$ instead of $f$.
		\item Otherwise, $\iota_i^X(f)$ maps $v$ to $b(v)$.
	\end{itemize}
	Intuitively, $\iota_i^X(f)$ maps all components in $\StructC^0[X]$ to the corresponding ones in $\StructC^1[X]$ according to $b$
	and uses $f$ or $\inv{f}$, respectively, for the $i$-th entry.
	
	\begin{lemma}
		\label{lem:or-construction-p-solution}
		Fix $k \in \nat$, let $\StructB_1^0, \dots, \StructB_j^0$ and $\StructB_1^1,\dots, \StructB_j^1$
		be two sequences of colored structures of arity at most $r$ and color class size at most $d$,
		and let $(\StructC^0, \StructC^1) = \ORISO{i\in [j]} {\StructB_i^0,\StructB_i^1}$.
		Assume~$\colors$ is the set of colors of~$\StructC^0$ and~$\StructC^1$,
		$\StructL = \bcisosys{\StructC^0}{\StructC^1}$, and
		$\StructL_i = \bcisosys{\StructB_i^0}{\StructB_i^1}$ for all $i \in [j]$.
		If, for some $i \in [j]$, 
		the equation system $\cspiso{k}{\SymStruct{d}{r}}{\StructL_i}$
		has a $p$-solution $\Phi$,
		then the equation system 
		$\cspiso{k}{\SymStruct{d}{r}}{\StructL}$
		has the  $p$-solution $\Psi$
		defined, for all $X \in \tbinom{\colors}{\leq k}$ and $g \in \Hom{\StructL[X]}{\SymStruct{d}{r}}$,
		via
		\[
		\Psi(x_{X,g}) :=  \begin{cases}
			\Phi(x_{\restrict{X}{i},f}) & \text{if } \iota_i^X(f(X|_i)) = g(X) \text{ for some } f \in \Hom{\StructL_i[\restrict{X}{i}]}{\SymStruct{d}{r}},\\
			0 & \text{otherwise.}
		\end{cases}
		\]
		We say that the \defining{partial homomorphism $g$
			corresponds} to $f$ in the equation above
		and that the \defining{variable $x_{X,g}$ of $\cspiso{k}{\SymStruct{d}{r}}{\StructL}$
			corresponds} to the variable $x_{\restrict{X}{i},f}$
		of $\cspiso{k}{\SymStruct{d}{r}}{\StructL_i}$.
	\end{lemma}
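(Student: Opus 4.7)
The plan is to verify directly that $\Psi$ satisfies both equation types of $\cspiso{k}{\SymStruct{d}{r}}{\StructL}$. The structural observation to establish first is that $\iota_i^X$ is injective and commutes with restriction in the appropriate way: for every $Y \subseteq X \in \tbinom{\colors}{\leq k}$ and $f \in \Hom{\StructL_i[\restrict{X}{i}]}{\SymStruct{d}{r}}$, one has $\restrict{\iota_i^X(f)}{Y} = \iota_i^Y(\restrict{f}{\restrict{Y}{i}})$. This follows from the definition of $\iota_i^X$, since on vertices outside entry $i$ it always acts by the fixed bijection $\hat{b}$, while on entry-$i$ vertices it acts by $f$ (or its inverse, depending on which side of the bijection $b$ the ambient component lies). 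Thus the support of $\Psi$ is parameterized consistently by partial homomorphisms in $\StructL_i$.

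For Equation~\ref{eqn:csp-iso-empty}, the empty partial isomorphism is its own image under $\iota_i^\emptyset$, so $\Psi(x_{\emptyset,\emptyset}) = \Phi(x_{\emptyset,\emptyset}) = 1$. For an equation of Type~\ref{eqn:csp-iso-agree}, fix $X \in \tbinom{\colors}{\leq k}$, $c \in X$, and $g \in \Hom{\StructL[X \setminus \{c\}]}{\SymStruct{d}{r}}$. The case analysis is on whether $c$ is a color of entry-$i$ vertices. In both subcases, if $g$ is not of the form $\iota_i^{X \setminus \{c\}}(f)$ for any $f$, then by the commutation-with-restriction property no extension $g'$ of $g$ can lie in the image of $\iota_i^X$ either, so both sides of the equation are zero.

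Otherwise, write $g = \iota_i^{X \setminus \{c\}}(f)$, uniquely determining $f$ by injectivity. If $c$ is \emph{not} an entry-$i$ color, then $\restrict{X}{i} = \restrict{X \setminus \{c\}}{i}$, and the value of an extension $g' = \iota_i^X(f')$ at color $c$ is forced by $\hat{b}$, so the unique extension in the image is $\iota_i^X(f)$ itself, giving LHS $= \Phi(x_{\restrict{X}{i}, f}) = \Psi(x_{X \setminus \{c\}, g})$. If $c$ \emph{is} an entry-$i$ color, then $\restrict{X}{i} = \restrict{X \setminus \{c\}}{i} \cup \{c\}$, and extensions $g'$ in the image of $\iota_i^X$ correspond bijectively to extensions $f'$ of $f$ to color $c$, so
\begin{align*}
\sum_{\substack{g' \in \Hom{\StructL[X]}{\SymStruct{d}{r}},\\ \restrict{g'}{X \setminus \{c\}} = g}} \Psi(x_{X,g'}) &= \sum_{\substack{f' \in \Hom{\StructL_i[\restrict{X}{i}]}{\SymStruct{d}{r}},\\ \restrict{f'}{\restrict{X}{i} \setminus \{c\}} = f}} \Phi(x_{\restrict{X}{i}, f'}) = \Phi(x_{\restrict{X}{i} \setminus \{c\}, f}),
\end{align*}
where the last equality uses that $\Phi$ satisfies Equation~\ref{eqn:csp-iso-agree} for $\StructL_i$.

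The main obstacle is the bookkeeping around $\iota_i^X$: verifying injectivity, the restriction identity, and especially that in the non-entry-$i$ case the value at color $c$ is truly pinned down by $\hat{b}$ so that the sum collapses to a single term. These depend on the structural fact that $\StructC^0$ and $\StructC^1$ are disjoint unions of the $\langle \StructB_1^{a_1}, \dots, \StructB_j^{a_j} \rangle$, that colors are shifted uniformly within each entry, and that $b$ identifies components differing only in entry $i$. Once these properties are cleanly established, preservation of $p$-solution status is immediate, since each nonzero $\Psi$-value equals some nonzero $\Phi$-value on the nose.
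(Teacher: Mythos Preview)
Your proposal is correct and follows essentially the same approach as the paper's proof: both verify Equations~\ref{eqn:csp-iso-agree} and~\ref{eqn:csp-iso-empty} directly by reducing sums over extensions in $\StructL$ to sums over extensions in $\StructL_i$ via the correspondence $\iota_i^X$. Your version is in fact more explicit than the paper's, which handles the two cases (whether $c$ is an entry-$i$ color) in a single compressed calculation without isolating the injectivity and restriction-compatibility of $\iota_i^X$ that you state upfront; separating these cases as you do makes the argument cleaner.
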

	\begin{proof}
		First consider the equations of Type~\ref{eqn:csp-iso-agree}:
		Recall that $\StructL$
		has a variable for every color class.
		Let $X \in \tbinom{\colors}{\leq k}$ be a set of at most $k$ colors,
		let $c \in X$ be a color, and 
		$g \in \Hom{\StructL[X \setminus \set{c}]}{\SymStruct{d}{r}}$.
		First assume that there is an $f \in \Hom{\StructL_i[\restrict{X}{i}]}{\SymStruct{d}{r}}$ such that  $\iota_i^X(f) = g$. Then
		\begin{align*}
			\sum_{\substack{h\in \Hom{\StructL[X]}{\SymStruct{d}{r}},\\ \restrict{h}{X\setminus \set{c}}= g}} \Psi(x_{X,h})
			&= \sum_{\substack{h\in \Hom{\StructL_i[\restrict{X}{i}]}{\SymStruct{d}{r}},\\ \restrict{\iota_i^X(h)}{X \setminus \set{c}} = g}} \Phi(x_{\restrict{X}{i},h})\\
			&= \sum_{\substack{h\in \Hom{\StructL_i[\restrict{X}{i}]}{\SymStruct{d}{r}},\\ \restrict{h}{X\setminus \set{c}}= f}} \Phi(x_{\restrict{X}{i},h})\\
			&=\Phi(x_{\restrict{X}{i}, f}) = \Psi(x_{X,g}).
		\end{align*}
		Assume otherwise that there is no such $f$.
		Then $\Psi(x_{X, g}) = 0$.
		But in this case, every partial homomorphism $h\in \Hom{\StructL[X]}{\SymStruct{d}{r}}$
		is not in the image of $\iota_i$, which means that
		both sides of Equation~\ref{eqn:csp-iso-agree} are zero.
		It remains to check Equation~\ref{eqn:csp-iso-empty}.
		Since the empty homomorphism is the image of the empty homomorphism under $\iota^i_k$,
		Equation~\ref{eqn:csp-iso-empty} for $\cspiso{k}{\SymStruct{d}{r}}{\StructL}$
		follows from Equation~\ref{eqn:csp-iso-empty}
		for $\cspiso{k}{\SymStruct{d}{r}}{\StructL_i}$.
		It is clear that the solution is a $p$-solution.
	\end{proof}
	
	\noindent The previous lemma shows that $p$-solutions for one entry translate to a $p$-solution of the whole or-construction.
	We now show a similar statement for the $k$-consistency algorithm.
	\begin{lemma}
		\label{lem:or-construction-k-consistency}
		Fix $k \in \nat$, let $\StructB_1^0, \dots, \StructB_j^0$ and $\StructB_1^1,\dots, \StructB_j^1$
		be two sequences of colored structures of arity at most $r$ and color class size at most $d$,
		and let $(\StructC^0, \StructC^1) = \ORISO{i\in [j]} {\StructB_i^0,\StructB_i^1}$.
		Let $\colors$ be the set of colors of $\StructC^0$ and $\StructC^1$,
		and let $\StructL = \bcisosys{\StructC^0}{\StructC^1}$.
		For every $i \in [j]$,
		let $\colors_i$ be the set of colors of $\StructB^0_i$ and $\StructB^1_i$,
		and let $\StructL_i = \bcisosys{\StructB_i^0}{\StructB_i^1}$.
		Let $i \in [j]$,  $X \in \tbinom{\colors}{\leq k}$, and $f \in \Hom{\StructL_i[\restrict{X}{i}]}{\SymStruct{d}{r}}$.
		If $f \in \kcol{k}{\SymStruct{d}{r}}{\StructL_i}(\restrict{X}{i})$,
		then for every  $g \in \Hom{\StructL[X]}{\SymStruct{d}{r}}$
		that corresponds to $f$
		we have that $g \in \kcol{k}{\SymStruct{d}{r}}{\StructL}(X)$.
	\end{lemma}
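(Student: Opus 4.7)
The plan is to mirror the structure of Lemma~\ref{lem:cfi-k-consistency}: exhibit a family of partial homomorphisms of $\StructL$ that (a) contains every $g$ corresponding to a partial homomorphism $f$ surviving $k$-consistency on $\StructL_i$, and (b) is closed under the forth-condition and down-closure, hence is contained in the greatest fixed point $\kcol{k}{\SymStruct{d}{r}}{\StructL}$. Concretely, for each $Y \in \tbinom{\colors}{\leq k}$ define
\[
	\mathcal{F}(Y) := \setcond{g \in \Hom{\StructL[Y]}{\SymStruct{d}{r}}}{g \text{ corresponds to some } f \in \kcol{k}{\SymStruct{d}{r}}{\StructL_i}(\restrict{Y}{i})}.
\]
We need to show that every $g \in \mathcal{F}(Y)$ lies in $\kcol{k}{\SymStruct{d}{r}}{\StructL}(Y)$, which follows once $\mathcal{F}$ is shown to satisfy down-closure and the forth-condition.

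For \emph{down-closure}, let $g \in \mathcal{F}(Y)$ correspond to $f \in \kcol{k}{\SymStruct{d}{r}}{\StructL_i}(\restrict{Y}{i})$, and let $Y' \subseteq Y$. By construction of $\iota_i^Y$, on the entry-$i$ vertices $g$ acts like $f$ (or $\inv{f}$ depending on the component) and on all other vertices like the fixed canonical bijection $\hat{b}$. Restricting $g$ to $Y'$ therefore yields $\iota_i^{Y'}(\restrict{f}{\restrict{Y'}{i}})$, and by down-closure of $\kcol{k}{\SymStruct{d}{r}}{\StructL_i}$ we have $\restrict{f}{\restrict{Y'}{i}} \in \kcol{k}{\SymStruct{d}{r}}{\StructL_i}(\restrict{Y'}{i})$, so $\restrict{g}{Y'} \in \mathcal{F}(Y')$.

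For the \emph{forth-condition}, let $g \in \mathcal{F}(Y)$ correspond to $f$ as above, and let $c \in \colors \setminus Y$. Two cases arise. If $c$ is not a color of an entry-$i$ vertex, then $\restrict{(Y \cup \{c\})}{i} = \restrict{Y}{i}$, and we extend $g$ by mapping $c$ according to the canonical bijection $\hat{b}$; the resulting $g'$ still corresponds to $f$ and hence lies in $\mathcal{F}(Y \cup \{c\})$. If $c$ is the color of an entry-$i$ vertex, then $\restrict{(Y \cup \{c\})}{i} = \restrict{Y}{i} \cup \{c\}$, and by the forth-condition for $\kcol{k}{\SymStruct{d}{r}}{\StructL_i}$ we can extend $f$ to some $f' \in \kcol{k}{\SymStruct{d}{r}}{\StructL_i}(\restrict{Y}{i} \cup \{c\})$. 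Then $g' := \iota_i^{Y \cup \{c\}}(f')$ extends $g$ (because $\iota_i$ is defined consistently across subsets) and lies in $\mathcal{F}(Y \cup \{c\})$.

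The main technical obstacle is the bookkeeping: verifying that $\iota_i$ commutes with restriction and extension of partial isomorphisms across the many components of $\StructC^0$ and $\StructC^1$, and that the canonical bijection $\hat{b}$ is indeed a color-preserving partial isomorphism on the non-entry-$i$ part (so that extending $g$ using $\hat{b}$ really produces a partial homomorphism into $\SymStruct{d}{r}$). Once this is verified for the definition of $\iota_i^Y$, the forth and down-closure arguments follow mechanically, and the lemma is established.
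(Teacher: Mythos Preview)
Your proposal is correct and follows essentially the same approach as the paper: both define the family $\mathcal{F}(Y) = \iota_i^Y\bigl(\kcol{k}{\SymStruct{d}{r}}{\StructL_i}(\restrict{Y}{i})\bigr)$ and verify down-closure and the forth-condition, inheriting each from $\kcol{k}{\SymStruct{d}{r}}{\StructL_i}$ via the compatibility of $\iota_i$ with restriction and extension. Your explicit case split in the forth-condition (entry-$i$ color versus non-entry-$i$ color) is slightly more careful than the paper's unified treatment, but the underlying argument is the same.
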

	\begin{proof}
		We show that the collection of partial homomorphisms
		$\iota_i^X(\kcol{k}{\SymStruct{d}{r}}{\StructL_i}(\restrict{X}{i}))$ for all $X \in \tbinom{\colors}{\leq k}$
		satisfies the down-closure and forth-condition property.
		Then they have to be included in the greatest fixed-point computed
		by the $k$-consistency algorithm.
		
		The down-closure is inherited from $\kcol{k}{\SymStruct{d}{r}}{\StructL_i}$:
		Let  $Y \subset X \in \tbinom{\colors}{\leq k}$
		and $f \in \iota_i^X(\kcol{k}{\SymStruct{d}{r}}{\StructL_i}(\restrict{X}{i}))$.
		Then there is a $g \in \kcol{k}{\SymStruct{d}{r}}{\StructL_i}(\restrict{X}{i})$
		such that $\iota_i^X(g) = f$.
		By the down-closure of $\kcol{k}{\SymStruct{d}{r}}{\StructL_i}$,
		we have that $\restrict{g}{\restrict{Y}{i}} \in 
		\kcol{k}{\SymStruct{d}{r}}{\StructL_i}(\restrict{Y}{i})$.
		Hence $h \coloneqq \iota_i^Y(\restrict{g}{\restrict{Y}{i}}) \in
		\iota_i^Y(\kcol{k}{\SymStruct{d}{r}}{\StructL_i}(\restrict{Y}{i}))$.
		Because $\hat{b}$ is a bijection
		and by the definition of $\iota_i^X$,
		it follows that $\restrict{f}{Y} = h$.
		
		To show the forth-condition,
		let  $Y \subset X \in \tbinom{\colors}{\leq k}$
		and $f \in \iota_i^Y(\kcol{k}{\SymStruct{d}{r}}{\StructL_i}(\restrict{Y}{i}))$.
		Again, there is a $g \in\kcol{k}{\SymStruct{d}{r}}{\StructL_i}(\restrict{Y}{i})$
		such that $\iota_i^Y(g) = f$.
		If $\restrict{Y}{i} = \restrict{X}{i}$, 
		then we extend $f$ to $X$ using $b$ yielding $g$.
		By the definition of $\iota_i$, we have that
		$g \in \iota_i^X(\kcol{k}{\SymStruct{d}{r}}{\StructL_i}(\restrict{X}{i}))$.
		Otherwise, $\restrict{Y}{i} \subset \restrict{X}{i}$.
		By the forth-condition for $\kcol{k}{\SymStruct{d}{r}}{\StructL_i}$,
		there is an $h \in \kcol{k}{\SymStruct{d}{r}}{\StructL_i}(\restrict{X}{i})$
		that extends~$g$.
		But then $\iota_i^X(h) \in \iota_i^X(\kcol{k}{\SymStruct{d}{r}}{\StructL_i}(\restrict{X}{i}))$
		and $\iota_i^X(h)$ extends $f$
		by the definition of $\iota_i^X$.
	\end{proof}
	
	Finally, we are ready to prove Theorem~\ref{thm:counterexampleSymmetricGroup}.
	\begin{proof}[Proof of Theorem~\ref{thm:counterexampleSymmetricGroup}]
		Fix a $k \in \nat$ and $d\geq 18$.
		We construct unsatisfiable $\SymStruct{d}{2}$-instances as described in
		Section~\ref{sec:group-csp-counterexamle-instances}.
		Set $p_1 = 2$ and $p_1=3$.
		Robustly consistent partial homomorphism
		of the Tseitin system $\StructB_i:= \grpCSP{H}{\ZZ_{p_i}}{\lambda_i}$
		are not ruled out by $k$-consistency by Lemma~\ref{lem:robustlyConsistentSurviveKconsistency}.
		Lemma~\ref{lem:cfi-k-consistency}
		shows that the corresponding partial homomorphisms of
		$\StructL_i := \bcisosys{\CFIA{\ZZ_{p_i}}{\StructB_i}}{\CFIB{\ZZ_{p_i}}{\StructB_i}}$
		for both $i\in [2]$
		also survive $k$\nobreakdash-consistency.
		Let  $(\StructC^0, \StructC^1) =
		\ORISO{i \in [2]}{\CFIA{\bbZ_{p_i}}{\StructB_i},\CFIB{\bbZ_{p_i}}{\StructB_i}}$.
		Lemma~\ref{lem:or-construction-k-consistency}
		shows that also the corresponding partial homomorphisms of $\bcisosys{\StructC^0}{\StructC^1}$
		are not ruled out by the $k$-consistency algorithm.
		
		Now consider solutions of the width-$k$ affine relaxation.
		By Lemma~\ref{lem:group-csp-p-solution},
		there is a $p_i$-solution for $\cspiso{k}{\CosetGrpTmplt{\ZZ_{p_i}}{3}}{\StructB_i}$
		for both $i \in[2]$
		in which only variables for robustly consistent partial homomorphisms
		are non-zero.
		Lemma~\ref{lem:cfi-p-solution} shows that such $p_i$-solutions
		also exist for $\cspiso{k}{\SymStruct{q_i}{2}}{\StructL_i}$ , where $q_i = p_i^2$,
		for both $i\in [2]$.
		These solutions are non-zero only for variables
		of partial homomorphisms that correspond to robustly consistent ones of the Tseitin systems.
		The domain size $p_i^2$ comes from Lemma~\ref{lem:cfi-basics}
		and the fact that the coset-constraints of the ternary Tseitin systems use subgroups of order $p_i^2$.
		Lemma~\ref{lem:or-construction-p-solution}
		provides $p_i$-solutions for 
		$\cspiso{k}{\SymStruct{18}{2}}{\bcisosys{\StructC^0}{\StructC^1}}$ for both $i\in[2]$,
		for which again only variables are set to a non-zero
		value for partial homomorphisms corresponding to robustly consistent ones.
		Here the domain size is $2\max\set{p_1^2,p_2^2}=18$
		and comes from Lemma~\ref{lem:or-construction-basics}.
		
		Now for the $\ZZ$-affine $k$-consistency relaxation,
		the proof proceeds exactly as the one of Theorem~\ref{thm:z-affine-does-not-solve-bounded-color-class}.
		For BA$^k$,
		we proceed as in the proof of Theorem~\ref{thm:BLP-does-not-solve-bounded-color-class},
		where we again note that the $p_i$-solutions
		exactly set the variables for partial homomorphisms corresponding to robustly consistent ones of the Tseitin system to a non-zero value.
		And finally for CLAP,
		we proceed as in the proof of Theorem~\ref{thm:clap-does-not-solve-all}.
		Here we again use Lemma~\ref{cor:group-csp-p-solution-with-fixed-assignment}
		to show that we can set the variable of
		a single robustly consistent solution in 
		$\cspiso{k}{\CosetGrpTmplt{\ZZ_{p_i}}{3}}{\StructB_i}$ to $1$,
		which then travels through Lemmas~\ref{lem:cfi-p-solution}
		and~\ref{lem:or-construction-p-solution}
		to the corresponding variable of 
		$\cspiso{k}{\SymStruct{18}{2}}{\bcisosys{\StructC^0}{\StructC^1}}$.
	\end{proof}

	\section{Conclusion}
	Regarding the question for a universal polynomial-time CSP algorithm, we conclude that most of the affine algorithms from recent years are not powerful enough to accomplish this, not even on Maltsev templates. 
	
	The remaining candidates are essentially all affine algorithms that set local solutions to $1$ when solving the affine relaxation. The one we have focused on is \emph{cohomological $k$\nobreakdash-consistency} but there are others with this feature, for example \emph{C(BLP+AIP)}, a variation of CLAP mentioned in \cite{CiardoZivny2023CLAP} and defined more explicitly in \cite{ChanNg}. We expect that it solves our counterexample, too. In \cite{ChanNg}, it is shown that C(BLP+AIP) fails on certain intractable templates, though again, a tractable counterexample is not known. In \cite{zhuk2025singletonalgorithms}, the technique of fixing local solutions in the affine relaxation is called \emph{Singleton-AIP}, and singleton variants of the other algorithms, such as BLP+AIP, are considered, too.
	In fact, already Singleton-AIP \cite{zhuk2025singletonalgorithms}, which is subsumed by cohomological $k$-consistency, is a candidate for a universal algorithm that has not been disproved yet.
	Cohomological $k$-consistency can be seen as a hierarchy parameterized by $k$ that is built on top of Singleton-AIP, combined with $k$-consistency and iteration. Such hierarchies of singleton algorithms are not considered in \cite{zhuk2025singletonalgorithms}, and gaining a deeper understanding of these seems like the most important next step to advance this research direction. Possibly, the minion-theoretic methods that are beginning to emerge in \cite{zhuk2025singletonalgorithms} will be vital for this. 
	
	Another question that we have not addressed in this article concerns the relationship between the different algorithms. It is obvious from the definitions that cohomological $k$-consistency subsumes $\bbZ$-affine $k$-consistency, and that CLAP subsumes BLP+AIP. How the $k$\nobreakdash-consistency based methods compare to the BLP-based ones remains unanswered; it may be that they are incomparable. In particular, we would like to know if cohomological $k$-consistency strictly subsumes all the other algorithms. In light of our results, this seems likely, but since the cohomological algorithm does not use the BLP, it is not obvious how it compares to, say, BA$^{k}$.

	\bibliographystyle{plainurl}
	\bibliography{journalVersion.bib}

\end{document}